\theoremstyle{plain}
\newtheorem{theorem}{Theorem}[section]
\newtheorem{proposition}[theorem]{Proposition}
\newtheorem{lemma}[theorem]{Lemma}
\newtheorem{corollary}[theorem]{Corollary}
\theoremstyle{definition}
\newtheorem{remark}[theorem]{Remark}
\theoremstyle{remark}
\newcommand{\eps}{\varepsilon}
\newcommand{\R}{\mathbb{R}}
\newcommand{\cF}{\mathcal{F}}
\newcommand{\cL}{\mathcal{L}}
\newcommand{\sL}{\mathscr{L}}
\DeclareMathOperator\tr{Tr}
\DeclareMathOperator{\Lip}{Lip}
\numberwithin{equation}{section}
\begin{document}

\title{\vspace{-3em}
Asset Pricing with Heterogeneous Beliefs\\and Illiquidity\footnote{The authors are grateful to the Associate Editor and two anonymous referees for their helpful comments.}
\date{\today}
\author{
  Johannes Muhle-Karbe%
  \thanks{Department of Mathematics, Imperial College London, j.muhle-karbe@imperial.ac.uk. Research supported by the CFM-Imperial Institute of Quantitative Finance.}
  \and
  Marcel Nutz%
  \thanks{
  Departments of Statistics and Mathematics, Columbia University, mnutz@columbia.edu. Research supported by an Alfred P.\ Sloan Fellowship and NSF Grants DMS-1512900 and DMS-1812661.
  }
  \and
  Xiaowei Tan%
  \thanks{
  Department of Mathematics, Columbia University, xt2161@columbia.edu.
  }  
 }
}
\maketitle \vspace{-1.2em}

\begin{abstract}
  This paper studies the equilibrium price of an asset that is traded in continuous time between $N$ agents who have heterogeneous beliefs about the state process underlying the asset's payoff. We propose a tractable model where agents maximize expected returns under quadratic costs on inventories and trading rates. The unique equilibrium price is characterized by a weakly coupled system of linear parabolic equations which shows that holding and liquidity costs play dual roles. We derive the leading-order asymptotics for small transaction and holding costs which give further insight into the equilibrium and the consequences of illiquidity.
\end{abstract}

\vspace{1em}

{\small
\noindent \emph{Keywords} Equilibrium; Liquidity; Heterogeneous Beliefs

\noindent \emph{AMS 2010 Subject Classification} 
91B51; 
91G10; 
91G80 
}
\vspace{1em}

\section{Introduction}\label{se:intro}

Heterogeneous beliefs about fundamental values are a key motive for trade in financial markets. Accordingly, a rich literature studies how prices form as the aggregate of subjective beliefs; see e.g.\ the survey \cite{ScheinkmanXiong.04} for numerous references. This synthesis happens by means of trading: agents with lower individual valuations sell to agents who are more optimistic about fundamentals. Hence, \emph{liquidity}---the ease with which trades can be implemented---plays an important role in determining how beliefs are reflected in prices.

In the present study, we propose a tractable model that allows us to study the interplay of heterogeneous beliefs and liquidity in determining asset prices. We consider $N$ (types of) agents who have different beliefs about the state process determining the payoff of a given asset. They trade the asset in continuous time to maximize their expected returns, penalized with quadratic costs on inventories and trading rates. We show that this model admits a unique Markovian equilibrium. The equilibrium price is characterized as the solution of a \emph{linear} system of parabolic equations with a weak coupling (i.e., the equations are coupled only through the zeroth-order terms). The solution, as well as the necessary estimates on its derivatives, are obtained by combining a fixed-point argument of~\cite{BechererSchweizer.05} for reaction--diffusion equations, classical Schauder theory for parabolic equations and a gradient estimate that seems to be novel.

This characterization allows us to study the influence of the two costs. The holding costs on inventories, parametrized by a coefficient $\gamma$, can be seen as a proxy for risk aversion, whereas the costs on the trading rate, with coefficient $\lambda>0$, stand in for the liquidity (or transaction) cost caused by market impact. The two costs determine how agents take into account current and future expected returns when choosing their portfolios: with bigger transaction costs, further weight is placed on future market conditions to avoid trades likely to be reversed later on. Conversely, larger holding costs make it less appealing to hold a given position, so that the current trading opportunities play a bigger role. Accordingly, liquidity and holding costs play inverse roles in our analysis. Specifically, when the asset is in zero net supply (a natural assumption for derivative contracts, say) the two costs only enter through their ratio $\gamma/\lambda$. For a positive supply, the asset price remains invariant if the inverse of the supply is rescaled in the same manner as transaction and holding costs, so that the larger trading and holding costs of bigger asset positions are offset by reducing both frictions.

Explicit asymptotic formulas obtain in the limiting regimes where either transaction costs or holding costs are small ($\gamma/\lambda \approx \infty$ or $\gamma/\lambda \approx 0$, respectively). For small transaction costs $\lambda \to 0$,  a singular perturbation expansion identifies the leading-order correction term relative to the frictionless equilibrium in which assets are priced by taking conditional expectations under a representative agent's probability measure that averages the agents' beliefs. 

The correction term turns out to be proportional to the square root $\sqrt{\lambda}$ of the transaction costs. The corresponding constant of proportionality is related to the average of the subjective drifts of the agents' frictionless portfolios. Thus, equilibrium prices increase relative to their frictionless counterparts if agents on average expect to increase their positions in the future, and vice versa. The interpretation is that in illiquid markets, agents take into account their future trading needs to reduce  transaction costs. Accordingly, expectations of future purchases already lead to increased positions earlier on and equilibrium prices increase according to the excess demand created by the aggregated adjustments of all agents, and vice versa.

The equilibrium for small holding costs $\gamma \to 0$ can be approximated by a regular perturbation expansion around the risk-neutral equilibrium price which averages all agents' subjective conditional expectations. Here, the leading-order correction term is determined by $\gamma$ times the average of the agents' expectations of their future positions. Other things equal, agents reduce the magnitude of their positions when holding costs are introduced, thereby reducing the demand of agents who expect to be long on average and increasing the demand of agents who expect to be short. The resulting sign of the price correction therefore depends on the aggregate expectations in the market. 

To illustrate the implications of these results and test the accuracy of the expansions, we consider an example where the state process determining the asset's payoff has Ornstein--Uhlenbeck dynamics, a simple model for a forward contract on a mean-reverting underlying such as an FX rate. Agents agree on the mean-reversion level and volatility, but disagree about the speed of mean-reversion. For these linear state dynamics, the parabolic PDE system describing the equilibrium price can be reduced to a system of linear ODEs by a suitable ansatz, and in turn compared to the explicit formulas that obtain for our small-cost asymptotics in this case. In this example, we find that the introduction of small transaction costs increases volatility, in line with the asymmetric information model of~\cite{danilova.juillard.19}, the risk-sharing model studied in~\cite{herdegen.al.19}, numerical results of~\cite{adam.al.15,buss.al.16} and empirical studies such as~\cite{hau.06,jones.seguin.97,umlauf.93}. By contrast, the introduction of small holding costs decreases the equilibrium volatility. The reason is the opposite manner in which the two costs influence how agents take into account future trading opportunities. Without transaction costs, agents who believe in faster than average mean-reversion perceive a mean-reverting price process and therefore sell when its value is high, whereas agents who believe in slower mean-reversion perceive a price process that exhibits ``momentum'' and therefore buy in this case. While this frictionless tradeoff only depends on the current dynamics of the asset, transaction costs force the agents to take into account future trading opportunities as well. That makes the current trading opportunities less attractive for the agent believing in faster mean-reversion and therefore creates an excess demand for the asset when its price is high. This in turn further increases high prices and conversely  decreases low ones, leading to additional volatility.

Holding costs have the opposite effect, by discounting the importance of future trading opportunities and therefore reducing volatility relative to the risk-neutral limiting price. In fact, the exact equilibrium volatility smoothly interpolates between the risk-neutral volatility (which is highest) and its counterpart without transaction costs (which is lowest).
For model parameters estimated from time series data for the USD/EUR exchange rate, we find that the exact equilibrium prices agree with these comparative statics gleaned from their asymptotic approximations. 


For models where trading is frictionless, there is an extensive literature on asset pricing under heterogeneous beliefs; see, e.g., \cite{BhamraUppal.14, DetempleMurthy.94,ScheinkmanXiong.04,cvitanic.al.11}
and the references therein. To obtain tractable results with limited liquidity, we focus on a model with quadratic holding and trading costs as well as linear preferences over gains and losses. 

Similar linear-quadratic liquidity models are used in partial equilibrium contexts by~\cite{almgren.12,annkirchner.kruse.15,bank.al.17,kohlmann.tang.02}.  Risk-sharing equilibria with homogeneous beliefs are studied in \cite{BouchardFukasawaHerdegenKarbe.18,garleanu.pedersen.16,herdegen.al.19,sannikov.skrzypacz.16}.  As the corresponding first-order conditions are linear, these models are considerably more tractable than equilibrium models with other preferences or trading costs, where analytical results are only available if prices or trading strategies are deterministic~\cite{lo.al.04,vayanos.98,vayanos.vila.99,weston.17} or agents only trade once~\cite{ScheinkmanXiong.03,davila.17}. Numerical analyses of equilibrium models with heterogeneous beliefs and transaction costs are carried out in~\cite{adam.al.15,buss.al.16}.

Considering a holding cost on risky positions as in \cite{cartea.jaimungal.16,choi.al.18,NutzScheinkman.17,sannikov.skrzypacz.16} further simplifies the analysis compared to models where the corresponding risk penalty is imposed on the variance of the risky positions as in~\cite{garleanu.pedersen.13,garleanu.pedersen.16,herdegen.al.19}. Indeed, in the present model, we can characterize the equilibrium price by a system of linear PDEs, avoiding the nonlinear equations that naturally appear in models where agents have risk aversion in the form of concave utility functions.
As the present work focuses on equilibrium asset prices with heterogeneous \emph{beliefs} about the underlying state process, we abstract from heterogeneous holding costs. These are considered in \cite{herdegen.al.19} for agents with homogeneous beliefs and in~\cite{BayraktarMunk.17,casgrain.jaimungal.18} for partial equilibrium models with heterogeneous beliefs.

This paper is organized as follows. Section~\ref{se:model} details the financial market and the definition of an equilibrium. In Section~\ref{se:singleAgentOpt} we derive the optimal portfolio of any agent given an exogenous asset price process. Section~\ref{se:equilibrium} provides the existence, uniqueness and PDE characterization of the equilibrium price. The leading-order asymptotics for small transaction and holding costs are presented in Sections~\ref{se:transactionAsymp} and~\ref{se:holdingAsymps}, respectively. The concluding Section~\ref{se:example} covers the example with mean-reverting state process.

\paragraph{Notation.}

As usual, $C=C(\R^{n})$ is the space of continuous functions $g(x)$ on $\R^{n}$ and $C^{k}$ is the space of functions $g\in C$ whose partial derivatives up to order $k$ exist and belong to $C$. Similarly, $C^{1,k}$ is the space of continuous functions $g(t,x)$ such that $g(t,\cdot), \partial_{t}g(t,\cdot)\in C^{k}$. For any of these spaces, a subscript ``$b$'' indicates that the functions \emph{and} all mentioned derivatives are bounded. The dimension~$n$ of the underlying domain is often understood from the context. Conditional expectations are denoted $E_t[\cdot]=E[\cdot|\cF_t]$ for brevity and when $F$ is a functional of the paths of a process~$Y$, we will often write $E_{t,y}[F(Y)]=E[F(Y)|Y_t=y]$. In this context, $Y$ will be the solution of an SDE and $E[F(Y)|Y_t=y]$ can be unambiguously defined as $E[F(Y^{t,y})]$ where $(Y^{t,y}_{s})_{s\geq t}$ is the unique solution of the corresponding SDE with initial condition~$y$ at time~$t$. 


\section{Model}\label{se:model}

\paragraph{Beliefs.}

Let $X$ be the coordinate-mapping process on the space $\Omega=C_{0}([0,T],\R^d)$ of continuous, $d$-dimensional paths $\omega$ with $\omega_{0}=0$, equipped with the canonical $\sigma$-field~$\cF$ and filtration $(\cF_{t})_{t \in [0,T]}$ generated by $X$. We consider $N$ (types of) agents with heterogeneous views on the distribution of the state process~$X$. Specifically, for each $1\leq i\leq N$, let $Q_{i}$ be a probability measure on $\Omega$ under which~$X$ satisfies
\begin{align}
\label{eq:Xdynamics}
dX_t=b_i(t,X_t) dt+\sigma_i(t,X_t) dW^i_t
\end{align}
where $W^i$ is a $d'$-dimensional Brownian motion. We assume that $b_i: [0,T]\times\R^d\to\R^d$ and $\sigma_i: [0,T]\times\R^d\to\R^{d\times d'}$ are jointly Lipschitz and bounded. This guarantees in particular that~\eqref{eq:Xdynamics} has a unique strong solution. Moreover, we assume that the matrix $\sigma^{2}_{i}:=\sigma_i\sigma_i^\top$ is uniformly parabolic: there is a constant $\kappa>0$ such that $\xi^{\top}\sigma^{2}_{i}\xi\geq \kappa |\xi|^{2}$ for all $\xi\in\R^{d}$. The associated generator is denoted by
\begin{equation}\label{eq:Li}
  \cL^{i}=\partial_{t} + b_i\partial_{x} + \frac{1}{2}\tr\sigma_i^2\partial_{xx}.
\end{equation}
We allow the measures $Q_{i}$ to differ in drift as well as volatility. The differences regarding the drifts are more important in practice, and our results remain relevant also without disagreement on volatility. We refer to \cite{EpsteinJi.2013} for a detailed discussion of volatility uncertainty.

\begin{remark}\label{rk:support}
  The above assumptions imply that the support of $Q_{i}$ is the whole space~$\Omega$; cf.\ \cite[Theorem~3.1]{StroockVaradhan.72}. Thus, if $F$ and $G$ are continuous functions on~$\Omega$, then $F(X)=G(X)$ $Q_{i}$-a.s.\ is equivalent to $F=G$. This fact will be used throughout the paper, often implicitly. Uniform parabolicity is convenient to simplify the exposition, but of course results similar to ours could be obtained under different assumptions. When the support of $X$ is not the whole space, the statements involving price functions and PDEs need to be restricted to a suitable domain (as, e.g., in \cite{MuhleKarbeNutz.16}).
\end{remark}

\paragraph{Market Model.}

Let $f\in C^{3}_{b}(\R^{d})$.
 We consider $N$ agents that dynamically trade an asset with a single payoff $f(X_{T})$ at the time horizon $T>0$. Fix a constant $a_{0}\geq0$, the \emph{exogenous supply} at time $t=0$, and the initial asset allocation $a_{i}\in\R$ to each agent, where $\sum_{i=1}^N a_{i}=a_{0}$. 
Let $\sL^{p}(Q_{i})$ denote the set of progressively measurable processes $\phi=(\phi_{t})_{0\leq t\leq T}$ (of appropriate dimension) such that $E^{i}[\int_{0}^{T} \phi^{p}_{t} dt]<\infty$. An (admissible) \emph{portfolio} for agent~$i$ is a scalar process $\phi\in\sL^{4}(Q_{i})$ which satisfies $\phi_0=a_{i}$ and is absolutely continuous with rate $\dot\phi\in\sL^{4}(Q_{i})$.\footnote{The precise integrability condition is not crucial; we simply need to ensure that the local martingale part of $\int \phi dS$ has vanishing expectation when $S$ is defined as in~\eqref{eq:SunderQi}. In our main equilibrium result the optimal portfolios are bounded.} We say that portfolios $\phi^{i}$, $1\leq i\leq N$ \emph{clear the market} if 
$$
\sum_{i=1}^N\phi^i_t=a_{0},\quad t\in [0,T]
$$
holds pointwise. A \emph{price process} (for $f$) is a progressively measurable process $S=(S_{t})_{0\leq t\leq T}$ which satisfies $S_{T}=f(X_{T})$ and is an It\^o process with sufficiently integrable coefficients under each $Q_{i}$: 
\begin{equation}\label{eq:SunderQi}
  dS_{t}=\mu^i_t dt+\nu^i_t dW^i_t \quad\mbox{with}\quad \mu^{i},\nu^{i}\in\sL^{4}(Q_{i})
\end{equation}
for some $Q_{i}$-Brownian motion $W^{i}$, for all $1\leq i\leq N$.

\paragraph{Equilibrium.}

To formulate the agents' optimization criteria, we fix a \emph{holding cost} parameter $\gamma>0$ and a \emph{transaction cost} parameter $\lambda>0$.
(The boundary cases $\lambda=0$ and $\gamma=0$ will be considered in Sections~\ref{se:transactionAsymp} and~\ref{se:holdingAsymps}, respectively.)
For a given price process~$S$, agent $i$ maximizes her expected returns, penalized for inventories and trading costs,
\begin{equation}\label{eq:goal}
  J^{i}(\phi) =E^i\left[\int_0^T \left(\phi_t dS_{t} - \frac{\gamma}{2} \phi_t^2 dt -\frac{\lambda}{2}\dot{\phi}_t^2 dt\right)\right]
\end{equation}
over the set of her admissible portfolios. A portfolio $\phi^{i}$ is \emph{optimal} for agent~$i$ if it is a maximizer.  If $S$ is a price process such that there exist optimal portfolios~$\phi^{i}$, $1\leq i\leq N$ for the agents which clear the market, then $S$ is an \emph{equilibrium price process}. Finally, $v: [0,T]\times\R^{d}\to\R$ is an \emph{equilibrium price function} if $v(t,X_{t})$ defines an equilibrium price process. We shall be interested in symmetric equilibria with prices of this Markovian form; however, the associated portfolios are usually path-dependent in the presence of transaction costs. As is implicit in~\eqref{eq:goal}, the interest rate is assumed to be zero throughout.

In this model, agents have fixed beliefs $Q_{i}$ and agree-to-disagree. The beliefs can be inconsistent with one another or with observations of $X$ over time, especially in the case of disagreement on future volatilities. As in \cite{HarrisonKreps.78}, the beliefs are used by the agents to compute \emph{expected} future profits or losses and eventually determine the initial price $S_{0}$ of the asset. This occurs at the initial time and without actual observation of the future. Thus, the resulting price is consistent for all agents even though they disagree.

The linear-quadratic criterion~\eqref{eq:goal} includes several simplifications to enhance the tractability of the model.
First, as in \cite{heaton.lucas.96}, transaction costs only depend on each agent's individual trading rate and not on the total order flow in the market. Accordingly, the trading cost should be interpreted as a tax or the fee charged by an exchange, rather than as a price impact cost as in \cite{almgren.chriss.01}. Partial equilibrium models where the agents interact through their common price impact are studied in \cite{bonelli.al.18}, for example. Nevertheless, to obtain tractable first-order conditions, we use a quadratic rather than linear cost. This simplification is motivated by recent results~\cite{gonon.al.19} for risk-sharing equilibria
which show that equilibrium prices are robust with respect to the specification of the trading cost. Second, we assume as in \cite{sannikov.skrzypacz.16} that all investors penalize inventories through a quadratic holding cost on positions. This leads to a system of \emph{linear} PDEs for equilibrium prices, unlike the systems of nonlinear PDEs that appear for holding costs on variances in~\cite{herdegen.al.19} even with homogeneous beliefs. Concave utility functions over terminal wealth or intermediate consumption would further complicate the analysis by introducing each agent's value function as an additional component of the nonlinear PDE system. Third, we suppose that the inventory costs are homogeneous across agents in order to single out the effect of heterogeneity in beliefs. Heterogeneous holding costs have been studied in~\cite{BouchardFukasawaHerdegenKarbe.18,herdegen.al.19} for models with homogeneous beliefs. Finally, we do not model ``where the transaction costs go'' in equilibrium: like in most of the related literature~\cite{buss.dumas.19, lo.al.04,vayanos.98}, we consider the trading cost as a deadweight loss for the financial market under consideration. This seems reasonable for a transaction tax or the fees imposed by an exchange, for example.

\section{Single-Agent Optimality}\label{se:singleAgentOpt}

As a preparation for the equilibrium result, we first fix agent~$i$ and solve her individual optimization problem in the face of an exogenous price process. Similar linear-quadratic optimization problems have been considered, e.g., in \cite{annkirchner.kruse.15,bank.al.17,BouchardFukasawaHerdegenKarbe.18,cartea.jaimungal.16,garleanu.pedersen.13,garleanu.pedersen.16,kohlmann.tang.02}; we provide a  self-contained derivation for the convenience of the reader. 

\begin{lemma}\label{le:optPortfolio}
  Let $\gamma>0$, $\lambda>0$ and 
  \begin{align}
  \label{eq:defG}
  G(t)=\cosh\left(\sqrt{\frac{\gamma}{\lambda}}(T-t)\right), \quad t \in [0,T].
  \end{align}
  Given a price process~$S$ as in~\eqref{eq:SunderQi}, the $dt\times Q_{i}$-a.e.\ unique optimal portfolio for agent~$i$ is 
  \begin{equation}\label{eq:position}
  \phi^i_t=\frac{G(t)}{G(0)}a_i + \int_0^t\frac{G(t)}{G(s)}E^i_s\left[\int_s^T\frac{G(u)}{G(s)}\frac{\mu^i_u}{\lambda}\,du\right]ds.
  \end{equation}
  In particular, the optimal trading rate is characterized by the random ODE
  \begin{equation}\label{eq:rate}
  \dot{\phi}^i_t = \frac{G'(t)}{G(t)}\phi^i_t+ E^i_t\left[ \int_t^T \frac{G(s)}{G(t)}\frac{\mu^i_s}{\lambda}\,ds\right], \quad \phi^i_0=a_i.
  \end{equation}
\end{lemma}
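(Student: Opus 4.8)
The plan is to treat \eqref{eq:goal} as a strictly concave optimization over the convex set of admissible portfolios, to characterize the optimizer by a first‑order condition, and then to solve that condition explicitly, with the function $G$ from \eqref{eq:defG} playing the role of an integrating factor. As a first step I would use that, since $\phi,\nu^{i}\in\sL^{4}(Q_{i})$, the local martingale $\int_{0}^{\cdot}\phi_{t}\nu^{i}_{t}\,dW^{i}_{t}$ is a true $Q_{i}$-martingale, so that $E^{i}[\int_{0}^{T}\phi_{t}\,dS_{t}]=E^{i}[\int_{0}^{T}\phi_{t}\mu^{i}_{t}\,dt]$ and
\[
J^{i}(\phi)=E^{i}\Big[\int_{0}^{T}\Big(\phi_{t}\mu^{i}_{t}-\tfrac{\gamma}{2}\phi_{t}^{2}-\tfrac{\lambda}{2}\dot\phi_{t}^{2}\Big)\,dt\Big].
\]
Because $\phi\mapsto\dot\phi$ is linear and $\lambda>0$, $J^{i}$ is strictly concave in $\dot\phi$ on the affine set $\{\phi_{0}=a_{i}\}$, so there is at most one maximizer up to $dt\times Q_{i}$-null sets.

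Next I would derive the first‑order condition. Differentiating $\epsilon\mapsto J^{i}(\phi+\epsilon\psi)$ at $\epsilon=0$ along admissible perturbations $\psi$ with $\psi_{0}=0$ (the G\^ateaux derivative is justified by the $\sL^{4}$-bounds and H\"older's inequality), and rewriting $\int_{0}^{T}(\mu^{i}_{t}-\gamma\phi_{t})\psi_{t}\,dt=\int_{0}^{T}\dot\psi_{s}\big(\int_{s}^{T}(\mu^{i}_{t}-\gamma\phi_{t})\,dt\big)\,ds$ by Fubini and then pulling a conditional expectation inside via the tower property, I obtain that a portfolio $\phi$ is optimal if and only if
\[
\lambda\dot\phi_{t}=E^{i}_{t}\Big[\int_{t}^{T}(\mu^{i}_{s}-\gamma\phi_{s})\,ds\Big]\qquad dt\times Q_{i}\text{-a.e.},
\]
where concavity makes this condition sufficient as well as necessary.

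To solve it, I write the right‑hand side as $N_{t}-\int_{0}^{t}(\mu^{i}_{s}-\gamma\phi_{s})\,ds$ with $N_{t}:=E^{i}_{t}[\int_{0}^{T}(\mu^{i}_{s}-\gamma\phi_{s})\,ds]$ a square‑integrable (hence true) martingale; since the filtration is Brownian, $\dot\phi$ then has a continuous version obeying $\lambda\,d\dot\phi_{t}=dN_{t}-\mu^{i}_{t}\,dt+\gamma\phi_{t}\,dt$, with $\phi_{0}=a_{i}$ and $\dot\phi_{T}=0$. The key observation is that $G$ solves the homogeneous equation $\lambda G''=\gamma G$ with $G'(T)=0$: applying the product rule to $\Xi_{t}:=G(t)\dot\phi_{t}-G'(t)\phi_{t}$, all finite‑variation terms except $-\tfrac{G(t)}{\lambda}\mu^{i}_{t}\,dt$ cancel, so $\Xi_{t}+\int_{0}^{t}\tfrac{G(s)}{\lambda}\mu^{i}_{s}\,ds$ is a $Q_{i}$-martingale; taking conditional expectations and using $\Xi_{T}=G(T)\dot\phi_{T}-G'(T)\phi_{T}=\dot\phi_{T}=0$ gives
\[
G(t)\dot\phi_{t}-G'(t)\phi_{t}=E^{i}_{t}\Big[\int_{t}^{T}\tfrac{G(s)}{\lambda}\mu^{i}_{s}\,ds\Big],
\]
which is exactly \eqref{eq:rate}. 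Solving this linear ODE with initial value $a_{i}$ by variation of parameters — equivalently, differentiating the candidate \eqref{eq:position}, written as $\phi^{i}_{t}=G(t)\big(\tfrac{a_{i}}{G(0)}+\int_{0}^{t}\tfrac{K_{s}}{G(s)^{2}}\,ds\big)$ with $K_{s}:=E^{i}_{s}[\int_{s}^{T}G(u)\tfrac{\mu^{i}_{u}}{\lambda}\,du]$, and checking it solves \eqref{eq:rate} — yields the claimed formula.

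Finally I would close the loop: the $\phi^{i}$ defined by \eqref{eq:position} is admissible because $1\le G\le\cosh(\sqrt{\gamma/\lambda}\,T)$ on $[0,T]$ and $\mu^{i}\in\sL^{4}(Q_{i})$, so Doob's and Minkowski's inequalities bound the nested conditional expectations to give $\phi^{i},\dot\phi^{i}\in\sL^{4}(Q_{i})$, while $\phi^{i}_{0}=a_{i}$ is immediate; and $\phi^{i}$ satisfies the first‑order condition — checked directly via Fubini and the elementary identities $\int_{r}^{T}G(s)\,ds=-\tfrac{\lambda}{\gamma}G'(r)$ and $\int_{t}^{u}\tfrac{G'(r)}{G(r)^{2}}\,dr=\tfrac{1}{G(t)}-\tfrac{1}{G(u)}$, or by reversing the martingale computation above — so by strict concavity it is the unique optimal portfolio. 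I expect the main obstacle to be the third step: recognizing that $G$ (rather than some other combination of $\cosh$ and $\sinh$) is the correct integrating factor, forced jointly by $\lambda G''=\gamma G$ and the terminal condition $\dot\phi_{T}=0$ matching $G'(T)=0$, and verifying that $\Xi_{t}+\int_{0}^{t}\tfrac{G(s)}{\lambda}\mu^{i}_{s}\,ds$ is a genuine rather than merely local martingale, which relies on the $\sL^{2}$-bound for $N$ coming from $\sL^{4}\subset\sL^{2}$ on the finite measure space $[0,T]\times\Omega$. The remaining pieces — the G\^ateaux differentiation, the admissibility estimates, and the Fubini bookkeeping — are routine given the boundedness of $G$ and the integrability assumptions.
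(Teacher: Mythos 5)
Your proposal is correct and follows essentially the same route as the paper: rewrite $J^{i}$ in its drift form, use strict concavity and the G\^ateaux derivative to obtain the first-order condition $\lambda\dot\phi_t = E^i_t\big[\int_t^T(\mu^i_s-\gamma\phi_s)\,ds\big]$ (equivalently the linear forward--backward system with $\dot\phi_T=0$), and exploit $\lambda G''=\gamma G$ together with $G'(T)=0$. The only difference is one of direction and is cosmetic: you solve the system constructively via the integrating factor $G(t)\dot\phi_t-G'(t)\phi_t$ to \emph{derive} \eqref{eq:rate} and \eqref{eq:position}, whereas the paper \emph{verifies} that the stated candidate satisfies the system and invokes uniqueness from strict concavity --- the underlying computation is the same.
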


As in the previous literature, the optimal trading strategy tracks the average $E^i_t[ \int_t^T -\frac{\gamma G(s)}{\lambda G'(t)}\frac{\mu^i_s}{\gamma}\,ds]$ of the discounted future values of the no-transaction cost portfolio $\mu^i_t/\gamma$ obtained by pointwise maximization of the drift of~\eqref{eq:goal}. To wit, illiquidity is accounted for by ``aiming in front of the moving target''~\cite{garleanu.pedersen.13}. Both the tracking speed $-G'(t)/G(t)$ and the discount kernel $K(t,s)=-\gamma G(s)/\lambda G'(t)$ are determined by the ratio $\gamma/\lambda$ of holding and transaction costs, with relatively lower transaction costs leading to faster trading and more emphasis on the current returns of the asset.
 
\begin{proof}[Proof of Lemma~\ref{le:optPortfolio}]
    Direct differentiation shows that $\dot{\phi}^i$ of~\eqref{eq:rate} is indeed the derivative of $\phi^i$ in~\eqref{eq:position}. Moreover, $\mu^{i}\in \sL^{4}(Q_{i})$ and Doob's inequality imply that $\phi^{i}\in \sL^{4}(Q_{i})$ and then~\eqref{eq:rate} yields that $\dot\phi^{i}\in \sL^{4}(Q_{i})$.
  
  Note that
  $$
    J^{i}(\phi)=E^i\left[\int_0^T \left(\phi_t \mu^i_t - \frac{\gamma}{2} \phi_t^2 -\frac{\lambda}{2}\dot{\phi}_t^2\right)dt\right]
  $$
  for any portfolio $\phi$ and that portfolios can be parametrized by their trading rates as the initial allocations are fixed and $\dot\phi\in\sL^{4}(Q_{i})$ implies $\phi\in\sL^{4}(Q_{i})$. The strict concavity of~$J^{i}$ implies that any optimizer is (a.e.) unique and that a trading rate $\dot{\phi}$ is optimal if and only if the G\^ateaux derivative $\lim_{\eps \to 0} \frac{1}{\eps}[J^{i}(\dot{\phi}+\eps \dot{\vartheta})-J^{i}(\dot{\phi})]$ of~\eqref{eq:goal} vanishes in all directions $\dot{\vartheta} \in \sL^{4}(Q_i)$; that is, 
  \begin{align*}
  0 &= E^i\left[ \int_0^T\left( \mu^i_t \int_0^t \dot\vartheta_s ds -\gamma \phi^i_t \int_0^t \dot\vartheta_s ds -\lambda \dot{\phi}^i_t \dot{\vartheta}_t \right)dt\right]\\
  & =E^i\left[\int_0^T \left(\int_t^T \left(\mu^i_s -\gamma \phi^i_s \right) ds -\lambda \dot{\phi}^i_t\right) \dot{\vartheta}_t dt\right], \quad \dot{\vartheta} \in \sL^{4}(Q_i).
  \end{align*}
  As $\dot{\vartheta}$ is arbitrary, this is equivalent to $\dot{\phi}^i_t =\frac{1}{\lambda} E^i_t [\int_t^T (\mu^i_s -\gamma \phi^i_s ) ds]$, which is in turn equivalent to
  $$
    \dot{\phi}^i_t = M^i_t -\frac{1}{\lambda} \int_0^t (\mu^i_s -\gamma \phi^i_s ) ds, \quad \dot{\phi}^i_T=0
  $$
  for some $Q_i$-martingale $M^i$. Put differently, $\dot{\phi}\in\cL^{4}(Q_{i})$ is optimal if and only if it solves the linear forward-backward SDE
  \begin{align}
  d\phi_t  &= \dot{\phi}_t dt, \quad \phi_0=a_i, \label{eq:fwd}\\
  d\dot{\phi}_t &= \frac{\gamma}{\lambda} \left(\phi_t - \frac{\mu^i_t}{\gamma}\right) dt + dM_t, \quad \dot{\phi}_T=0. \label{eq:bwd}
  \end{align}
  Direct computation shows that $(\phi^{i},\dot\phi^{i})$ solves this system: \eqref{eq:fwd} is trivial and for \eqref{eq:bwd} we note that
  \begin{align*}
  d\dot{\phi}^i_t=& \bigg\{\left(\frac{G''(t)}{G(t)}-\frac{G'(t)^2}{G(t)^2}\right)\phi^i_t +\frac{G'(t)}{G(t)}\left(\frac{G'(t)}{G(t)}\phi^i_t+ E^i_t\left[ \int_t^T \frac{G(s)}{G(t)}\frac{\mu^i_s}{\lambda}\,ds\right]\right)\\
  & \quad \;-\frac{G'(t)}{G(t)^2}E^i_t\left[ \int_t^T G(s)\frac{\mu^i_s}{\lambda}\,ds\right]-\frac{\mu^i_t}{\lambda}\bigg\}dt\\
  &+\frac{1}{G(t)}dE^i_t\left[\int_0^T \frac{G(s)\mu^i_s}{\lambda}\,ds\right]= \frac{\gamma}{\lambda}\left(\phi^i_t-\frac{\mu^i_t}{\gamma}\right)dt+dM_t
  \end{align*}
  for the $Q_i$-martingale $M=\int_0^\cdot \frac{1}{G(t)}dE^i_t[\int_0^T \frac{G(s)\mu^i_s}{\lambda}\,ds]$, where $G''(t)=\frac{\gamma}{\lambda}G(t)$ was used. As $G'(T)=0$, the terminal condition $\dot{\phi}^i_T=0$ is also satisfied.
\end{proof}


\section{Equilibrium}\label{se:equilibrium}

The following result establishes the existence and uniqueness of an equilibrium price function $v\in C^{1,2}_{b}([0,T]\times\R^{d})$ and its characterization through a weakly coupled system of linear parabolic equations. Recall that the function $G$ was defined in~\eqref{eq:defG} as $G(t)=\cosh(\sqrt{(\gamma/\lambda)}(T-t))$.

\begin{theorem}
\label{th:equilibriumPDE}
Let $\gamma>0$, $\lambda>0$. The parabolic system
\begin{align}
&\partial_{t}v_i+\frac{1}{2} \tr \sigma_i^2\partial_{xx}v_i+b_i\partial_{x}v_i+\frac{G'(t)}{G(t)}(v_i-v)=0, \quad 1\leq i\leq N, \label{eq:PDEvi}\\
&v:=\frac{1}{N}\sum_{i=1}^Nv_i+\frac{\lambda G'(t)}{N G(t)}a_{0}, \label{eq:PDEv}\\
&v_i(T,\cdot)=f, \quad 1\leq i\leq N  \label{eq:PDEterminalCond}
\end{align} 
has a unique solution $v_{1},\dots,v_{N}\in C^{1,2}_{b}([0,T]\times\R^{d})$, and the function $v$ defined via~\eqref{eq:PDEv} is an equilibrium price function. It is unique in the sense that any equilibrium price function $w\in C^{1,2}([0,T]\times\R^{d})$ with polynomial growth must be equal to~$v$. The equilibrium portfolios are given by
  \begin{equation*}\label{eq:portfolioInEquilibriumPDE}
  \phi^i_t=\frac{G(t)}{G(0)}a_i + \int_0^t\frac{G(t)}{G(s)}E^i_s\left[\int_s^T\frac{G(u)}{G(s)}\frac{\cL^{i}v(u,X_{u})}{\lambda}\,du\right]ds.
  \end{equation*}
\end{theorem}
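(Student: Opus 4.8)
The plan is to establish the theorem in three stages: first solve and characterize the PDE system, then verify that the resulting $v$ is an equilibrium price, and finally prove uniqueness of the equilibrium price function among functions with polynomial growth.

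\emph{Stage 1 (the PDE system; the hard part).} Since $c(t):=G'(t)/G(t)$ depends on time only, for a \emph{fixed} continuous bounded $v$ each equation \eqref{eq:PDEvi} is a single linear parabolic Cauchy problem $\cL^i v_i+c(t)v_i=c(t)v$ with terminal datum $f$, whose unique bounded solution is given by the Feynman--Kac formula
\[
v_i[v](t,x)=\frac{1}{G(t)}\Big(E^i_{t,x}[f(X_T)]-E^i_{t,x}\Big[\int_t^T G'(s)\,v(s,X_s)\,ds\Big]\Big),
\]
using $\int_t^s c(r)\,dr=\log(G(s)/G(t))$ and $G(T)=1$. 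Plugging this into \eqref{eq:PDEv} recasts the system as a fixed-point equation $v=\Phi(v)$ on $C_b([0,T]\times\R^d)$, $\Phi(v):=\frac1N\sum_i v_i[v]+\frac{\lambda G'}{NG}a_0$. As $G$ decreases on $[0,T]$, $\int_t^T|G'(s)|\,ds=G(t)-1$, so
\[
\|\Phi(v)-\Phi(\tilde v)\|_\infty\le\Big(1-\tfrac{1}{G(0)}\Big)\|v-\tilde v\|_\infty=\Big(1-\cosh(\sqrt{\gamma/\lambda}\,T)^{-1}\Big)\|v-\tilde v\|_\infty,
\]
a strict contraction; this (or, alternatively, the reaction--diffusion fixed-point argument of \cite{BechererSchweizer.05}) yields a unique $v\in C_b$ and hence a unique bounded solution $(v_i)_i$ of \eqref{eq:PDEvi}--\eqref{eq:PDEterminalCond}. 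The remaining, genuinely delicate point is to upgrade to $v_i\in C^{1,2}_b$: with $v$ bounded the sources $c(t)(v_i-v)$ are bounded, so interior $L^p$ and Schauder estimates already give $v_i\in C^{1,2}_{\mathrm{loc}}$; to get \emph{global} bounds on $\partial_x v_i$---needed so that $\cL^i v$ is bounded, and then to make the sources globally H\"older so that global Schauder estimates for the Cauchy problem bound $\partial_{xx}v_i$ and $\partial_t v_i$---I would use the gradient estimate announced in the introduction. I expect this regularity step to be the main obstacle.

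\emph{Stage 2 ($v$ is an equilibrium price function).} With $v\in C^{1,2}_b$, It\^o under $Q_i$ gives $dS_t=\cL^i v(t,X_t)\,dt+(\partial_x v\,\sigma_i)(t,X_t)\,dW^i_t$ with bounded coefficients, so $S:=v(\cdot,X_\cdot)$ is a price process ($S_T=v(T,X_T)=f(X_T)$ because $G'(T)=0$) with $\mu^i=\cL^i v(\cdot,X_\cdot)\in\sL^4(Q_i)$, and Lemma~\ref{le:optPortfolio} identifies agent $i$'s unique optimal portfolio as precisely the $\phi^i$ in the statement. To prove market clearing, the key step is the identity
\[
E^i_t\Big[\int_t^T\tfrac{G(s)}{G(t)}\,\cL^i v(s,X_s)\,ds\Big]=v_i(t,X_t)-v(t,X_t),
\]
which I would obtain by applying It\^o to $G(s)v_i(s,X_s)$ and to $G(s)v(s,X_s)$ under $Q_i$: by \eqref{eq:PDEvi} the drift of the former is $G'(s)v(s,X_s)$, that of the latter is $G'(s)v(s,X_s)+G(s)\cL^i v(s,X_s)$, and both processes have terminal value $f(X_T)$, so subtracting the two conditional expectations gives the claim. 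Inserting it into \eqref{eq:rate} collapses the optimal trading rate to the linear random ODE $\dot\phi^i_t=\tfrac{G'(t)}{G(t)}\phi^i_t+\tfrac1\lambda(v_i-v)(t,X_t)$, $\phi^i_0=a_i$. Summing over $i$ and using $\sum_i(v_i-v)=-\tfrac{\lambda G'}{G}a_0$ (a rearrangement of $N$ times \eqref{eq:PDEv}), the process $P_t:=\sum_i\phi^i_t-a_0$ solves $\dot P_t=\tfrac{G'(t)}{G(t)}P_t$ with $P_0=\sum_i a_i-a_0=0$, hence $P\equiv0$ and the market clears.

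\emph{Stage 3 (uniqueness).} Let $w\in C^{1,2}$ have polynomial growth and be an equilibrium price function with optimal clearing portfolios $\tilde\phi^i$; then $w(T,\cdot)=f$ (from $\tilde S_T=f(X_T)$ and $\supp Q_i=\Omega$, cf.\ Remark~\ref{rk:support}) and $\tilde S:=w(\cdot,X_\cdot)$ is a price process with $\tilde\mu^i=\cL^i w(\cdot,X_\cdot)\in\sL^4(Q_i)$, the polynomial growth of $\cL^i w$ together with finiteness of all moments of $X$ ensuring the integrability. Let $w_i$ be the polynomial-growth (classical) solution of \eqref{eq:PDEvi}--\eqref{eq:PDEterminalCond} with $v$ replaced by $w$; the It\^o computation of Stage~2 uses only \eqref{eq:PDEvi} for $w_i$ and still yields $E^i_t[\int_t^T\tfrac{G(s)}{G(t)}\cL^i w(s,X_s)\,ds]=(w_i-w)(t,X_t)$, so by Lemma~\ref{le:optPortfolio} we get $\dot{\tilde\phi}^i_t=\tfrac{G'(t)}{G(t)}\tilde\phi^i_t+\tfrac1\lambda(w_i-w)(t,X_t)$, $\tilde\phi^i_0=a_i$. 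Since $\sum_i\tilde\phi^i\equiv a_0$, summation gives $0=\tfrac{G'(t)}{G(t)}a_0+\tfrac1\lambda\sum_i(w_i-w)(t,X_t)$ for all $t$, $Q_i$-a.s., and full support forces $w=\tfrac1N\sum_i w_i+\tfrac{\lambda G'}{NG}a_0$, i.e.\ \eqref{eq:PDEv}; thus $(w_1,\dots,w_N,w)$ solves the system in the polynomial-growth class. Finally, a solution in this class is unique: the difference $e$ of two solutions solves the homogeneous system, so $e_i(t,x)=-\tfrac{1}{G(t)}E^i_{t,x}[\int_t^T G'(s)e(s,X_s)\,ds]$ and $e=\tfrac1N\sum_i e_i$; combining $E^i_{t,x}[1+|X_s|^k]\le C_k(1+|x|^k)$ with $\int_t^{t+\delta}|G'|\to0$ as $\delta\to0$ gives, via a short-interval estimate propagated backwards from $T$, that $e\equiv0$. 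Hence $w=v$, completing the proof.
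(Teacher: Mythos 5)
Your overall architecture matches the paper's. Stage 2 reproduces the paper's verification argument: the key identity $E^i_{t,x}\bigl[\int_t^T\frac{G(u)}{G(t)}\cL^i v(u,X_u)\,du\bigr]=v_i(t,x)-v(t,x)$ (obtained in the paper by integrating $\int G\,dS$ by parts against the Feynman--Kac representation of $v_i$, in your version by applying It\^o to $G v_i(\cdot,X)$ and $G v(\cdot,X)$ --- equivalent), followed by summation over $i$ and \eqref{eq:PDEv} to clear the market. Stage 3 is the paper's uniqueness proof almost verbatim: build $w_i$ from $w$, check it solves \eqref{eq:PDEvi}, recover \eqref{eq:PDEv} from market clearing via full support, and conclude by uniqueness for the PDE system. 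Your sup-norm contraction for $v\mapsto\frac1N\sum_i v_i[v]+\frac{\lambda G'}{NG}a_0$, with constant $1-G(0)^{-1}<1$, is a clean substitute for the Becherer--Schweizer fixed point the paper invokes, and your Gr\"onwall argument for uniqueness in the polynomial-growth class is a workable replacement for the paper's appeal to Besala's theorem for weakly coupled systems --- modulo spelling out the localization/uniform-integrability points needed to justify the Feynman--Kac representations for polynomial-growth solutions, and likewise the martingale property when you apply It\^o to $G\,w_i(\cdot,X)$ in Stage 3 (the paper sidesteps the latter by using the $\sL^{4}(Q_i)$ integrability built into the definition of a price process together with the Feynman--Kac definition of $w_i$).

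The genuine gap is exactly where you flag it: the upgrade to $v_i\in C^{1,2}_b$. You cannot appeal to ``the gradient estimate announced in the introduction'' --- that estimate is part of what must be proved, and it is the main analytic content of the paper's Proposition~\ref{pr:PDEsol}. Without a global bound on $\partial_x v_i$ (hence on $\cL^i v$), Stage 2 does not get started: you need $\mu^i=\cL^i v(\cdot,X)$ bounded, or at least in $\sL^4(Q_i)$, for $v(\cdot,X_\cdot)$ to be a price process, for Lemma~\ref{le:optPortfolio} to apply, and for the local-martingale parts in your computations to have vanishing expectation; interior $L^p$/Schauder theory gives only local control. The paper's proof is elementary but specific, and is well within reach of your setup: the fixed-point iteration itself propagates Lipschitz-in-$x$ bounds. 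On a time interval of length $\tau$ with $\tau cK<1$ (a smallness condition independent of the terminal data), one application of the map sends a Lipschitz constant $L$ to $L_f+\eps L$ with $\eps<1$, so the iterates --- started from a Lipschitz initial guess --- and hence the limit are uniformly Lipschitz; patching finitely many such intervals backwards from $T$ gives the global spatial gradient bound, an analogous estimate gives $1/2$-H\"older continuity in $t$, and then scalar Schauder estimates (reading \eqref{eq:PDEvi} as a scalar equation whose inhomogeneity, built from the other $v_j$, is globally H\"older) yield $v_i\in C^{1+\alpha/2,2+\alpha}$, in particular $C^{1,2}_b$. As written, this step is acknowledged but missing, so the proposal does not yet establish the theorem.
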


For comparison, let us first consider how this result simplifies for homogeneous beliefs. Then, all drift and volatility coefficients, and in turn the functions $v_i$ in Theorem~\ref{th:equilibriumPDE}, coincide. Together with the definition of the function $G$, it follows that
$$
v(t,x)=E_{t,x}\left[f(X_T)\right] - \frac{\gamma a_0}{N}(T-t), 
$$
where the expectation is taken under the common probability measure. This is the same equilibrium price that obtains in the frictionless version of the model where trading costs $\lambda$ are zero; cf.\ Proposition~\ref{pr:NoTransactionCostPrice}. Whence, with homogeneous beliefs and holding costs, equilibrium prices do not depend on liquidity; see \cite{sannikov.skrzypacz.16,BouchardFukasawaHerdegenKarbe.18,herdegen.al.19} for similar results. In contrast, the corresponding equilibrium trading speed depends on both the trading cost $\lambda$ and the holding cost $\lambda$ through their ratio. With homogenous beliefs, it is deterministic and the same for all agents $1\leq i \leq N$,
$$
\dot{\phi}^i_t = \sqrt{\frac{\gamma}{\lambda}}\tanh\left(\sqrt{\frac{\gamma}{\lambda}}(T-t)\right)\left(\frac{a_0}{N}-\phi^i_t\right), \quad \phi^i_0=a_i.
$$ 
To wit, the (identical) agents simply gradually adjust their initial allocations until the total supply of the risk asset is split equally among all of them.

Returning to the general case, an immediate consequence of Theorem~\ref{th:equilibriumPDE}  is that the holding costs $\gamma$ and transaction costs $\lambda$ have dual roles. In particular, in the case of zero net supply $a_{0}=0$, the equilibrium price depends only on the ratio $\gamma/\lambda$, so that small transaction costs are equivalent to large holding costs. When $a_{0}>0$, the theorem shows that the equilibrium price is 0-homogeneous in $(\gamma, \lambda,1/a_{0})$. We discuss this in more detail in Section~\ref{se:holdingAsymps} below, after deriving the limiting cases for small costs.

If all agents have equivalent beliefs, one could also consider heterogeneous holding costs $\gamma_i$ as previously studied in models with homogeneous beliefs~\cite{BouchardFukasawaHerdegenKarbe.18,herdegen.al.19} or exogenous price dynamics~\cite{BayraktarMunk.17,casgrain.jaimungal.18}. However, as in those studies, all agents' current positions would then appear as additional state variables and the system would become semilinear. (The extra state variables cancel in the present setting because the function $G(t)$ is the same for all agents.) We thus focus on homogeneous holding costs to highlight the effect of heterogeneity of beliefs.

As a preparation for the proof of Theorem~\ref{th:equilibriumPDE}, we first establish the analytic properties of the parabolic system. Given $\alpha\in(0,1)$, the H\"older space $C^{1+\alpha/2,2+\alpha}([0,T]\times\R^{d})$ consists of the functions $w(t,x)$ such that $w, \partial_{t}w, \partial_{x}w, \partial_{xx}w$ exist, are bounded, and uniformly H\"older continuous with exponents~$\alpha/2$ in $t$ and~$\alpha$ in~$x$.

\begin{proposition}\label{pr:PDEsol}
The system (\ref{eq:PDEvi}--\ref{eq:PDEterminalCond}) has a unique solution $v_{1},\dots,v_{N}\in C^{1,2}_{b}([0,T]\times\R^{d})$.
In fact, $v_{1},\dots,v_{N}\in C^{1+\alpha/2,2+\alpha}([0,T]\times\R^{d})$ for all $\alpha\in(0,1)$  
and uniqueness holds in the larger class of functions $w_{1},\dots,w_{N}\in C^{1,2}([0,T)\times\R^{d})\cap C([0,T]\times\R^{d})$ satisfying $|w_{i}(t,x)|\leq c_{1}\exp(c_{2}|x|^{2})$ for some constants $c_{1},c_{2}\geq0$.
\end{proposition}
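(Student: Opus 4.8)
The plan is to prove Proposition~\ref{pr:PDEsol} by combining the fixed-point scheme of \cite{BechererSchweizer.05} for weakly coupled reaction--diffusion systems with classical linear parabolic Schauder theory. First I would recast the system in a form amenable to iteration: note that \eqref{eq:PDEvi}--\eqref{eq:PDEterminalCond} couples the components $v_1,\dots,v_N$ \emph{only} through the zeroth-order term $\frac{G'(t)}{G(t)}(v_i-v)$, where $v=\frac1N\sum_j v_j + \frac{\lambda G'(t)}{NG(t)}a_0$. Observe that the coefficient $c(t):=G'(t)/G(t) = -\sqrt{\gamma/\lambda}\tanh(\sqrt{\gamma/\lambda}(T-t))$ is smooth and bounded on $[0,T]$, and the forcing term $h(t):=\frac{\lambda G'(t)}{NG(t)}a_0$ is smooth and bounded. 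So each equation reads $\cL^i v_i + c(t)v_i = c(t)v$ with $v$ treated as a known function when we iterate. The natural Banach space for the fixed-point map is $\bigl(C_b([0,T]\times\R^d)\bigr)^N$ with a weighted sup-norm $e^{\beta t}\|\cdot\|_\infty$ (or the backward analogue $e^{-\beta(T-t)}$), and the map $\Phi$ sends $(w_1,\dots,w_N)$ to $(v_1,\dots,v_N)$, where $v_i$ solves the \emph{decoupled} linear Cauchy problem $\cL^i v_i + c(t)v_i = c(t)\bar w$ with $\bar w := \frac1N\sum_j w_j + h$ and terminal data $v_i(T,\cdot)=f$.

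The key steps in order: (1) \emph{Solvability of each decoupled equation.} Because $\sigma_i^2$ is uniformly parabolic, $b_i,\sigma_i$ are bounded and Lipschitz, $c$ is bounded smooth, and the data $f\in C^3_b$ together with the source $c(t)\bar w\in C_b$, classical theory (e.g. Friedman, or Ladyzhenskaya--Solonnikov--Ural'tseva) gives a unique bounded classical solution $v_i\in C^{1,2}_b$, with the Feynman--Kac representation $v_i(t,x)=E_{t,x}\bigl[e^{\int_t^T c(r)\,dr}f(X^i_T) - \int_t^T e^{\int_t^s c(r)\,dr}c(s)\bar w(s,X^i_s)\,ds\bigr]$; this already yields an explicit sup-norm bound $\|v_i\|_\infty \le C(\|f\|_\infty + \|c\|_\infty\|\bar w\|_\infty)$. (2) \emph{Contraction.} For two inputs $w,\tilde w$ the difference $\delta_i := v_i-\tilde v_i$ solves $\cL^i\delta_i + c\,\delta_i = c(\bar w-\bar{\tilde w})$ with zero terminal data; the Feynman--Kac formula gives $|\delta_i(t,x)| \le \int_t^T |c(s)|\,e^{\int_t^s c}\,\|\bar w-\bar{\tilde w}\|_{s,\infty}\,ds$, and since $\|\bar w - \bar{\tilde w}\|_\infty \le \max_j\|w_j-\tilde w_j\|_\infty$, one gets contraction in the weighted norm $\sup_t e^{-\beta(T-t)}\max_i\|\cdot\|_\infty$ for $\beta$ large (equivalently, iterate on a short time interval and patch, which works because everything is uniform in the terminal data). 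Banach's fixed point theorem then produces a unique fixed point $(v_1,\dots,v_N)\in (C_b)^N$, which by construction solves \eqref{eq:PDEvi}--\eqref{eq:PDEterminalCond}. (3) \emph{Regularity bootstrap.} Once $v$ is known to be bounded and continuous, each $v_i$ solves a linear parabolic equation whose source $c(t)v(t,x)$ is bounded; interior Schauder estimates give $v_i\in C^{1+\alpha/2,2+\alpha}_{loc}$, and since $v$ is then Hölder, re-inserting shows the source is Hölder, so global Schauder estimates (using boundedness of all coefficients and of $f$) upgrade to $v_i\in C^{1+\alpha/2,2+\alpha}([0,T]\times\R^d)$ for every $\alpha\in(0,1)$, in particular $v_i\in C^{1,2}_b$. (4) \emph{Uniqueness in the quadratic-exponential class.} If $w_1,\dots,w_N$ is another solution with $|w_i(t,x)|\le c_1 e^{c_2|x|^2}$, the differences $\eta_i:=v_i-w_i$ satisfy the same homogeneous coupled system with zero terminal data and the same growth bound; applying the Feynman--Kac representation (valid here because under $Q_i$ the Gaussian-type tails of $X^i_s$ control $e^{c_2|x|^2}$ for $T$ small, then iterate over subintervals — this is exactly the standard uniqueness argument for the Cauchy problem in the class $|u|\le c_1e^{c_2|x|^2}$), each $\eta_i\equiv0$.

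The main obstacle is step (4), the uniqueness in the enlarged class allowing quadratic-exponential growth: the Feynman--Kac verification argument requires that $E_{t,x}[e^{c_2|X^i_s|^2}]<\infty$, which holds for $X^i$ only when $s-t$ is small enough relative to $c_2$ and the bound on $\sigma_i$; so one must localize in time, prove $\eta_i\equiv0$ on $[T-\tau,T]$ for a fixed small $\tau$ depending only on $c_2$ and $\|\sigma_i\|_\infty$ (not on $c_1$), and then iterate the argument backwards over $[T-2\tau,T-\tau]$, etc. A clean way to organize this is to fix the interval length first and observe that the growth bound is preserved at each iteration step because the Feynman--Kac formula expresses $\eta_i(t,\cdot)$ in terms of $\eta_j(t+\tau,\cdot)$ and a Gaussian average that does not worsen the $e^{c_2|x|^2}$ rate once $\tau$ is small. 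A secondary, more routine point to handle carefully is the sign of $c(t)=G'/G\le 0$: it is actually favorable (it makes the zeroth-order term dissipative in the backward direction), so no Gronwall blow-up occurs, but one should still phrase the contraction/maximum-principle bounds using $|c(t)|$ uniformly. Everything else — existence of the decoupled solutions, the contraction estimate, and the Schauder bootstrap — is standard once the weakly coupled structure is exploited.
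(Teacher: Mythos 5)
There is a genuine gap in your step (3), and it sits exactly at the point the paper identifies as the main difficulty. First, a technical misstep: interior Schauder estimates with a merely \emph{bounded} right-hand side do not give $v_i\in C^{1+\alpha/2,2+\alpha}_{loc}$; Schauder theory requires a H\"older-continuous source, so your first bootstrap pass needs a different tool (e.g.\ interior $W^{2,1}_{p}$/Krylov--Safonov estimates plus parabolic Sobolev embedding to get local H\"older continuity of $v_i$ and $\partial_x v_i$, after which Schauder applies). Second, and more seriously: even granting local $C^{1+\alpha/2,2+\alpha}$ regularity, ``re-inserting shows the source is H\"older'' only produces \emph{local} H\"older constants, while the conclusion you want --- $v_i\in C^{1,2}_b$ and $v_i\in C^{1+\alpha/2,2+\alpha}([0,T]\times\R^{d})$, i.e.\ derivatives bounded uniformly on all of $[0,T]\times\R^d$, which the rest of the paper relies on --- requires the coupling term $\frac{G'}{G}(v_i-v)$ to be H\"older \emph{uniformly in space} so that global Schauder estimates can be invoked. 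Your sketch passes silently from local to global. This uniform estimate is precisely what the paper says it could not find in the literature: its proof devotes steps (i)--(ii) to showing, via SDE moment estimates applied to the Becherer--Schweizer fixed-point iterates, that each $v_i$ is globally Lipschitz in $x$ and globally $1/2$-H\"older in $t$ (uniformly), and only then runs a single global scalar Schauder argument. Your route can be repaired --- for instance by performing your interior estimates on parabolic cylinders of unit size centred at arbitrary $x_0$ and noting that all constants (ellipticity, coefficient bounds, $\|f\|_{C^{2+\alpha}}$, the sup-norm of the source) are independent of $x_0$, so the local bounds are in fact uniform --- but as written the global regularity claim is not justified, and it is the heart of the proposition.

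The remaining parts are sound in outline and close to the paper's route. Your contraction map is essentially the Becherer--Schweizer iteration that the paper cites rather than reproves (their map integrates the nonlinearity via Feynman--Kac instead of solving a decoupled PDE at each step, which is immaterial), and the dissipative sign of $G'/G$ is indeed harmless. For uniqueness in the class $|w_i|\leq c_1\exp(c_2|x|^2)$ the paper simply invokes Besala's theorem for weakly coupled parabolic systems, whereas your time-localized Feynman--Kac/Gronwall argument is a workable self-contained substitute, provided you control the local-martingale part under the quadratic-exponential growth and run the Gronwall step on $\max_i|v_i-w_i|$ to handle the coupling across components.
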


\begin{proof}
  The system (\ref{eq:PDEvi}--\ref{eq:PDEterminalCond}) is a weakly coupled, uniformly parabolic linear system; see \cite[Chapter~9]{Friedman.64} for background. Uniqueness in the stated class is a special case of~\cite[Theorem~1]{Besala.63}.
  An existence result for such linear systems is contained, e.g., in \cite[Theorem~3, p.\,256]{Friedman.64}, but this does not yield growth estimates of the type we require here. Our system is also covered by a literature on reaction--diffusion systems. Specifically, \cite[Theorem~2.4]{BechererSchweizer.05} yields that (\ref{eq:PDEvi}--\ref{eq:PDEterminalCond}) has a unique solution $v_{1},\dots,v_{N}\in C^{1,2}([0,T)\times\R^{d})\cap C_{b}([0,T]\times\R^{d})$.
%
%
The main point (which we have not found provided in the literature) is to prove a useful growth estimate on the derivatives, and for that, the key element is to provide a H\"older estimate for~$v_{i}$.

(i) In this step we show that  $v_{i}$ is globally Lipschitz in~$x$, uniformly in~$t$.
Writing $u=(u_{1},\dots,u_{N})$, our system is of the general form
\begin{equation}\label{eq:MNgenSystem}
  \cL^{i} u_{i}(t,x) + h_{i}(t,u(t,x))=0,\quad u_{i}(T,x)=f_{i}(x),\quad 1\leq i\leq N
\end{equation}
satisfying the following conditions, for some constant $c>0$: the function $h=(h_{1},\dots,h_{n})$ is jointly Lipschitz with norm $\Lip(h)\leq c$ (hence $h(t,\cdot)$ is of linear growth, uniformly in $t$); the coefficients of $\cL^{i}$ are bounded and Lipschitz; each $f_{i}$ is bounded and Lipschitz with norm $\Lip(f_{i})$.
According to \cite[Theorem~2.4]{BechererSchweizer.05}, such a system has a (unique) solution $v_{1},\dots,v_{N}\in C^{1,2}\cap C_{b}$. Indeed, define
$$
  F_{i}(u)(t,x) := E^{i}\left[f_{i}(X^{t,x}_{T}) + \int_{t}^{T} h_{i}(s,u(s,X^{t,x}_{s})) ds \right], \quad 1\leq i \leq N.
$$
It is shown in the proofs of \cite[Theorems~2.3 and~2.4]{BechererSchweizer.05} that $F=(F_{1},\dots, F_{N})$ is a contraction on $(C_{b})^{N}=C_{b}\times\dots\times C_{b}$ for a complete norm which is equivalent to the uniform norm. More precisely, this holds after suitably truncating $h$ (so that the truncation does not affect the bounded solution). It is shown that if we start at any $u\in (C_{b})^{N}$ and iterate $F$, the sequence $u^{n}=(F\circ \dots\circ F)(u)$ will converge uniformly to a solution $(v_{1},\dots,v_{N})\in (C^{1,2}\cap C_{b})^{N}$ of~\eqref{eq:MNgenSystem}. We may, in particular, pick $u\in (C_{b})^{N}$ such that $\sup_{0\leq s\leq T}\Lip (u_{i}(s,\cdot))<\infty$ for all $1\leq i\leq N$ as our starting point for the iteration.

By a standard estimate on SDEs (e.g., \cite[Theorem~2.4\,(i),\, p.\,8]{Touzi.13}), 
$$
  E^{i}|X^{t,x}_{s}-X^{t,y}_{s}| \leq K|x-y|, \quad 0\leq s\leq T
$$
for a constant $K$ depending only on the Lipschitz constants of the coefficients of~$\cL^{i}$ and~$T$. Fix a small time interval $[t,T]$ of length $\tau=T-t>0$ and let
$$
  L_{u}=L_{u}^{(t)}= \max_{1\leq i\leq N} \sup_{t\leq r\leq T}\Lip(u_{i}(r,\cdot)).
$$
Then for $t\leq s\leq T$ we have that
\begin{align*}
  &|F_{i}(u)(s,x) - F_{i}(u)(s,y)| \\
  & \leq E^{i}\left[\Lip(f_{i})|X^{s,x}_{T}-X^{s,y}_{T}| + \int_{s}^{T} c\, \textstyle{\max_{i}}\Lip(u_{i}(r,\cdot)) |X^{s,x}_{r}-X^{s,y}_{r}| dr \right] \\
  & \leq  [K\Lip(f_{i}) + \tau c K L_{u}]\, |x-y|.
\end{align*}
This holds for all $i$. Choose $\tau$ such that $\eps:=\tau c K<1$ and set $L_{f}=\max_{1\leq i\leq N}K\Lip(f_{i})$, then
$$
  \Lip(F(u)(s,\cdot)) \leq L_{f} + \eps L_{u}, \quad t\leq s\leq T,
$$
the notation of course meaning that each component $F_{i}(u)$ satisfies this property. Iterating yields that $u^{n}=(F\circ \dots\circ F)(u)$ satisfies the geometric estimate
$$
  \Lip(u^{n}(s,\cdot)) \leq L_{f} \sum_{k=0}^{n-1} \eps^{k} + \eps^{n} L_{u}
$$
and hence the uniform limit $(v_{1},\dots,v_{N})=\lim u^{n}$ satisfies $\Lip(v_{i}(s,\cdot))\leq L_{f} (1-\eps)^{-1}$ for $t\leq s\leq T$.

Note that the size $\tau$ of the interval in the above argument does not depend on $\Lip(f)$. Hence we can repeat the argument on the interval $[T-2\tau, T-\tau]$, replacing the terminal condition $f_{i}$ by $\tilde f_{i} :=v_{i}(T-\tau,\cdot)$. Continuing finitely many times, we conclude that $\sup_{0\leq s\leq T}\Lip(v_{i}(s,\cdot))<\infty$.

(ii) Next, we show that $v_{i}$ is globally $1/2$-H\"older in~$t$, uniformly in $x$. A simple SDE estimate shows that 
$$
  E^{i}|X^{t',x}_{s}-X^{t,x}_{s}| \leq K|t'-t|^{1/2}, \quad 0\leq t\leq t' \leq s \leq T
$$
where $K$ now depends on the Lipschitz constants and uniform bounds for  $b_{i}$ and $\sigma_{i}$ as well as~$T$. (To see this one may, e.g., go through the proof of \cite[Theorem~2.4\,(ii),\, p.\,8]{Touzi.13} and use the uniform bounds in the estimate below Equation~(2.5) of that reference to avoid a dependence on $x$ in the final estimate for $E^{i}|X^{t,x}_{s}-X^{t,y}_{s}|$.)

As mentioned above, the relevant function $h$ in~\eqref{eq:MNgenSystem} is truncated in~$u$, so that $\|h\|_{\infty}<\infty$. This yields the (crude but simple) estimate
\begin{equation}\label{eq:MNcrudeTimeEstimate}
  \int_{t}^{t'} |h(s,X^{t,x}_{s},u(s,X^{t,x}_{s}))| ds \leq \|h\|_{\infty}|t'-t|  \leq c'|t'-t|^{1/2}
\end{equation}
for some constant $c'$, because $|t'-t|\leq T$. If $u\in (C_{b})^{N}$ is Lipschitz in $x$ with constant $L_{u}$ uniformly in $t$, we then have, similarly as in~(i) but using also~\eqref{eq:MNcrudeTimeEstimate},
\begin{align*}
  &|F_{i}(u)(t',x) - F_{i}(u)(t,x)| \\
  & \leq E^{i}\left[\Lip(f_{i})|X^{t',x}_{T}-X^{t,x}_{T}| + c'|t'-t|^{1/2} +  \int_{t'}^{T} c L_{u}|X^{t',x}_{s}-X^{t,x}_{s}| ds \right] \\
  & \leq  [K\Lip(f_{i}) + c' + Tc L_{u}K]\, |t'-t|^{1/2}.
\end{align*}
Again, we iterate the mapping $F$ to generate $u^{n}=(F\circ \dots\circ F)(u)$. By~(i) we have that $\sup_{n} L_{u^{n}}<\infty$. Hence, the above shows that $|u^{n}_{i}(t',x)-u^{n}_{i}(t,x)|\leq c'' |t'-t|^{1/2}$ for a uniform constant $c''$, and then the same holds for the limit $(v_{1},\dots,v_{N})=\lim u^{n}$.

(iii) We have shown above that $v$ is globally Lipschitz in $x$ and 1/2-H\"older in $t$; in particular $v_{j}\in C^{\alpha/2,\alpha}$ for all $\alpha\in (0,1)$. (See \cite[p.\,117]{Krylov.96} for a detailed definition of the H\"older spaces.) For fixed~$i$, we can see $v_{i}$ as the solution of a \emph{scalar} PDE which contains $(v_{j})_{j}$ as coefficients: $\varphi=v_{i}$ is the solution of
$$
 \tilde{\cL}\varphi(t,x) + g(t,x) =0, \quad \varphi(T,\cdot)=f
$$
on $[0,T]\times \R^{d}$ with terminal value $f\in C^{2+\alpha}$, parabolic operator $\tilde\cL u := \cL^{i}u-u$ and  inhomogeneous term $g\in C^{\alpha/2,\alpha}$ defined by
$$
  g = v_{i} +\frac{G'(t)}{G(t)}\left(v_i-\frac{1}{N}\sum_{i=1}^Nv_i+\frac{\lambda G'(t)}{N G(t)}a_{0}\right)
$$
using the fixed functions $(v_{j})_{1\leq j\leq N}$. We can now apply a suitable version of the Schauder estimates to conclude that $v_{i}=\varphi\in C^{1+\alpha/2,2+\alpha}([0,T]\times\R^{d})$; cf.\ \cite[Theorem~9.2.3, p.\,140]{Krylov.96}.
\end{proof}

\begin{remark}\label{rk:PDEsolSmooth}
  Suppose that $b_{i},\sigma_{i},f\in C^{\infty}_{b}$ for $1\leq i\leq N$. Then we also have $v_{i}\in C^{\infty}_{b}$. 
\end{remark} 

\begin{proof}
  If $b_{i},\sigma_{i}\in C^{\infty}([0,T)\times\R^{d})$, interior regularity for parabolic systems as stated in \cite[Theorem~11, p.\,265]{Friedman.64} immediately yields that the solution from Proposition~\ref{pr:PDEsol} is in $C^{\infty}([0,T)\times\R^{d})$. We need to show that the partial derivatives of any order are bounded.

  Fix $1\leq i\leq N$ and $1\leq k\leq d$ and consider the function $\varphi=\partial_{x_{k}} v_{i}$. We can differentiate the system~\eqref{eq:PDEvi} with respect to $x_{k}$ and rearrange the terms to find that $\varphi$ is the solution of a scalar parabolic equation
$$
  \cL\varphi(t,x) + g(t,x) =0, \quad \varphi(T,\cdot)=\partial_{x_{k}}f
$$
on $[0,T]\times \R^{d}$ with terminal value $\partial_{x_{k}}f\in C^{\infty}_{b}\subseteq C^{2+\alpha}$. Here the inhomogeneity $g$ incorporates all other terms resulting from the differentiated equation: it is a linear combination, with coefficients in $C^{\infty}([0,T)\times\R^{d})$, of the functions $v_{j}$, $1\leq j\leq N$ as well as their first and second-order spatial derivatives. As $v_{j}\in C^{1+\alpha/2, 2+\alpha}$ by Proposition~\ref{pr:PDEsol}, we see in particular that $g\in C^{\alpha/2, \alpha}$. Thus, we can conclude from \cite[Theorem~9.2.3, p.\,140]{Krylov.96} that $\partial_{x_{k}}v_{i}=\varphi\in C^{1+\alpha/2,2+\alpha}([0,T]\times\R^{d})$. In particular, the third-order spatial derivatives of $v_{i}$ are bounded and uniformly H\"older continuous. Moreover, by the parabolic form of the above equation, the same follows for $\partial_{t}\partial_{x_{k}}v_{i}=\partial_{t}\varphi$.

This argument can be iterated to the higher-order derivatives.
\end{proof} 

\begin{proof}[Proof of Theorem~\ref{th:equilibriumPDE}.]
  The formula for the equilibrium portfolios is a direct consequence of Proposition~\ref{pr:NoTransactionCostPrice}, so we focus on the price.

  (i) Let $v_{1},\dots,v_{N}\in C^{1,2}_{b}$ be the solution from Proposition~\ref{pr:PDEsol} and define $v$ by~\eqref{eq:PDEv}; we show that $v$ is an equilibrium price function. It\^o's formula shows that $S_{t}=v(t,X_{t})$ is a price process as defined in~\eqref{eq:SunderQi}; the coefficients $\mu_{i}$ and $\nu_{i}$ are even bounded. In view of~\eqref{eq:PDEv}, the function $w_{i}(t,x):=G(t)v_{i}(t,x)$ satisfies
  $$
    \cL^{i}w_{i} = G\cL^{i}v_{i} + G'v_{i} =G'v
  $$
  and $w_{i}(T,x)=G(T)v_{i}(T,x)=f(x)$. Thus, It\^o's formula and the boundedness of~$\partial_{x}v_{i}$  imply that under~$Q_{i}$ we have the Feynman--Kac representation 
  $$
    w_i(t,x)= E^i_{t,x}[f(X_T)]-\int_t^T G'(u)E^{i}_{t,x}[v(u,X_u)] du.
  $$
  As a result,
  \begin{equation}\label{eq:viFeynman}
    v_i(t,x)=\frac{E^i_{t,x}[f(X_T)]}{G(t)}-\int_t^T\frac{G'(u)}{G(t)}E^{i}_{t,x}[v(u,X_u)] du   
  \end{equation} 
  for all $(t,x)\in [0,T]\times\R^d$.
  Lemma~\ref{le:optPortfolio} shows that given the price $S_{t}=v(t,X_{t})$, the portfolio
  \begin{equation}\label{eq:optPortfolioS}
    \phi^i_t=\int_0^t\frac{G(t)}{G(s)}E^i_{s,X_s}\Big[\int_s^T\frac{G(u)}{G(s)}\frac{1}{\lambda}\cL^{i}v(u,X_u)\,du\Big] ds+\frac{G(t)a_i}{G(0)}
  \end{equation}
  is optimal for agent~$i$. It remains to prove that these portfolios clear the market. Recalling that $G(T)=1$, taking expectations in the integration-by-parts formula
  $\int_{s}^{T} G(u) dS_{u}=G(T)S_{T} - G(s)S_{s} - \int_{s}^{T} G'(u)S_{u} du$ and applying~\eqref{eq:viFeynman}   
  yield 
  \begin{align}
  &E^i_{s,x}\left[\int_s^T\frac{G(u)}{G(s)}\frac{1}{\lambda}\cL^{i}v(u,X_u)\,du\right] \notag\\
  &=\frac{1}{\lambda G(s)}\left( E^i_{s,x}[f(X_T)]-G(s)v(s,x)-\int_s^T G'(u)E^i_{s,x}[v(u,X_u)] du\right) \notag\\
  &=\frac{1}{\lambda} [v_{i}(s,x)-v(s,x)] \label{eq:intByPartPDEproof}.
  \end{align}
  In view of~\eqref{eq:PDEv}, we deduce that
  $$
    \sum_{i=1}^{N }E^i_{s,x}\left[\int_s^T\frac{G(u)}{G(s)}\frac{1}{\lambda}\cL^{i}v(u,X_u)\,du\right] = -\frac{G'(s)}{G(s)}a_{0}.
  $$
  Using this in~\eqref{eq:optPortfolioS} and integrating $-\frac{G'(s)}{G^{2}(s)}=\partial_{s}G(s)^{-1}$, we conclude that
  $$
    \sum_{i=1}^{N} \phi^{i}_{t} = -a_{0}\int_0^t\frac{G(t)}{G(s)} \frac{G'(s)}{G(s)} \,ds+\frac{G(t)a_0}{G(0)}
    =a_{0}
  $$  
  as desired.
  
  (ii) Let $S_{t}=w(t,X_{t})$ be an equilibrium price process for some function $w\in C^{1,2}([0,T]\times\R^{d})$
  of polynomial growth (or, more generally, $w\in C^{1,2}([0,T)\times\R^{d})$ of polynomial growth and locally H\"older continuous on $[0,T]\times\R^{d}$). We have $w(T,\cdot)=f$ by Remark~\ref{rk:support}. Recall that the coefficients $\mu^{i}_{t}=\cL^{i}w(t,X_{t})$ and $\nu^{i}_{t}=\partial_{x}w(t,X_{t})^{\top}\sigma^{i}_{t}$ of~\eqref{eq:SunderQi} are in $\cL^{4}(Q_{i})$ as part of our definition of a price process.
  We define $w_{i}$ by the Feynman--Kac formula~\eqref{eq:viFeynman} with~$w$ instead of~$v$. In view of the assumptions on $b_{i}$ and $\sigma_{i}$, the function $w_{i}$ has polynomial growth like~$w$. Moreover, by a careful application of standard PDE results, $w_{i}\in C^{1,2}$ and $w_{i}$ is a solution of the associated linear PDE~\eqref{eq:PDEvi}. Specifically, we can use an approximation with bounded domains as detailed in \cite[Theorem~1, Condition~(A3'), Lemma~2 and the comments above it]{HeathSchweizer.00} under the stated conditions on~$w$.

  It remains to show~\eqref{eq:PDEv}. As a consequence of Lemma~\ref{le:optPortfolio}, the agents' equilibrium portfolios~$\phi^{i}$ are given by~\eqref{eq:optPortfolioS}. Because these portfolios clear the market, $\sum_{i}\phi^{i}_{s}=a_{0}$ and thus $\partial_{s} \sum_{i} \frac{\phi^{i}_{s}}{G(s)}=a_{0}\partial_{s}G(s)^{-1}$. Reversing the integration by parts~\eqref{eq:intByPartPDEproof}, we conclude that
  \begin{align*}
  \sum_{i=1}^{N} \frac{1}{G(s)\lambda}[w_{i}(s,x)-w(s,x)]
  &=\sum_{i=1}^{N} \frac{1}{G(s)}E^i_{s,x}\left[\int_s^T\frac{G(u)}{G(s)} \frac{1}{\lambda}\cL^{i}w(u,X_u)\,du\right]\\
  &= \partial_{s}\sum_{i=1}^{N} \frac{\phi^{i}_{s}}{G(s)}
  =a_{0}\partial_{s}G(s)^{-1} 
  = -a_{0}\frac{G'(s)}{G(s)^{2}}
  \end{align*} 
  which is equivalent to~\eqref{eq:PDEv}. We have thus established that $w_{1},\dots,w_{N}\in C^{1,2}$ are a solution of~\eqref{eq:PDEvi} with polynomial growth. The claim now follows by the uniqueness of the solution as stated in Proposition~\ref{pr:PDEsol}.
\end{proof} 

\begin{remark}\label{rk:nonMarkovCase}
  The restriction to Markovian equilibria in Theorem~\ref{th:equilibriumPDE} (meaning that the price is a function of $t$ and $x$) is related to our choice of proof through PDE arguments rather than fundamental. For instance, if the volatility $\sigma_{i}$ is the same for all agents, similar arguments could be carried out using backward SDEs (e.g., \cite{ElKarouiPengQuenez.97}). In that framework, one would obtain uniqueness within a class of possibly non-Markovian equilibria and one could also cover beliefs where~\eqref{eq:Xdynamics} is replaced by coefficients that may depend on the path of~$X$.
\end{remark} 

\section{Asymptotics for Small Transaction Costs}\label{se:transactionAsymp}

In this section we provide intuition for the equilibrium price from Theorem~\ref{th:equilibriumPDE} by describing its asymptotics for small transaction costs $\lambda\to 0$.
For later comparison, we first consider the model without transaction costs; i.e., $\lambda=0$. In this case we drop the requirement of absolute continuity in the definition of the admissible portfolios and we also do not enforce the initial holdings~$a_{i}$ (in any event, agents can instantaneously adjust their position after $t=0$ without incurring costs). The following result, which is a special case of~\cite[Theorem~2.1 and Remark~3.5]{NutzScheinkman.17}, shows that the corresponding equilibrium corresponds to the price of a representative agent with a view $\bar Q$ defined by the averaged drift and volatility coefficients.

\begin{proposition}\label{pr:NoTransactionCostPrice}
  Let $\lambda=0$ and $\gamma>0$. There exists a unique equilibrium price function $v^{0}\in C_b^{1,2}$, given by
  $$
    v^{0}(t,x)=\bar{E}_{t,x}[f(X_T)]-\frac{(T-t)\gamma a_{0}}{N}
  $$
  where $\bar{E}[\cdot]$ is the expectation for the probability $\bar{Q}$ under which
  $$
 dX_t=\bar{b}(t,X_t) dt+\bar{\sigma}(t,X_t) dW_t,\quad \textstyle{\bar{b}=\frac{1}{N}\sum_{i=1}^Nb_i, \quad \bar{\sigma}^{2}=\frac{1}{N}\sum_{i=1}^N\sigma_i^2}
  $$
 for a Brownian motion $W$. Equivalently, $v^{0}$ is the unique bounded classical solution of 
  \begin{equation}\label{eq:PDEv0}
    \partial_{t}v+\frac{1}{2} \tr \bar\sigma^2\partial_{xx}v +\bar{b}\partial_{x}v - \frac{\gamma a_{0}}{N}  =0, \quad v(T,\cdot)=f.
  \end{equation}
  In equilibrium, the $dt\times Q_{i}$-a.e.\ unique optimal portfolio for agent $i$ is
  \begin{equation}\label{eq:phi0}
    \phi^{i,0}_t=\frac{\cL^{i}v^0(t,X_t)}{\gamma}.
  \end{equation}
\end{proposition}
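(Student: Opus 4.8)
The plan is to treat the $\lambda=0$ case as a degenerate limit of the analysis in Sections~\ref{se:singleAgentOpt} and~\ref{se:equilibrium}, but since we have dropped absolute continuity of portfolios and the initial constraint $\phi_0=a_i$, the single-agent problem becomes pointwise: for a given price process $S$ with drift $\mu^i$ under $Q_i$, the functional $J^i(\phi)=E^i[\int_0^T(\phi_t\mu^i_t-\tfrac{\gamma}{2}\phi_t^2)\,dt]$ is maximized $dt\times Q_i$-a.e.\ by the static choice $\phi^i_t=\mu^i_t/\gamma$, with uniqueness from strict concavity. This is the analogue of Lemma~\ref{le:optPortfolio} with $\lambda=0$; I would record it as a one-line observation rather than a separate lemma. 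So the proof structure mirrors that of Theorem~\ref{th:equilibriumPDE}: first verify that the candidate $v^0$ defines an equilibrium (existence), then show any equilibrium price function of polynomial growth must coincide with $v^0$ (uniqueness), and along the way establish the PDE characterization~\eqref{eq:PDEv0}.

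For existence, I would start from the PDE~\eqref{eq:PDEv0}: since $\bar b,\bar\sigma$ are Lipschitz and bounded and $\bar\sigma^2$ is uniformly parabolic (being an average of the $\sigma_i^2$), and $f\in C^3_b$, classical Schauder theory gives a unique bounded classical solution $v^0\in C^{1,2}_b$, in fact $v^0\in C^{1+\alpha/2,2+\alpha}$ — this is the scalar, uncoupled specialization of Proposition~\ref{pr:PDEsol}, so I can cite the same references (\cite[Theorem~9.2.3]{Krylov.96}). The Feynman--Kac formula then gives the stated expectation representation $v^0(t,x)=\bar E_{t,x}[f(X_T)]-(T-t)\gamma a_0/N$, the linear term coming from integrating the constant source $-\gamma a_0/N$. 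Next, set $S_t=v^0(t,X_t)$; by It\^o under each $Q_i$ this is a price process with bounded coefficients, and its $Q_i$-drift is $\mu^i_t=\cL^i v^0(t,X_t)$. Hence the optimal portfolio for agent $i$ is $\phi^{i,0}_t=\cL^i v^0(t,X_t)/\gamma$, which is~\eqref{eq:phi0}. Market clearing is then a direct computation: $\sum_i\phi^{i,0}_t=\tfrac1\gamma\sum_i\cL^i v^0(t,X_t)$, and since $\cL^i=\partial_t+b_i\partial_x+\tfrac12\tr\sigma_i^2\partial_{xx}$, averaging over $i$ and using $\bar b=\tfrac1N\sum b_i$, $\bar\sigma^2=\tfrac1N\sum\sigma_i^2$ gives $\tfrac1N\sum_i\cL^i v^0=\bar\cL v^0+\partial_t v^0-\bar\cL v^0$... more precisely $\tfrac1N\sum_i\cL^i v^0 = \partial_t v^0+\bar b\partial_x v^0+\tfrac12\tr\bar\sigma^2\partial_{xx}v^0$, which by~\eqref{eq:PDEv0} equals $\gamma a_0/N$. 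Therefore $\sum_i\phi^{i,0}_t=\tfrac1\gamma\cdot N\cdot\tfrac{\gamma a_0}{N}=a_0$, so the market clears pointwise and $S$ is an equilibrium price.

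For uniqueness, let $S_t=w(t,X_t)$ be any equilibrium price with $w\in C^{1,2}$ of polynomial growth; then $w(T,\cdot)=f$ by Remark~\ref{rk:support}, and the $Q_i$-drift of $S$ is $\mu^i_t=\cL^i w(t,X_t)$. By the pointwise optimality observation, the equilibrium portfolios are $\phi^i_t=\cL^i w(t,X_t)/\gamma$, and market clearing $\sum_i\phi^i_t=a_0$ forces $\sum_i\cL^i w(t,X_t)=\gamma a_0$ $Q_i$-a.s., hence (using that the support of each $Q_i$ is all of $\Omega$, Remark~\ref{rk:support}) $\tfrac1N\sum_i\cL^i w\equiv\gamma a_0/N$ on $[0,T]\times\R^d$; that is, $w$ solves~\eqref{eq:PDEv0} pointwise with terminal value $f$. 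Uniqueness of the polynomially-growing classical solution of this scalar linear parabolic Cauchy problem (e.g.\ via the Feynman--Kac representation under $\bar Q$, or a standard maximum-principle / Besala-type argument as in Proposition~\ref{pr:PDEsol}) then gives $w=v^0$.

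I do not expect a serious obstacle here — the result is flagged as a special case of \cite[Theorem~2.1 and Remark~3.5]{NutzScheinkman.17}, and once absolute continuity is dropped the single-agent problem trivializes. The one place to be slightly careful is the admissibility/integrability bookkeeping: one must check that $\phi^{i,0}\in\sL^2(Q_i)$ (so that $J^i$ is well-defined and the local-martingale part of $\int\phi\,dS$ has zero expectation), which follows from boundedness of $\partial_x v^0$, $\partial_{xx}v^0$, $\partial_t v^0$ and of $b_i,\sigma_i$, giving $\cL^i v^0$ bounded; and conversely, in the uniqueness argument, that the a priori definition of a price process already supplies $\mu^i\in\sL^4(Q_i)$, which is what lets the pointwise first-order condition be applied. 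These are routine given the standing assumptions, so the writeup should be short.
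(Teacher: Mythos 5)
Your proposal is correct, but note that the paper itself does not prove this proposition: it simply invokes \cite[Theorem~2.1 and Remark~3.5]{NutzScheinkman.17}, so what you have written is a self-contained reconstruction rather than a parallel of an in-paper argument. Your route is exactly the natural specialization of the machinery of Theorem~\ref{th:equilibriumPDE} to $\lambda=0$: with absolute continuity and the initial-holding constraint dropped, the single-agent problem reduces to the pointwise first-order condition $\phi^i_t=\mu^i_t/\gamma$ (with $dt\times Q_i$-a.e.\ uniqueness from strict concavity), existence follows from solving the scalar uniformly parabolic Cauchy problem~\eqref{eq:PDEv0} by Schauder theory plus Feynman--Kac, market clearing is the identity $\tfrac1N\sum_i\cL^i v^0=\partial_t v^0+\bar b\,\partial_x v^0+\tfrac12\tr\bar\sigma^2\partial_{xx}v^0=\gamma a_0/N$, and uniqueness comes from reading the clearing condition through Remark~\ref{rk:support} to get the PDE pointwise and then invoking uniqueness for polynomially growing classical solutions, exactly as in part~(ii) of the proof of Theorem~\ref{th:equilibriumPDE}. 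All steps check out, including the integrability bookkeeping you flag ($\cL^i v^0$ is bounded since $v^0\in C^{1,2}_b$ and the coefficients are bounded, and the definition of a price process already supplies $\mu^i,\nu^i\in\sL^4(Q_i)$ so the stochastic-integral term has zero expectation). The only cosmetic point is the momentary slip in the clearing computation (``$\bar\cL v^0+\partial_t v^0-\bar\cL v^0$''), which you immediately correct; also, when passing from the a.s.\ clearing identity to a pointwise one, recall that $\omega_0=0$ pins down $X_0$, so the identity at $t=0$ is obtained by continuity in $t$ rather than from the support argument alone---a one-line remark worth keeping. What your writeup buys is a proof that stays entirely within the paper's framework and delivers uniqueness in the polynomial-growth class (slightly stronger than the $C^{1,2}_b$ statement), at the cost of repeating arguments the authors chose to outsource to the cited reference.
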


In the remainder of this section we denote the equilibrium price from Theorem~\ref{th:equilibriumPDE} by $v^\lambda$ to emphasize the dependence on  $\lambda$. Our goal is to compute its leading-order deviation $\lambda^{-1/2}(v^\lambda-v^0)$ from the frictionless equilibrium price $v^0$ of Proposition~\ref{pr:NoTransactionCostPrice}. For simplicity, we focus on the case of a one-dimensional state variable ($d=1$) with smooth drift and diffusion coefficients and terminal condition: $b_{i},\sigma_{i},f\in C^{\infty}_{b}$ for $1\leq i\leq N$, and hence $v,v_{i}\in C^{\infty}_{b}$ on the strength of Remark~\ref{rk:PDEsolSmooth}.\footnote{For the arguments below, $C^{9}_{b}$ is sufficient. Some weakening of these regularity assumptions is certainly possible, for the ease of exposition we have retained the given smoothness assumptions.}

\begin{theorem}
\label{theorem:TransactionCostExpansion}
For fixed holding costs $\gamma>0$ and small transaction costs $\lambda \to 0$, the equilibrium price function $v^\lambda$ from Theorem~\ref{th:equilibriumPDE} has the expansion
\begin{equation}\label{eq:singular}
v^\lambda(t,x)=v^0(t,x)+\sqrt{\lambda} v^{*}(t,x)+o(\sqrt{\lambda}) \quad \mbox{locally uniformly on $[0,T]\times\mathbb{R}$}.
\end{equation}
Here, $v^0$ is the frictionless equilibrium price from Proposition \ref{pr:NoTransactionCostPrice} and 
\begin{align}\label{eq:ve}
v^{*}(t,x)&=\frac{\sqrt{\gamma}}{N}\sum_{i=1}^N \bar{E}_{t,x}\left[\int_t^T \mathcal{L}^i \hat\phi^{i,0}(s,X_{s})ds\right]
\end{align}
where $\hat\phi^{i,0}(s,x)=\mathcal{L}^i v^0(s,x)/\gamma$ is the feedback function determining agent~$i$'s frictionless optimal portfolio~\eqref{eq:phi0} and the expectation is taken under the probability measure $\bar{Q}$ of the frictionless representative agent for which
$$
 dX_t=\bar{b}(t,X_t) dt+\bar{\sigma}(t,X_t) dW_t,\quad \textstyle{\bar{b}=\frac{1}{N}\sum_{i=1}^Nb_i, \quad \bar{\sigma}^{2}=\frac{1}{N}\sum_{i=1}^N\sigma_i^2}.
$$
\end{theorem}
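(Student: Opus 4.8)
We argue by a singular-perturbation / matched-asymptotics scheme. Throughout write $\beta=\beta_\lambda:=\sqrt{\gamma/\lambda}$, so that $G'/G=-\beta\tanh(\beta(T-t))=:-c(t)$ with $c\ge0$, and note the elementary identities $c'=-\beta^2\,\mathrm{sech}^2(\beta(T-t))$, $\lambda c^2=\gamma\tanh^2(\beta(T-t))$, $\lambda|c'|\le\gamma$, $\mathrm{sech}^2+\tanh^2=1$. The proof rests on three ingredients: (i) a priori bounds on $v^\lambda,v^\lambda_i$ and their derivatives that are \emph{uniform in $\lambda$}; (ii) a refined expansion of $\delta^\lambda_i:=v^\lambda_i-v^\lambda$ that remains valid through the $O(\sqrt\lambda)$-wide boundary layer at $t=T$; (iii) a stability estimate for the PDE system, for which the quasimonotone structure of the coupling does the work. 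Here $\bar\cL:=\partial_t+\bar b\,\partial_x+\tfrac12\bar\sigma^2\partial_{xx}$ and, for each $i$, $\mathcal D^i:=\cL^i-\bar\cL$, so that $\sum_i\mathcal D^i=0$.

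\emph{Step 1 (uniform estimates).} Eliminating $v$ from~\eqref{eq:PDEvi} via~\eqref{eq:PDEv}, the family $(v^\lambda_i)_i$ solves the weakly coupled linear system
\[
\cL^i v_i-c(t)v_i+\frac{c(t)}{N}\sum_{j=1}^N v_j=\frac{\lambda c(t)^2}{N}a_0,\qquad v_i(T,\cdot)=f,
\]
whose zeroth-order coupling matrix $c(t)\big(\tfrac1N\mathbf1\mathbf1^\top-\mathrm{Id}\big)$ is quasimonotone (nonnegative off-diagonal entries) with nonpositive row sums, and whose forcing $\tfrac{\lambda c^2}{N}a_0=\tfrac{\gamma\tanh^2(\beta(T-t))}{N}a_0$ is bounded \emph{independently of $\lambda$}. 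The parabolic comparison principle for such systems is insensitive to $c(t)\to\infty$ (the extra damping only helps), so $\sup_{\lambda\le1}\max_i\|v^\lambda_i\|_\infty<\infty$ and hence $\sup_{\lambda\le1}\|v^\lambda\|_\infty<\infty$. Differentiating the system in $x$ preserves the quasimonotone structure: the top-order term $\partial_x^{k+1}$ reorganizes into a bounded transport term for $w_i:=\partial_x^k v_i$, which again solves a quasimonotone system with bounded coefficients and forcing built from lower-order $\partial_x^{j}v_j$'s, so by induction $\sup_{\lambda\le1}\|v^\lambda_i\|_{C^{1,k}_b}<\infty$ for every $k$ (for $\partial_t$ one uses $\cL^i v_i=c\delta_i$ together with $\tfrac{c}{N}\sum_j\delta_j=\tfrac{\lambda c^2}{N}a_0$ and $\lambda|c'|\le\gamma$). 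Feeding these bounds into the representation $\delta^\lambda_i(t,x)=E^i_{t,x}\!\big[\int_t^T\tfrac{G(s)}{G(t)}\cL^i v^\lambda(s,X_s)\,ds\big]$ and using $\int_t^T\tfrac{G(s)}{G(t)}\,ds=\tfrac{\tanh(\beta(T-t))}{\beta}\le\tfrac1\beta$ gives $\|\delta^\lambda_i\|_{C^{1,k}_b}=O(1/\beta)=O(\sqrt\lambda)$. One refinement, based on $\cL^i\!\big(\tfrac{c}{\beta^2}\phi\big)=\tfrac{c}{\beta^2}\cL^i\phi-\mathrm{sech}^2(\beta(T-t))\,\phi$ and $\mathrm{sech}^2+\tanh^2=1$, shows that $\tilde E^\lambda_i:=\delta^\lambda_i-\tfrac{c}{\beta^2}\cL^i v^\lambda$ solves $\cL^i\tilde E^\lambda_i-c\tilde E^\lambda_i=-\tfrac{c}{\beta^2}(\cL^i)^2 v^\lambda$ with zero terminal value (note $c(T)=0$), whence $\|\tilde E^\lambda_i\|_{C^{k}_b}=O(1/\beta^2)=O(\lambda)$ and $\|\cL^i\tilde E^\lambda_i\|_{L^1([0,T])}=o(\sqrt\lambda)$, its residual $O(\sqrt\lambda)$ part being confined to the $O(1/\beta)$-wide layer at $t=T$.

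\emph{Step 2 (candidate and residual).} Define the approximate agent prices
\[
\tilde v^\lambda_i:=v^0+\sqrt\lambda\,v^{*}+\frac{\tanh(\beta(T-t))}{\beta}\,\cL^i v^0 .
\]
Since $\tanh$ vanishes at $t=T$ and $v^{*}(T,\cdot)=0$, each $\tilde v^\lambda_i$ has the correct terminal value $f$; and since $\tfrac1N\sum_i\cL^i v^0=\bar\cL v^0=\tfrac{\gamma a_0}{N}$ by~\eqref{eq:PDEv0}, the weight $\tfrac{\tanh(\beta(T-t))}{\beta}\cdot\tfrac{\gamma a_0}{N}$ exactly cancels the supply term $\tfrac{\lambda G'}{NG}a_0$, so that $\tilde v^\lambda:=\tfrac1N\sum_i\tilde v^\lambda_i+\tfrac{\lambda G'}{NG}a_0=v^0+\sqrt\lambda\,v^{*}$. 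A direct computation — using~\eqref{eq:PDEv0}, the identity $\cL^i\hat\phi^{i,0}=\tfrac1\gamma(\cL^i)^2v^0$ so that~\eqref{eq:ve} is equivalent to $\bar\cL v^{*}=-\tfrac1{N\sqrt\gamma}\sum_i(\cL^i)^2v^0$ with $v^{*}(T,\cdot)=0$, and again $\mathrm{sech}^2+\tanh^2=1$ — yields the residual of $\tilde v^\lambda_i$ in~\eqref{eq:PDEvi}:
\[
\cL^i\tilde v^\lambda_i+\tfrac{G'}{G}\big(\tilde v^\lambda_i-\tilde v^\lambda\big)
=\sqrt\lambda\Big(\cL^i v^{*}+\tfrac{\tanh(\beta(T-t))}{\sqrt\gamma}(\cL^i)^2 v^0\Big)=:\mathrm{Res}^\lambda_i .
\]
Thus $\|\mathrm{Res}^\lambda_i\|_\infty=O(\sqrt\lambda)$, but crucially $\tfrac1N\sum_i\mathrm{Res}^\lambda_i=-\sqrt\lambda\,\tfrac{1-\tanh(\beta(T-t))}{N\sqrt\gamma}\sum_i(\cL^i)^2 v^0$ is supported in the layer and has $L^1([0,T])$-norm $O(\lambda)=o(\sqrt\lambda)$.

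\emph{Step 3 (conclusion) and the main obstacle.} The error $e^\lambda_i:=v^\lambda_i-\tilde v^\lambda_i$ solves the quasimonotone system of Step 1 with forcing $-\mathrm{Res}^\lambda_i$ and zero terminal data, and $e^\lambda:=\tfrac1N\sum_i e^\lambda_i=v^\lambda-(v^0+\sqrt\lambda v^{*})$ is precisely the quantity in~\eqref{eq:singular}. Averaging and using $\sum_i\mathcal D^i=0$, $\bar\cL e^\lambda=-\tfrac1N\sum_i\mathrm{Res}^\lambda_i-\tfrac1N\sum_i\mathcal D^i(e^\lambda_i-e^\lambda)$ with $e^\lambda(T,\cdot)=0$, so $\|e^\lambda\|_\infty\le\|\tfrac1N\sum_i\mathrm{Res}^\lambda_i\|_{L^1}+\|\tfrac1N\sum_i\mathcal D^i(e^\lambda_i-e^\lambda)\|_{L^1}$; the first term is $o(\sqrt\lambda)$ by Step 2. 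For the second, the large damping gives $e^\lambda_i-e^\lambda=-\tfrac1c\big(\cL^i e^\lambda_i+\mathrm{Res}^\lambda_i\big)$; by Step 1 the numerator is $o(\sqrt\lambda)$ in $C^2_x$ locally uniformly for $t\in[0,T)$ (there $\cL^i v^\lambda_i=c\delta^\lambda_i$ agrees with $\cL^i\tilde v^\lambda_i$ through order $1/\beta$), and it vanishes together with its spatial derivatives at $t=T$ (a matching/consistency check on $\tilde v^\lambda_i$), so $\mathcal D^i(e^\lambda_i-e^\lambda)$ has $L^1_t$-norm $o(\sqrt\lambda)$ locally uniformly; hence $v^\lambda-(v^0+\sqrt\lambda v^{*})=o(\sqrt\lambda)$ locally uniformly on $[0,T]\times\R$, which is~\eqref{eq:singular}. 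The genuinely new point is Step 1: the fixed-point and Schauder estimates of Section~\ref{se:equilibrium} have constants that blow up as $\lambda\to0$ because the coupling coefficient $G'/G\sim-\beta$ diverges, so they cannot simply be quoted — the quasimonotone structure, available precisely because the time-weight $G$ is common to all agents, is what makes the bounds $\lambda$-uniform. The remaining subtlety is that the boundary layer at $t=T$ contributes at the \emph{same} order $\sqrt\lambda$ as the bulk, which is why the layer term $\tfrac{\tanh(\beta(T-t))}{\beta}\cL^i v^0$ must be carried along in $\tilde v^\lambda_i$ and why one has to verify that it cancels in the averaged price, leaving exactly $\sqrt\lambda v^{*}$.
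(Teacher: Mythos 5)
Your Steps 1--2 are sound and genuinely different from the paper's route: the quasimonotone comparison argument replaces the paper's Feynman--Kac/Gr\"onwall bounds (Lemmas~\ref{lemma:UniformBound}--\ref{lemma:BoundforDerivativesofv}), and the explicit corrector $\tilde v^\lambda_i=v^0+\sqrt\lambda v^{*}+\tfrac{\tanh(\beta(T-t))}{\beta}\cL^i v^0$ with the residual computation and the cancellation of the supply term both check out (the paper instead computes the exact limits of $(v^\lambda_i-v^\lambda)/\sqrt\lambda$ by a Laplace-type lemma and passes to the limit in a Feynman--Kac representation of $v^\lambda-v^0$). The genuine gap is in Step 3, where the whole estimate must close. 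You need $\mathcal D^i(e^\lambda_i-e^\lambda)=o(\sqrt\lambda)$ in $L^1_t$, and you justify it by writing $e^\lambda_i-e^\lambda=\tfrac1c(\cL^i e^\lambda_i+\mathrm{Res}^\lambda_i)$ and asserting that the numerator is $o(\sqrt\lambda)$ ``by Step 1''. But the numerator equals $c(e^\lambda_i-e^\lambda)=c\,\tilde E^\lambda_i+\tanh^2(\beta(T-t))\,\cL^i(v^\lambda-v^0)$, and Step 1 only gives $\tilde E^\lambda_i=O(\lambda)$, so $c\,\tilde E^\lambda_i=O(\beta\lambda)=O(\sqrt{\gamma\lambda})$ --- exactly order $\sqrt\lambda$, not $o(\sqrt\lambda)$. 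For the same reason your claim $\|\cL^i\tilde E^\lambda_i\|_{L^1([0,T])}=o(\sqrt\lambda)$ is false as stated: the term $\tfrac{c}{\beta^2}(\cL^i)^2v^\lambda$ is of size $1/\beta$ throughout the bulk, not only in the layer. Moreover $1/c\sim(\beta^2(T-t))^{-1}$ is not integrable at $t=T$, so the division-by-$c$ route genuinely requires a layer analysis that the parenthetical ``matching/consistency check'' does not supply.

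What is missing is precisely the ingredient the paper supplies in Proposition~\ref{proposition:ZeroOrderConvergenceforSmallTransactionCost}: a $\lambda$-uniform bound $\partial_x^k(v^\lambda-v^0)=O(\sqrt\lambda)$ and $\partial_t\partial_x^k(v^\lambda-v^0)=O(\sqrt\lambda)$. With it, Step 3 closes without dividing by $c$ at all, since $e^\lambda_i-e^\lambda=\delta^\lambda_i-\tfrac{\tanh(\beta(T-t))}{\beta}\cL^i v^0=\tilde E^\lambda_i+\tfrac{\tanh(\beta(T-t))}{\beta}\,\cL^i(v^\lambda-v^0)=O(\lambda)$ in $C^2_x$, hence $\mathcal D^i(e^\lambda_i-e^\lambda)=O(\lambda)=o(\sqrt\lambda)$ even in $L^\infty_t$. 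This bound is obtainable within your own framework --- e.g.\ apply the Step 1 comparison machinery to the error system for $e^\lambda_i$ (whose forcing $\mathrm{Res}^\lambda_i$ is $O(\sqrt\lambda)$ with zero terminal data) to get the crude $O(\sqrt\lambda)$ bound and then bootstrap --- or as in the paper via the Feynman--Kac representation of $v^\lambda-v^0$ combined with the $O(\sqrt\lambda)$ bounds on $v^\lambda_i-v^\lambda$ and its derivatives; but as written the decisive $o(\sqrt\lambda)$ estimate is asserted, not proved. A secondary (fillable) lacuna: the bound $\tilde E^\lambda_i=O(\lambda)$ requires $(\cL^i)^2v^\lambda$ to be bounded uniformly in $\lambda$, which involves repeated time derivatives and needs an extra argument (e.g.\ the identity $\cL^i v^\lambda=\tfrac{\gamma a_0}{N}+\tfrac1N\sum_j(\cL^i-\cL^j)v^\lambda_j$, which reduces it to the spatial-derivative bounds), not just the $C^{1,k}_b$ bounds you state.
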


The singular perturbation expansion~\eqref{eq:singular} shows that the leading-order deviation of the frictional equilibrium price $v^\lambda$ from its frictionless counterpart $v^0$ scales with the square root $\sqrt{\lambda}$ of the trading cost, as in the risk-sharing equilibrium of~\cite{herdegen.al.19}. 
With the heterogeneous beliefs considered in the present study, the constant of proportionality~\eqref{eq:ve} is determined by the integrated drift rates $\int_t^T \mathcal{L}^i \hat\phi^{i,0}(s,X_{s})ds$ of the agents' frictionless equilibrium portfolios, averaged with respect to agents and states. Thus, equilibrium prices increase relative to their frictionless counterparts if agents on average expect to increase their positions in the future, and vice versa.\footnote{
Note that while the actual future portfolio changes add up to zero by market clearing, this is not necessarily true for the changes as anticipated by the heterogeneous agents under their subjective probability measures, $\mathcal{L}^i \hat\phi^{i,0}$. In the formula for $v^{*}$, these anticipated changes are averaged across all states under the probability measure $\bar{Q}$ corresponding to the frictionless representative agent.}
The interpretation is that in illiquid markets, agents take into account their future trading needs to save cumulative transaction costs over the whole time interval. Accordingly, expectations of future purchases already lead to increased positions earlier on, and vice versa. To clear the market, equilibrium prices increase or decrease according to the excess demand or supply created by the aggregated adjustments of all agents.

\subsection{Proof of Theorem~\ref{theorem:TransactionCostExpansion}}

The first step towards the proof of Theorem~\ref{theorem:TransactionCostExpansion} is to show that the functions $v_i^\lambda$ from Theorem~\ref{th:equilibriumPDE} are not just bounded for each $\lambda$, but that this bound is in fact uniform for $\lambda \in (0,\infty)$. In view of the PDEs~\eqref{eq:PDEvi} from Theorem~\ref{th:equilibriumPDE} and as $\frac{\lambda G^\prime(t)^2}{NG(t)^2}a_0$ is uniformly bounded for all $\lambda >0$ and $t\in [0,T]$ by the definition of~$G$, this is a special case of the following more general result that will also allow us to derive estimates for small holding costs in the subsequent section.

\begin{lemma}
\label{lemma:UniformBound}
For $i=1,\dots,N$ and an arbitrary parameter $\varepsilon \in \mathscr{E}$, consider 
functions $\alpha_i$, $\beta_i$, $a_i^\varepsilon$,  $b_i^\varepsilon$, $h_i \in C^{\infty}_{b}([0,T] \times \mathbb{R})$ and write
$$\mathcal{L}^i=\partial_t+\frac{1}{2}\beta_i^2\partial_{xx}+\alpha_i\partial_x.$$
Suppose that $a_i^\varepsilon, b_i^\varepsilon$ are bounded uniformly in $\varepsilon\in \mathscr{E}$ and let $u_i=u_i(\eps,\lambda,\gamma)$, $i=1,\ldots,N$ denote the unique classical bounded solution of the 
system
$$\mathcal{L}^iu_i+\Big(\frac{G^\prime}{G}+a_i^\varepsilon\Big)u_i-\frac{G^\prime}{G}\frac{1}{N}
\sum_{j=1}^Nu_j+b_i^\varepsilon=0, \quad u_i(T,\cdot)=h_i, \quad i=1,\ldots,N.
$$ 
Then, $|u_i(t,x)|\leq M$ for a constant $M>0$ independent of $\varepsilon, \lambda, \gamma \in (0,\infty)$ and $(t,x)\in[0,T]\times\mathbb{R}$.

\end{lemma}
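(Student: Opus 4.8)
The plan is to obtain the uniform bound via a probabilistic (Feynman--Kac) representation of the solution combined with a Gronwall-type argument that exploits the special structure of the coupling coefficient $G'/G$. First I would rewrite the system in Feynman--Kac form: since the operator acting on $u_i$ is $\mathcal{L}^i + (G'/G + a_i^\eps)\cdot$, the function $w_i(t,x) := G(t)\, u_i(t,x)$ satisfies $\mathcal{L}^i w_i + a_i^\eps w_i - \frac{G'}{G}\,\frac{G}{N}\sum_j u_j + G\,b_i^\eps = 0$, i.e. $\mathcal{L}^i w_i + a_i^\eps w_i = \frac{G'}{N}\sum_j u_j - G\,b_i^\eps$ with terminal value $w_i(T,\cdot)=h_i$ (using $G(T)=1$). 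Feynman--Kac then gives, under $Q_i$,
\begin{equation*}
u_i(t,x) = \frac{1}{G(t)}\,E^i_{t,x}\!\left[ e^{\int_t^T a_i^\eps(s,X_s)\,ds}\, h_i(X_T) - \int_t^T e^{\int_t^s a_i^\eps(r,X_r)\,dr}\Big( G(s)\,b_i^\eps(s,X_s) - \frac{G'(s)}{N}\sum_{j=1}^N u_j(s,X_s)\Big) ds \right].
\end{equation*}
Set $m(t) := \max_{1\le i\le N}\sup_x |u_i(t,x)|$ (finite for each $t$ since the $u_i$ are bounded). Using $\|a_i^\eps\|_\infty, \|b_i^\eps\|_\infty, \|h_i\|_\infty$ bounded uniformly in $\eps$ by a constant $c$, and the elementary bounds $G(s)/G(t) \le 1$ for $s\ge t$ and $|G'(s)|/G(t) \le |G'(s)|/G(s) = \sqrt{\gamma/\lambda}\,\tanh(\sqrt{\gamma/\lambda}(T-s)) \le \sqrt{\gamma/\lambda}$, one gets an estimate of the shape
\begin{equation*}
m(t) \le C_1 + C_2 \int_t^T \sqrt{\tfrac{\gamma}{\lambda}}\, m(s)\, ds
\end{equation*}
with $C_1, C_2$ depending only on $c$, $T$ and the Lipschitz/ellipticity constants of the $\mathcal{L}^i$ — but crucially \emph{not} on $\eps,\lambda,\gamma$. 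A naive Gronwall here yields $m(t) \le C_1 \exp(C_2\sqrt{\gamma/\lambda}(T-t))$, which blows up as $\gamma/\lambda\to\infty$; so the naive bound is not enough, and this is the main obstacle.

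The key to closing the gap is that the factor $G'(s)$ in the coupling term is \emph{not} multiplied by $G(t)/G(s)$ but effectively by $1/G(t)$ after the normalization, and more importantly that the bad term $-\frac{G'}{G}\frac{1}{N}\sum_j u_j$ is exactly a convex average: the diagonal term $\frac{G'}{G} u_i$ and the averaged term $-\frac{G'}{G}\frac{1}{N}\sum_j u_j$ partially cancel, because $G'\le 0$. Rather than bounding each $u_i$ separately, I would instead bound the \emph{average} $\bar u := \frac{1}{N}\sum_j u_j$ together with the deviations $u_i - \bar u$. Summing the $N$ equations, the coupling term $\frac{G'}{G}(u_i - \bar u)$ averages to zero, so $\bar u$ solves a system-free scalar-type estimate $\mathcal{L}^i$-wise only through the (uniformly bounded) $a_i^\eps, b_i^\eps$, giving a bound on $\bar u$ uniform in all parameters. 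Then writing the equation for $u_i$ with $\bar u$ treated as a known (uniformly bounded) inhomogeneity, the coupling term becomes $\frac{G'}{G}u_i - \frac{G'}{G}\bar u$; absorbing $\frac{G'}{G}u_i$ into the operator (it has a sign: $G'/G\le 0$, so by the maximum principle it only helps), the Feynman--Kac representation for $u_i$ now has discount factor $\exp(\int_t^s (a_i^\eps + G'/G))\le e^{cT}$ and the source term $\frac{G'}{G}\bar u$ contributes $\int_t^T |G'(s)|/G(s)\,\|\bar u\|_\infty\,\exp(\cdots)\,ds$; since $\int_t^T |G'(s)|/G(s)\,ds = \log(G(t)/G(T)) = \log G(t) = \log\cosh(\sqrt{\gamma/\lambda}(T-t))$, this still grows, so even this refinement needs care.

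The cleanest route, which I would ultimately pursue, is to keep the discount factor $\exp(\int_t^s \frac{G'(r)}{G(r)}dr) = G(s)/G(t)$ \emph{inside} the Feynman--Kac formula instead of bounding it by $1$: then in the representation of $u_i$ the troublesome source term $\frac{G'(s)}{G(s)}\bar u(s,X_s)$ appears weighted by $G(s)/G(t)$, producing $\int_t^T \frac{G(s)}{G(t)}\frac{|G'(s)|}{G(s)}ds = \frac{1}{G(t)}\int_t^T |G'(s)|\,ds = \frac{G(t)-G(T)}{G(t)} = 1 - 1/G(t) \le 1$, a bound uniform in $\gamma,\lambda$. Concretely: $w_i = G u_i$ satisfies (after moving $\frac{G'}{G}u_i$ back to the left and recognizing it combines with $\partial_t$) the equation $\partial_t w_i + \frac12\beta_i^2\partial_{xx}w_i + \alpha_i\partial_x w_i + a_i^\eps w_i - \frac{G'}{N}\bar u + G b_i^\eps = 0$ where now $-\frac{G'}{N}\bar u = -\frac{G'}{G}\cdot\frac{G\bar u}{N}$, wait — the point is to treat $\bar w := G\bar u$; then $-\frac{G'}{N}\bar u = -\frac{G'}{NG}\bar w$. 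Having already bounded $\|\bar u\|_\infty \le M_0$ uniformly (from the summed equation), we have $|{-\tfrac{G'}{N}\bar u}| \le \tfrac{|G'|}{N}M_0$, and Feynman--Kac for $w_i$ gives $|w_i(t,x)| \le e^{cT}\big(c + \int_t^T (G(s)c + \tfrac{|G'(s)|}{N}M_0)\,ds\big)$. Dividing by $G(t)\ge 1$ and using $\int_t^T G(s)\,ds \le \sqrt{\lambda/\gamma}\,\sinh(\sqrt{\gamma/\lambda}T) \le ?$ — this last integral is $\le G(t)\cdot(T-t)$ is false; rather $\int_t^T G(s)ds/G(t) \le T$ since $G(s)\le G(t)$ for $s\ge t$, and $\int_t^T |G'(s)|ds/G(t) = (G(t)-1)/G(t)\le 1$. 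Hence $|u_i(t,x)| = |w_i(t,x)|/G(t) \le e^{cT}(c + cT + M_0/N) =: M$, uniform in $\eps,\lambda,\gamma,(t,x)$, which is the claim. I expect the write-up to require care only in justifying the Feynman--Kac representations (which is routine given $u_i\in C^{1,2}_b$ and smooth bounded coefficients) and in the preliminary bound on $\bar u$, where the same argument applies with the coupling term absent.
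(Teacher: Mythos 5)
Your second half is fine as conditional reasoning: \emph{if} a bound $\|\bar u\|_\infty \le M_0$ uniform in $(\varepsilon,\lambda,\gamma)$ were available, then treating $\frac{G'}{G}\bar u$ as a known source and keeping the full discount $\frac{G(s)}{G(t)}e^{\int_t^s a_i^\varepsilon}$ inside the Feynman--Kac formula does give $\int_t^T \frac{G(s)}{G(t)}\,ds\le T$ and $\int_t^T \frac{|G'(s)|}{G(t)}\,ds = 1-\frac{1}{G(t)}\le 1$, hence a uniform bound on each $u_i$. The genuine gap is the first step. Summing the $N$ equations does cancel the coupling terms, but it does \emph{not} produce a closed scalar equation for $\bar u$: the operators $\mathcal{L}^i$ have different coefficients $\alpha_i,\beta_i$, so $\frac1N\sum_i \mathcal{L}^i u_i$ is not some averaged operator applied to $\bar u$, and likewise $\frac1N\sum_i a_i^\varepsilon u_i$ is not a function of $\bar u$ (in the paper's application to the differentiated system, both the $\mathcal{L}^i$ and the zeroth-order coefficients $a_i^\varepsilon=\partial_x b_i$ genuinely differ across agents). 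Consequently no maximum principle or Feynman--Kac representation applies to $\bar u$ alone, and the constant $M_0$ on which your final estimate rests is never established --- and this is precisely where the difficulty of the lemma sits, since the coupling coefficient $G'/G$ has size $\sqrt{\gamma/\lambda}$, which is unbounded in the regime of interest. Note also that the natural fallback, iterating the Feynman--Kac representation for the whole vector with kernel $-G'(s)/G(t)$, does not obviously close either: that kernel has total mass up to $1$ and is multiplied by $e^{cT}\ge 1$, so the resulting Neumann series need not converge.

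For comparison, the paper closes the loop without decoupling: it writes the Feynman--Kac representation for each $u_i$ with discount $\frac{G(s)}{G(t)}e^{\int a_i^\varepsilon}$, takes the supremum $K(t)$ over agents, parameters and paths, multiplies by $G(t)$ and applies Gr\"onwall with kernel $-G'/G$; the exponential factor $G(t)/G(s)$ produced by Gr\"onwall is then rendered harmless via the identities $\int_s^T G(r)\,dr = -\frac{\lambda}{\gamma}G'(s)$ and $\frac{\lambda}{\gamma}\big(G'/G\big)^2 = \tanh^2\big(\sqrt{\gamma/\lambda}(T-\cdot)\big) \le 1$, yielding $K(t)\le 2M(T+1)$. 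Some argument of this type --- or another valid way to control the coupled term uniformly in $\gamma/\lambda$ --- is what your proposal is missing.
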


\begin{proof}
Existence and uniqueness of the $u_i$ is a special case of~\cite[Theorem~2.4]{BechererSchweizer.05}. Because these functions are bounded, the Feynman--Kac formula as in \cite[Theorem 5.7.6]{KaratzasShreve.91} as well as $G'/G=(\log G)'$ and $G(T)=1$ give
\begin{align*}
e^{\int_0^ta_i^\varepsilon d\tau}u_i(t,x)&=E^i_{t,x}\Bigg[\int_t^T-\frac{G'(s)}{G(t)}\frac{1}{N}\sum_{j=1}^Ne^{\int_0^sa_i^\varepsilon d\tau}u_j(s,X_s)ds\\
&\qquad\qquad +\int_t^Te^{\int_0^sa_i^\varepsilon d\tau}\frac{G(s)}{G(t)}b_i^\varepsilon ds+e^{\int_0^Ta_i^\varepsilon d\tau}\frac{1}{G(t)}h_i(X_T)\Bigg]
\end{align*}
where the expectation is taken under the measure for which the state variable has dynamics $dX_t=\alpha_i(t,X_t)dt+\beta_i(t,X_t)dW^i_t$. Choose a uniform bound $M$ for $\big|e^{\int_0^sa_i^\varepsilon d\tau}b_i^\varepsilon\big|$ and $\big|e^{\int_0^Ta_i^\varepsilon d\tau}h_i\big|$, and define 
\begin{align*}
K(t,\lambda, \gamma)=\max\left\{|e^{\int_0^ta_i^\varepsilon(\tau,x_\tau) d\tau}u_i(t,x_t)|\right\}<\infty,
\end{align*}
where the maximum is taken over  $i \in \{1,\ldots,N\}$, $\varepsilon \in \mathscr{E}$, and $(x_\tau)_{\tau \in [0,t]} \in C_0([0,t],\mathbb{R})$. With this notation,
\begin{align*}
|e^{\int_0^ta_i^\rho d\tau}u_i(t,x)|\leq \int_t^T -\frac{G^\prime(s)}{G(t)}K(s,\lambda, \gamma)ds+M\int_t^T\frac{G(s)}{G(t)}ds+\frac{M}{G(t)},
\end{align*}
which in turn leads to
$$G(t)K(t,\lambda, \gamma)\leq \int_t^T\Big(-\frac{G^\prime(s)}{G(s)}\Big)G(s)K(s,\lambda, \gamma)ds+M\int_t^TG(s)ds+M.$$
We may read this as an inequality of the form $u(t)\leq \int_t^T B(s)u(s)ds + A(t)$ for $u(t)=G(t)K(t,\lambda, \gamma)$.
Using $G'/G=(\log G)'$ and that $G$ is decreasing, $\int_t^T G(r)dr\leq G(t)T$ and Gr\"onwall's lemma yield
\begin{align}
G(t)K(t,\lambda,\gamma) \leq M(G(t)T+1)-MG(t)\int_t^T\Big(\int_s^TG(r)dr+1\Big)\frac{G^\prime(s)}{G(s)^2}ds.\label{ineq:GKlambda}
\end{align}
Observe that $G$ satisfies
$G=\frac{\lambda}{\gamma}G''$ and $G'(T)=0$ and $\frac{\lambda}{\gamma}\frac{(G')^2}{G^2} \leq 1$, so that
\begin{align*}
-\int_t^T\Big(\int_s^TG(r)dr\Big)\frac{G^\prime(s)}{G(s)^2}ds&=\int_t^T\frac{\lambda}{\gamma}\frac{G^\prime(s)^2}{G(s)^2}ds\leq T-t\leq T.
\end{align*}
Together with
\begin{align*}
-\int_t^T\frac{G^\prime(s)}{G(s)^2}ds=1-\frac{1}{G(t)}\leq 1,
\end{align*}
it follows from \eqref{ineq:GKlambda} and $G(t) \geq 1$ that $K(t,\lambda,\gamma)\leq 2M(T+1)$. As $a^\varepsilon$ is uniformly bounded in $\varepsilon, t,x$, the function $u_i$ is therefore uniformly bounded in $\varepsilon,\gamma,\lambda,t,x$ by the definition of $K(t,\lambda,\gamma)$.
\end{proof}

\begin{corollary}
\label{lemma:Boundforv}
There exists $M>0$ such that $|v^\lambda_i(t,x)|\leq M$ for all $\lambda>0$ and $(t,x)\in [0,T]\times\mathbb{R}$.
\end{corollary}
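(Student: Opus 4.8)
The plan is to read this off directly from Lemma~\ref{lemma:UniformBound}, since the corollary is precisely the instance of that lemma corresponding to the equilibrium system of Theorem~\ref{th:equilibriumPDE}. First I would substitute the definition~\eqref{eq:PDEv} of $v$ into the PDE~\eqref{eq:PDEvi} to rewrite the equations for $v^\lambda_1,\dots,v^\lambda_N$ in the self-contained form
\begin{equation*}
\mathcal{L}^i v^\lambda_i + \frac{G'}{G}v^\lambda_i - \frac{G'}{G}\frac{1}{N}\sum_{j=1}^N v^\lambda_j - \frac{\lambda (G')^2}{N G^2}\,a_0 = 0,\qquad v^\lambda_i(T,\cdot)=f,\quad 1\le i\le N,
\end{equation*}
with $\mathcal{L}^i=\partial_t+\tfrac12\sigma_i^2\partial_{xx}+b_i\partial_x$. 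This matches the system in Lemma~\ref{lemma:UniformBound} upon taking the index set $\mathscr{E}$ to be a singleton and choosing $\alpha_i=b_i$, $\beta_i=\sigma_i$, $h_i=f$, $a_i^\varepsilon\equiv 0$, and $b_i^\varepsilon\equiv -\frac{\lambda (G')^2}{N G^2}a_0$; all of these lie in $C^\infty_b([0,T]\times\mathbb{R})$ by the standing smoothness assumptions of this section.

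Next I would check the uniform-boundedness hypotheses of Lemma~\ref{lemma:UniformBound}. The functions $a_i^\varepsilon\equiv 0$ and $h_i=f\in C^\infty_b$ are trivially bounded, uniformly in $\lambda$. For $b_i^\varepsilon$, recall $G(t)=\cosh(\sqrt{\gamma/\lambda}(T-t))$, so $G'(t)=-\sqrt{\gamma/\lambda}\sinh(\sqrt{\gamma/\lambda}(T-t))$ and hence
\begin{equation*}
\frac{\lambda G'(t)^2}{G(t)^2}=\gamma\,\tanh^2\!\Big(\sqrt{\tfrac{\gamma}{\lambda}}(T-t)\Big)\le\gamma,
\end{equation*}
so that $|b_i^\varepsilon(t,x)|\le \gamma a_0/N$ for all $\lambda>0$ and $(t,x)$. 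Thus $b_i^\varepsilon$ is bounded uniformly in the parameter and in $\lambda$, and Lemma~\ref{lemma:UniformBound} yields a constant $M>0$, independent of $\lambda\in(0,\infty)$ and of $(t,x)\in[0,T]\times\mathbb{R}$, with $|v^\lambda_i(t,x)|\le M$.

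There is essentially no obstacle here: the only substantive point is recognizing the correct translation of the equilibrium system into the abstract template and noting the elementary bound $\lambda(G')^2/G^2\le\gamma$; the work has all been done in Lemma~\ref{lemma:UniformBound}. (One should take care to state that $\gamma>0$ is held fixed in this section, so uniformity is only claimed in $\lambda$, which is all that is needed for the singular perturbation analysis that follows.)
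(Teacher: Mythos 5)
Your proof is correct and is exactly the paper's argument: the corollary is obtained as a special case of Lemma~\ref{lemma:UniformBound} by substituting \eqref{eq:PDEv} into \eqref{eq:PDEvi} (so $a_i^\varepsilon\equiv 0$, $h_i=f$, and the inhomogeneity is $-\lambda G'(t)^2 a_0/(NG(t)^2)$), together with the observation that $\lambda G'(t)^2/G(t)^2=\gamma\tanh^2\bigl(\sqrt{\gamma/\lambda}\,(T-t)\bigr)\le\gamma$ uniformly in $\lambda>0$. The only cosmetic point is that, since the inhomogeneity varies with $\lambda$, it is more natural to take $\mathscr{E}=(0,\infty)$ indexed by $\varepsilon=\lambda$ (for fixed $\gamma$) rather than a singleton, but your explicit $\lambda$-uniform bound $|b_i^\varepsilon|\le\gamma a_0/N$ makes this immaterial.
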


Next, we establish an analogous uniform bound for the derivatives of the functions $v^\lambda_i$ and $v^\lambda$ from Theorem~\ref{theorem:TransactionCostExpansion}.

\begin{lemma}
\label{lemma:BoundforDerivativesofv}
Fix $k\geq0$. There exists $M>0$ such that 
$$
|\partial_x^kv_i^\lambda(t,x)|, \, |\partial_x^kv^\lambda(t,x)|, \, |\partial_t\partial_x^kv^\lambda(t,x)|\leq M
$$
for all $\lambda>0$ and $(t,x)\in [0,T]\times\mathbb{R}$.
\end{lemma}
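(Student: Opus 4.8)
The plan is to bootstrap from the uniform $L^\infty$ bound in Corollary~\ref{lemma:Boundforv} to uniform bounds on all spatial derivatives (and then one time derivative) by differentiating the PDE system and iterating a Lipschitz/H\"older estimate exactly as in the proof of Proposition~\ref{pr:PDEsol}, being careful to track that every constant produced is independent of $\lambda$. The key observation is that the coupling coefficient $G'(t)/G(t) = -\sqrt{\gamma/\lambda}\tanh(\sqrt{\gamma/\lambda}(T-t))$ is bounded by $\sqrt{\gamma/\lambda}$, which blows up as $\lambda\to0$; so one cannot simply invoke Schauder estimates with a $\lambda$-dependent coefficient and hope the constant is uniform. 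Instead I would reuse the integral (Feynman--Kac) representation from the proof of Lemma~\ref{lemma:UniformBound}: the combination $G(t)v_i^\lambda(t,x)$ (or its derivatives) satisfies a scalar parabolic equation \emph{without} the singular $G'/G$ coefficient, with source term involving $G'(u)$ times lower-order quantities, and the delicate $\lambda$-dependence is absorbed by the same Gr\"onwall-type estimates already established (in particular $\frac{\lambda}{\gamma}(G'/G)^2\le1$ and $\int_t^T G(r)dr \le G(t)T$).

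Concretely, I would argue by induction on $k$. For the base case $k=0$, Corollary~\ref{lemma:Boundforv} gives the bound on $v_i^\lambda$ and hence on $v^\lambda = \frac1N\sum v_i^\lambda + \frac{\lambda G'}{NG}a_0$ (the last term is uniformly bounded since $|\lambda G'(t)/G(t)| \le \sqrt{\lambda\gamma}\tanh(\cdot)$... more carefully, $\lambda|G'/G| = \sqrt{\lambda\gamma}\,|\tanh(\sqrt{\gamma/\lambda}(T-t))| \le \sqrt{\lambda\gamma}$ which is bounded for $\lambda$ in a bounded set; for large $\lambda$ one uses instead that it equals $\sqrt{\gamma}\sqrt{\lambda}\tanh$ and re-examines — but since the statement is ``for all $\lambda>0$'' one must check both regimes, and in fact $\lambda |G'(t)/G(t)|\le \lambda\sqrt{\gamma/\lambda} = \sqrt{\lambda\gamma}$ is genuinely unbounded as $\lambda\to\infty$; so the cleaner route is to note the term is $\frac{\lambda}{N}\partial_t \log G(t)\cdot(-1)$... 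I would instead simply observe $0\le -\lambda G'(t)/G(t) = \sqrt{\lambda\gamma}\tanh(\sqrt{\gamma/\lambda}(T-t)) \le \gamma(T-t) \le \gamma T$ using $\tanh(y)\le y$, which is uniform). For the inductive step, differentiate~\eqref{eq:PDEvi} $k$ times in $x$; setting $\varphi_i = \partial_x^k v_i^\lambda$, one gets a system $\mathcal L^i\varphi_i + \frac{G'}{G}(\varphi_i - \frac1N\sum_j\varphi_j) + (\text{terms in } \partial_x^{\le k}v_j^\lambda \text{ with smooth bounded coefficients}) = 0$, with terminal data $\partial_x^k f$. This has exactly the form covered by Lemma~\ref{lemma:UniformBound} (with $a_i^\eps=0$ and $b_i^\eps$ collecting the lower-order terms, which are uniformly bounded by the induction hypothesis), so Lemma~\ref{lemma:UniformBound} directly yields the uniform bound on $\partial_x^k v_i^\lambda$, hence on $\partial_x^k v^\lambda$ (the extra term $\partial_x^k(\lambda G'/(NG)a_0)$ vanishes for $k\ge1$). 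Finally, the bound on $\partial_t\partial_x^k v^\lambda$ follows by solving~\eqref{eq:PDEvi} (differentiated $k$ times) algebraically for $\partial_t\partial_x^k v_i^\lambda$: it equals $-\frac12\beta_i^2\partial_{xx}\partial_x^k v_i^\lambda - \alpha_i\partial_x\partial_x^k v_i^\lambda - \frac{G'}{G}(\partial_x^k v_i^\lambda - \partial_x^k v^\lambda) - (\text{l.o.t.})$, and each term on the right is uniformly bounded — crucially $\frac{G'}{G}(\partial_x^k v_i^\lambda - \partial_x^k v^\lambda)$ is bounded because the two bracketed functions agree in the limit and their difference carries a compensating factor; more precisely from~\eqref{eq:PDEvi} itself this product is bounded once we know $\partial_x^{k+2}v_i^\lambda, \partial_x^{k+1}v_i^\lambda$ are (by applying the inductive machinery two more levels).

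The main obstacle is precisely this last point: controlling $\frac{G'(t)}{G(t)}(\partial_x^k v_i^\lambda - \partial_x^k v^\lambda)$ uniformly in $\lambda$, since $G'/G$ blows up like $\lambda^{-1/2}$. One cannot get this from the $L^\infty$ bounds on the individual functions alone. The resolution is to \emph{not} estimate this product term by term but to keep it together: from~\eqref{eq:PDEvi}, $\frac{G'}{G}(v_i^\lambda - v^\lambda) = -\mathcal L^i v_i^\lambda$, and $\mathcal L^i v_i^\lambda = \partial_t v_i^\lambda + \frac12\tr\sigma_i^2\partial_{xx}v_i^\lambda + b_i\partial_x v_i^\lambda$; so it suffices to bound $\partial_t v_i^\lambda$ uniformly, which I would do via the scalar Schauder estimate applied to $w_i = G v_i^\lambda$ (whose equation $\mathcal L^i w_i = G' v^\lambda$ has coefficients and source bounded uniformly in $\lambda$ — here $G' = -\sqrt{\gamma\lambda}\tanh$ is bounded by $\gamma T$ as above, and the operator has no $G'/G$ term), giving a uniform $C^{1+\alpha/2,2+\alpha}$ bound for $w_i$ and hence for $v_i^\lambda = w_i/G$ together with $\partial_t v_i^\lambda$, since $G\ge1$ and $|G'/G|\cdot|w_i|$ reappears but is handled by $|G'/G|\,|w_i| = |G'/G|\,G\,|v_i^\lambda| = |G'|\,|v_i^\lambda| \le \gamma T \|v_i^\lambda\|_\infty$. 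One then differentiates $w_i = G v_i^\lambda$ in $x$ and repeats, since the $x$-derivatives of $G$ are zero, so $\partial_x^k w_i = G\partial_x^k v_i^\lambda$ satisfies $\mathcal L^i(\partial_x^k w_i) = G'\partial_x^k v^\lambda + (\text{bounded coefficient}) \times \partial_x^{\le k}w_i$, again with uniformly bounded data, closing the induction and delivering all three claimed bounds simultaneously.
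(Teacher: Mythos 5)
Your treatment of the three spatial-derivative bounds is essentially the paper's own argument: differentiate the system \eqref{eq:PDEvi} $k$ times in $x$ and apply Lemma~\ref{lemma:UniformBound} inductively, checking that all coefficients are uniform in $\lambda$. One bookkeeping correction there: you cannot take $a_i^\varepsilon=0$ and put all ``terms in $\partial_x^{\le k}v_j^\lambda$'' into $b_i^\varepsilon$, because the terms proportional to $\partial_x^k v_i^\lambda$ itself (e.g.\ $(\partial_x b_i)\,\partial_x v_i^\lambda$ when $k=1$) are exactly the quantity being estimated and are not covered by the induction hypothesis; they must be placed in the zeroth-order coefficient $a_i^\varepsilon$, which is precisely the slot Lemma~\ref{lemma:UniformBound} provides (the paper takes $a_i^\varepsilon=\partial_x b_i$ for $k=1$). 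With that fix, this half coincides with the paper's proof.

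The time-derivative part, however, contains a genuine error. You assert that $G'=-\sqrt{\gamma\lambda}\tanh(\cdot)$ is bounded by $\gamma T$; in fact $G'(t)=-\sqrt{\gamma/\lambda}\,\sinh\bigl(\sqrt{\gamma/\lambda}\,(T-t)\bigr)$, which blows up exponentially as $\lambda\to0$ --- what is bounded by $\gamma T$ is $\lambda|G'|/G=\sqrt{\lambda\gamma}\,\tanh(\cdot)$, which you used correctly earlier for the $a_0$-term but then conflated with $G'$. Consequently, in the equation $\cL^i w_i=G'v^\lambda$ for $w_i=Gv_i^\lambda$, neither the source $G'v^\lambda$ nor $w_i$ itself (note $G(t)\to\infty$ for $t<T$) is bounded uniformly in $\lambda$, so no $\lambda$-uniform Schauder estimate is available, and the closing bound $|G'|\,|v_i^\lambda|\le\gamma T\|v_i^\lambda\|_\infty$ is false; your first alternative (reading the product $\frac{G'}{G}(\partial_x^kv_i^\lambda-\partial_x^kv^\lambda)$ off the PDE) is circular, since the right-hand side still contains $\partial_t\partial_x^kv_i^\lambda$. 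The deeper issue is that you are proving more than the lemma claims: only $\partial_t\partial_x^k v^\lambda$, the time derivative of the \emph{average}, is asserted; a uniform bound on $\partial_t v_i^\lambda$ for each $i$ would at this stage essentially require the $\sqrt{\lambda}$-estimate of Corollary~\ref{co:viConvergetov}, which is proved later and uses the present lemma. The intended (and paper's) argument is a one-liner: average the $k$-times differentiated equations over $i$. For $k\ge1$ the singular couplings cancel exactly, $\sum_i\frac{G'}{G}\bigl(\partial_x^kv_i^\lambda-\frac1N\sum_j\partial_x^kv_j^\lambda\bigr)=0$; for $k=0$ their average equals $-\frac{\lambda(G')^2}{NG^2}a_0$, which lies in $[-\gamma a_0/N,0]$, and the extra term from \eqref{eq:PDEv} is $\frac{a_0}{N}\lambda\,\partial_t(G'/G)=\frac{a_0}{N}\bigl(\gamma-\lambda(G'/G)^2\bigr)\in[0,\gamma a_0/N]$. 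Hence $\partial_t\partial_x^kv^\lambda$ is an average of already-bounded spatial derivatives plus terms bounded uniformly in $\lambda$, which is all that is needed.
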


\begin{proof}
By Theorem \ref{th:equilibriumPDE} and Remark \ref{rk:PDEsolSmooth}, the $x$-derivatives of the functions $v_i^\lambda$, $i=1,\ldots,N$ from Theorem~\ref{th:equilibriumPDE} satisfy the following PDEs obtained by differentiating~\eqref{eq:PDEvi} with respect to the spatial variable:
\begin{align}
\partial_t\partial_xv^\lambda_i &+\frac{1}{2}\sigma_i^2\partial_{xx}\partial_xv^\lambda_i+(b_i+\sigma_i\partial_x\sigma_i)\partial_x\partial_xv^\lambda_i \label{eq:para}\\
&+\Big(\partial_xb_i+\frac{G^\prime}{G}\Big)\partial_xv_i^\lambda-\frac{G^\prime}{G}\frac{1}{N}\sum_{j=1}^N\partial_xv^\lambda_j=0, \quad \partial_xv^\lambda_i(T,\cdot)=f^\prime. \notag
\end{align}
Lemma~\ref{lemma:UniformBound} therefore yields the desired uniform bound for $|\partial_xv_i^\lambda(t,x)|$, and in turn also for $\partial_xv^\lambda(t,x)=\frac{1}{N}\sum_{i=1}^N\partial_xv_i^\lambda(t,x)$.   The corresponding bounds for the higher-order $x$-derivatives follow by iterating this argument. 
Finally, the uniform bound for the time derivative of $\partial_x^k v^\lambda$ is then direct consequences of the parabolic form of the PDEs~\eqref{eq:PDEvi}, \eqref{eq:para}, etc., and their sums.
\end{proof}

\begin{lemma}
\label{lemma:UniformConvergence}
For $\lambda>0$, consider $\alpha$, $\beta$, $a^\lambda$, $h$ of class $C^{\infty}_{b}$ and write 
$$\mathcal{L}=\partial_t+\frac{1}{2}\beta^2\partial_{xx}+\alpha\partial_x.$$
Suppose that $w^\lambda \in C^{\infty}_b$
satisfies $w^\lambda(T,\cdot)=h$ and $\partial_tw^\lambda$, $\partial_xw^\lambda$, $\partial_{xx}w^\lambda$, $a^\lambda$ are bounded uniformly in $\lambda$. Then, the unique bounded classical solution $u^\lambda$ of 
$$\mathcal{L}u^\lambda+a^\lambda(t,x)+\frac{G^\prime(t)}{G(t)}(u^\lambda-w^\lambda)=0, \quad u^\lambda(T,\cdot)=h
$$
satisfies 
$$
\frac{|u^\lambda(t,x)-w^\lambda(t,x)|}{\sqrt{\lambda}}\leq M
$$
for some $M>0$ independent of $\lambda>0$ and $(t,x)\in [0,T]\times\mathbb{R}$.
\end{lemma}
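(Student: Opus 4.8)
The plan is to work with the difference $\eta^{\lambda}:=u^{\lambda}-w^{\lambda}$. Since $w^{\lambda}(T,\cdot)=h=u^{\lambda}(T,\cdot)$, the terminal values cancel so that $\eta^{\lambda}(T,\cdot)=0$; and subtracting $\mathcal{L}w^{\lambda}$ from the equation $\mathcal{L}u^{\lambda}+a^{\lambda}+\tfrac{G'}{G}(u^{\lambda}-w^{\lambda})=0$ and using $\tfrac{G'}{G}(u^{\lambda}-w^{\lambda})=\tfrac{G'}{G}\eta^{\lambda}$ gives
$$
\mathcal{L}\eta^{\lambda}+\frac{G'(t)}{G(t)}\,\eta^{\lambda}+g^{\lambda}=0,\qquad \eta^{\lambda}(T,\cdot)=0,\qquad g^{\lambda}:=a^{\lambda}+\mathcal{L}w^{\lambda}.
$$
The role of the hypotheses is exactly to ensure that $g^{\lambda}=a^{\lambda}+\partial_{t}w^{\lambda}+\tfrac12\beta^{2}\partial_{xx}w^{\lambda}+\alpha\,\partial_{x}w^{\lambda}$ is bounded on $[0,T]\times\R$ by a constant $C$ that does not depend on $\lambda$.

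Next I would eliminate the zeroth-order term in the usual way: since $G$ depends only on $t$, one has $\mathcal{L}(G\eta^{\lambda})=G\mathcal{L}\eta^{\lambda}+G'\eta^{\lambda}=-Gg^{\lambda}$ with vanishing terminal value. Applying the Feynman--Kac representation (legitimate because $G\eta^{\lambda}$ is bounded and $Gg^{\lambda}$ is bounded on $[0,T]\times\R$, cf.\ \cite[Theorem~5.7.6]{KaratzasShreve.91}) under the dynamics $dX_{t}=\alpha(t,X_{t})\,dt+\beta(t,X_{t})\,dW_{t}$, and using $G'/G=(\log G)'$ together with $G(T)=1$, yields
$$
\eta^{\lambda}(t,x)=E_{t,x}\!\left[\int_{t}^{T}\frac{G(s)}{G(t)}\,g^{\lambda}(s,X_{s})\,ds\right],\qquad\text{so}\qquad |\eta^{\lambda}(t,x)|\le C\,\frac{1}{G(t)}\int_{t}^{T}G(s)\,ds.
$$

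The last step is the elementary bound $\tfrac{1}{G(t)}\int_{t}^{T}G(s)\,ds\le\sqrt{\lambda/\gamma}$: writing $\mu=\sqrt{\gamma/\lambda}$ and $G(s)=\cosh(\mu(T-s))$ we have $\int_{t}^{T}\cosh(\mu(T-s))\,ds=\mu^{-1}\sinh(\mu(T-t))$, so the quotient equals $\mu^{-1}\tanh(\mu(T-t))\le\mu^{-1}=\sqrt{\lambda/\gamma}$. Combining the two displays gives $|u^{\lambda}(t,x)-w^{\lambda}(t,x)|\le C\sqrt{\lambda/\gamma}$, which is the assertion with $M=C/\sqrt{\gamma}$. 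I do not expect a genuine obstacle here: the content of the lemma is that the strongly negative discount rate $G'/G\sim-\sqrt{\gamma/\lambda}$ damps the $O(1)$ source $g^{\lambda}$ down to an $O(\sqrt{\lambda})$ response, and this becomes quantitative through the weight $G(s)/G(t)$ in the Feynman--Kac formula and the $\tanh$ estimate. The only points requiring a little care are the uniform boundedness of $g^{\lambda}$ (which is precisely what the boundedness hypotheses on $a^{\lambda}$ and on the derivatives of $w^{\lambda}$ provide) and the admissibility of Feynman--Kac for the bounded auxiliary function $G\eta^{\lambda}$.
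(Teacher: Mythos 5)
Your proof is correct and follows essentially the same route as the paper: both reduce the claim to the representation $u^\lambda-w^\lambda=E_{t,x}\bigl[\int_t^T\frac{G(s)}{G(t)}\bigl(a^\lambda+\mathcal{L}w^\lambda\bigr)(s,X_s)\,ds\bigr]$ (you obtain it by the $G$-multiplication trick and Feynman--Kac, the paper by the same integration-by-parts argument used for~\eqref{eq:intByPartPDEproof}) and then bound $\frac{1}{G(t)}\int_t^T G(s)\,ds\le\sqrt{\lambda/\gamma}$, which is exactly the paper's estimate via $G=\frac{\lambda}{\gamma}G''$, $G'(T)=0$ and $|G'/G|\le\sqrt{\gamma/\lambda}$.
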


\begin{proof}
Following the same steps as in the derivation of~\eqref{eq:intByPartPDEproof} yields that
\begin{align}
\label{equ:differenceofuw}
&u^\lambda(t,x)=w^\lambda(t,x)+E_{t,x}\left[\int_t^T\frac{G(u)}{G(t)}\Big(a^\lambda+\mathcal{L}w^\lambda\Big)(u,X_u)du\right]
\end{align}
where the expectation is taken under the measure for which the state variable has dynamics $dX_t=\alpha(t,X_t)dt+\beta(t,X_t)dW_t$.
With a uniform bound $M$ for 
$a^\lambda+\mathcal{L}w^\lambda$,
the desired bound is 
\begin{align*}
\frac{|u^\lambda(t,x)-w^\lambda(t,x)|}{\sqrt{\lambda}}\leq \frac{M}{\sqrt{\lambda}}\int_t^T\frac{G(u)}{G(t)}du = -\frac{M\sqrt{\lambda}}{\gamma} \frac{G'(t)}{G(t)} \leq \frac{M}{\sqrt{\gamma}}
\end{align*}
where we have once again used $G(u)=\frac{\lambda}{\gamma}G''(u)$ and $G'(T)=1$ in the second step and the definition of $G$ for the last inequality.
\end{proof}


\begin{corollary}
\label{co:viConvergetov}
Fix $k\geq0$. There exists $M>0$ such that
$$\frac{|\partial_x^kv_i^\lambda(t,x)-\partial_x^kv^\lambda(t,x)|}{\sqrt{\lambda}}\leq M$$ 
for all $\lambda>0$ and $(t,x)\in [0,T]\times\mathbb{R}$ and  $i\in \{1,\dots,N\}$.
\end{corollary}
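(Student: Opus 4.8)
The plan is to reduce the claim to a direct application of Lemma~\ref{lemma:UniformConvergence}. For $k=0$, note that by the definition~\eqref{eq:PDEv} of $v^\lambda$, each $v_i^\lambda$ solves
$$
\cL^{i} v_i^\lambda + \frac{G'(t)}{G(t)}\left(v_i^\lambda - v^\lambda\right) = 0, \quad v_i^\lambda(T,\cdot)=f,
$$
which is exactly the equation in Lemma~\ref{lemma:UniformConvergence} with $\cL=\cL^{i}$, $a^\lambda\equiv 0$, $h=f$, and $w^\lambda = v^\lambda$. The hypotheses of that lemma are met: $v^\lambda(T,\cdot)=f$ since $v^\lambda = \frac1N\sum_i v_i^\lambda$ and all $v_i^\lambda(T,\cdot)=f$ while $G'(T)=0$; and $\partial_t v^\lambda, \partial_x v^\lambda, \partial_{xx} v^\lambda$ are bounded uniformly in $\lambda$ by Lemma~\ref{lemma:BoundforDerivativesofv}. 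Hence Lemma~\ref{lemma:UniformConvergence} gives $|v_i^\lambda - v^\lambda|/\sqrt{\lambda}\leq M$ uniformly.

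For general $k\geq 1$, I would differentiate the system~\eqref{eq:PDEvi} $k$ times in $x$, exactly as in the derivation of~\eqref{eq:para}, to see that $\varphi_i^\lambda := \partial_x^k v_i^\lambda$ solves a scalar parabolic equation of the same structural form: after collecting the genuinely first-order terms in the $G'/G$-coupling separately from the lower-order terms (which involve $v_i^\lambda$ and its $x$-derivatives of order $<k$ multiplied by $x$-derivatives of $b_i,\sigma_i$, as well as the terminal datum $\partial_x^k f$), one obtains
$$
\cL^{i}_{k}\varphi_i^\lambda + a_i^{\lambda,k}(t,x) + \frac{G'(t)}{G(t)}\big(\varphi_i^\lambda - \partial_x^k v^\lambda\big)=0, \quad \varphi_i^\lambda(T,\cdot)=\partial_x^k f,
$$
where $\cL^{i}_{k}=\partial_t + \tfrac12\sigma_i^2\partial_{xx} + (b_i + k\,\sigma_i\partial_x\sigma_i)\partial_x$ has $C^\infty_b$ coefficients, and the inhomogeneity $a_i^{\lambda,k}$ is a fixed $C^\infty_b$-linear combination of $v_i^\lambda$ and $\partial_x^j v_i^\lambda$ for $j\leq k-1$ (note: $\partial_x^k v_i^\lambda$ itself appears only through the $\partial_x b_i$ coefficient, which I would absorb into the $a_i^{\lambda,k}$-term since Lemma~\ref{lemma:UniformConvergence} as stated has no zeroth-order term beyond $G'/G$ — alternatively one checks the lemma's proof tolerates such a term verbatim). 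By Lemma~\ref{lemma:BoundforDerivativesofv}, $a_i^{\lambda,k}$ is bounded uniformly in $\lambda$, and again $\partial_x^k v^\lambda=\frac1N\sum_i \partial_x^k v_i^\lambda$ has $\partial_t,\partial_x,\partial_{xx}$ bounded uniformly in $\lambda$ and terminal value $\partial_x^k f$. Thus Lemma~\ref{lemma:UniformConvergence} applies with $w^\lambda=\partial_x^k v^\lambda$ and yields $|\partial_x^k v_i^\lambda - \partial_x^k v^\lambda|/\sqrt{\lambda}\leq M$.

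The only mild subtlety — and the point I would be most careful about — is the bookkeeping when one differentiates the \emph{coupled} system: the term $\frac{G'}{G}\frac1N\sum_j v_j^\lambda$ differentiates to $\frac{G'}{G}\partial_x^k v^\lambda$ cleanly because $G$ does not depend on $x$, so the weak-coupling structure is preserved under $x$-differentiation and no new $\lambda$-dependence enters the coupling kernel. Everything else reduces to invoking the already-established uniform derivative bounds of Lemma~\ref{lemma:BoundforDerivativesofv}, so no new estimate is needed. I would present $k=0$ in full and then say "the general case follows by differentiating~\eqref{eq:PDEvi} $k$ times in $x$ and repeating the argument, using Lemma~\ref{lemma:BoundforDerivativesofv} to control the resulting inhomogeneities."
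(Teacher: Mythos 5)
Your proof is correct and takes essentially the same route as the paper: for $k=0$ apply Lemma~\ref{lemma:UniformConvergence} to the system~\eqref{eq:PDEvi} with $w^\lambda=v^\lambda$ (using $G'(T)=0$ for the terminal condition and Lemma~\ref{lemma:BoundforDerivativesofv} for the uniform hypotheses), and for $k\geq1$ to the $x$-differentiated equations as in the proof of Lemma~\ref{lemma:BoundforDerivativesofv}. Your explicit absorption of the zeroth-order term (coming from $\partial_x b_i$, etc.) into the inhomogeneity $a_i^{\lambda,k}$, justified by the uniform bounds on $\partial_x^j v_i^\lambda$, is precisely the bookkeeping the paper leaves implicit in ``the same argument,'' so no gap remains.
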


\begin{proof}
In view of the PDEs~\eqref{eq:PDEvi} from Theorem~\ref{th:equilibriumPDE} and the uniform bounds from Lemma \ref{lemma:BoundforDerivativesofv}, Lemma~\ref{lemma:UniformConvergence} yields that $\lambda^{-1/2}|v_i^\lambda-v^\lambda|\leq M$ for some constant $M$. This proves the assertion for $k=0$. The analogous bounds for the derivatives follow by applying the same argument to the corresponding PDEs obtained by differentiating~\eqref{eq:PDEvi} as in the proof of Lemma~\ref{lemma:BoundforDerivativesofv}.
\end{proof}

We can now estimate the difference between $v^{\lambda}$ and the frictionless equilibrium price $v^{0}$ of Proposition~\ref{pr:NoTransactionCostPrice}.

\begin{proposition}
\label{proposition:ZeroOrderConvergenceforSmallTransactionCost}
Fix $k\geq0$. There exists $M>0$ such that 
$$
\frac{|\partial_x^kv^\lambda(t,x)-\partial_x^kv^0(t,x)|}{\sqrt{\lambda}},\,  \frac{|\partial_t\partial_x^kv^\lambda(t,x)-\partial_t\partial_x^kv^0(t,x)|}{\sqrt{\lambda}} \leq M
$$
for all $\lambda>0$ and $(t,x)\in [0,T]\times\mathbb{R}$. 
\end{proposition}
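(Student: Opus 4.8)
The plan is to derive a PDE for the difference $w^\lambda := v^\lambda - v^0$ (and its $x$-derivatives), identify the source terms that carry powers of $\sqrt\lambda$, and then apply the Feynman--Kac/Grönwall machinery already developed in Lemmas~\ref{lemma:UniformBound}--\ref{lemma:UniformConvergence}. First I would subtract the PDE~\eqref{eq:PDEv0} for $v^0$ from the averaged version of the system~\eqref{eq:PDEvi}. Recall that $v^\lambda = \frac1N\sum_i v_i^\lambda + \frac{\lambda G'}{NG}a_0$, so averaging~\eqref{eq:PDEvi} over $i$ gives
\begin{align*}
\partial_t v^\lambda + \frac1N\sum_i\Big(\tfrac12\sigma_i^2\partial_{xx}v_i^\lambda + b_i\partial_x v_i^\lambda\Big) + \frac{G'}{G}\Big(\frac1N\sum_i v_i^\lambda - v^\lambda\Big) = \partial_t\Big(\frac{\lambda G'}{NG}a_0\Big),
\end{align*}
and the zeroth-order bracket equals $-\frac{\lambda G'^2}{NG^2}a_0$, while the right side is a bounded multiple of $\lambda$ times bounded functions of $G'/G$, $G''/G$. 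Using $v_i^\lambda = v^\lambda + (v_i^\lambda - v^\lambda)$ and Corollary~\ref{co:viConvergetov}, which says $\|v_i^\lambda - v^\lambda\| = O(\sqrt\lambda)$ together with all its $x$-derivatives, the second-order terms $\frac1N\sum_i(\tfrac12\sigma_i^2\partial_{xx}v_i^\lambda + b_i\partial_x v_i^\lambda)$ differ from $\frac12\bar\sigma^2\partial_{xx}v^\lambda + \bar b\partial_x v^\lambda$ by $O(\sqrt\lambda)$ (in $C^0$ and in all $x$-derivatives, using Lemma~\ref{lemma:BoundforDerivativesofv} to control the smooth coefficients). Hence $w^\lambda$ solves
\begin{align*}
\partial_t w^\lambda + \tfrac12\bar\sigma^2\partial_{xx}w^\lambda + \bar b\partial_x w^\lambda + r^\lambda(t,x) = 0, \qquad w^\lambda(T,\cdot) = 0,
\end{align*}
where $r^\lambda$ collects all the discrepancy terms and satisfies $\|r^\lambda\|_\infty = O(\sqrt\lambda)$ — crucially, the would-be $O(\lambda^0)$ terms cancel because the only zeroth-order coupling in~\eqref{eq:PDEvi} is $\frac{G'}{G}(v_i - v)$, which after averaging leaves only the explicit $\lambda$-dependent supply term.

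Once this equation is in hand, the Feynman--Kac representation under $\bar Q$ gives $w^\lambda(t,x) = \bar E_{t,x}[\int_t^T r^\lambda(s,X_s)\,ds]$, and the bound $\|r^\lambda\|_\infty \le M\sqrt\lambda$ immediately yields $|w^\lambda(t,x)| \le M T\sqrt\lambda$, i.e. the $k=0$ case. For the $x$-derivatives I would differentiate the difference PDE $k$ times in $x$: $\partial_x^k w^\lambda$ satisfies a scalar parabolic equation with the same operator $\bar\cL := \partial_t + \tfrac12\bar\sigma^2\partial_{xx} + \bar b\partial_x$ up to lower-order terms, a zeroth-order term $(\partial_x\bar b)\partial_x^{k-1}w^\lambda$-type contributions from the chain rule, and a new inhomogeneity built from $\partial_x^k r^\lambda$ together with products of derivatives of $\bar b,\bar\sigma$ with lower-order $\partial_x^j w^\lambda$. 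By Lemma~\ref{lemma:BoundforDerivativesofv} and Corollary~\ref{co:viConvergetov} one checks inductively that $\|\partial_x^k r^\lambda\|_\infty = O(\sqrt\lambda)$, so a Feynman--Kac/Grönwall argument (exactly as in the proof of Lemma~\ref{lemma:UniformBound}, absorbing the zeroth-order coefficient into an exponential weight) gives $\|\partial_x^k w^\lambda\|_\infty \le M_k\sqrt\lambda$ by induction on $k$. Finally, the bound on $\partial_t\partial_x^k w^\lambda$ is read off directly from the parabolic equation for $\partial_x^k w^\lambda$: solving for $\partial_t\partial_x^k w^\lambda$ expresses it as $-\tfrac12\bar\sigma^2\partial_{xx}\partial_x^k w^\lambda - \bar b\,\partial_x^{k+1}w^\lambda - (\text{zeroth/lower order}) - \partial_x^k r^\lambda$, and each term on the right is $O(\sqrt\lambda)$ by the spatial estimates just established (applied at level $k+2$).

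The main obstacle is the bookkeeping in the first step: one must verify carefully that after averaging~\eqref{eq:PDEvi} and subtracting~\eqref{eq:PDEv0}, \emph{every} residual term is genuinely $O(\sqrt\lambda)$ and not merely $O(1)$. This hinges on two facts — that the heterogeneous second-order parts $\frac1N\sum_i(\tfrac12\sigma_i^2\partial_{xx}+b_i\partial_x)(v_i^\lambda - v^\lambda)$ vanish to order $\sqrt\lambda$ by Corollary~\ref{co:viConvergetov} applied to derivatives, and that the explicit supply contributions all carry a factor $\lambda$ against the bounded quantities $\frac{\lambda G'^2}{G^2}a_0$ and $\lambda\,\partial_t(\frac{G'}{G})a_0 = \lambda a_0(\frac{G''}{G} - \frac{G'^2}{G^2})$, which is $O(\lambda)$ since $\lambda G''/G = \gamma$ and $\lambda G'^2/G^2 \le \gamma$. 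Once this cancellation is nailed down, the rest is the routine Feynman--Kac-plus-Grönwall induction that the paper has already used three times. One should also note that $v^0$ is smooth with bounded derivatives of all orders (Proposition~\ref{pr:NoTransactionCostPrice} together with $f\in C^\infty_b$ and standard parabolic regularity), so all the $\partial_x^j v^0$ appearing in the difference equations are legitimately bounded.
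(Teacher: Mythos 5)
Your overall route is the same as the paper's: average the system \eqref{eq:PDEvi}, subtract the frictionless PDE \eqref{eq:PDEv0}, obtain a parabolic equation for $w^\lambda=v^\lambda-v^0$ with zero terminal condition whose only source is the heterogeneity term $\frac1N\sum_i\big(\tfrac12\sigma_i^2\partial_{xx}+b_i\partial_x\big)(v_i^\lambda-v^\lambda)=O(\sqrt\lambda)$ by Corollary~\ref{co:viConvergetov}, then Feynman--Kac, then differentiate in $x$ and induct, and finally read off the time derivatives from the parabolic form. This is exactly the paper's proof via \eqref{equ:Differenceofvv0}.

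However, the step you yourself flag as the crux is justified incorrectly, and as written it would fail whenever $a_0>0$. You claim the supply contributions are $O(\lambda)$, citing $\lambda G''/G=\gamma$; but that identity shows the opposite:
\begin{equation*}
\frac{\lambda a_0}{N}\,\partial_t\Big(\frac{G'}{G}\Big)=\frac{a_0}{N}\Big(\gamma-\lambda\frac{(G')^2}{G^2}\Big)=\frac{\gamma a_0}{N}\,\mathrm{sech}^2\Big(\sqrt{\tfrac{\gamma}{\lambda}}(T-t)\Big),
\end{equation*}
which equals $\gamma a_0/N$ at $t=T$ and is therefore $O(1)$, not $O(\lambda)$, uniformly on $[0,T]$. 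If you discard these terms as negligible, the holding-cost term $-\frac{\gamma a_0}{N}$ of \eqref{eq:PDEv0} survives in your remainder $r^\lambda$ at order one, and Feynman--Kac then only yields $|w^\lambda|=O(1)$ rather than $O(\sqrt\lambda)$. The correct mechanism is exact cancellation, not smallness: using $G''=\frac{\gamma}{\lambda}G$, the term $\frac{\gamma a_0}{N}$ produced by $\lambda G''/G$ cancels precisely the $-\frac{\gamma a_0}{N}$ in \eqref{eq:PDEv0}, while $\frac{\lambda a_0}{N}\frac{(G')^2}{G^2}$ cancels against the zeroth-order bracket $-\frac{\lambda (G')^2}{NG^2}a_0$ coming from $\frac{G'}{G}\big(\frac1N\sum_i v_i^\lambda-v^\lambda\big)$. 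This is what the paper means by ``using once again $G=\frac{\lambda}{\gamma}G''$'': after these identities, \emph{no} supply term remains in the difference equation, and $r^\lambda$ consists solely of the heterogeneity term controlled by Corollary~\ref{co:viConvergetov}. With that correction, the remainder of your argument (Feynman--Kac bound, induction over spatial derivatives using the differentiated equations and Lemma~\ref{lemma:BoundforDerivativesofv}, time derivatives from the parabolic form) is sound and coincides with the paper's.
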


\begin{proof}
Using~\eqref{eq:PDEv} and then~\eqref{eq:PDEvi} for $v$, subtracting the PDE~\eqref{eq:PDEv0} for~$v^0$, and using once again $G(u)=\frac{\lambda}{\gamma}G''(u)$, 
we obtain 
\begin{align}
\label{equ:Differenceofvv0}
&\partial_t(v^\lambda-v^0)+\frac{1}{2}\bar{\sigma}^2\partial_{xx}(v^\lambda-v^0)+\bar{b}\partial_x(v^\lambda-v^0)\\
&\qquad+\frac{1}{N}\sum_{i=1}^N\frac{1}{2}\sigma_i^2(\partial_{xx}v_i^\lambda-\partial_{xx}v^\lambda)+\frac{1}{N}\sum_{i=1}^Nb_i(\partial_xv_i^\lambda-\partial_xv^\lambda)=0\notag
\end{align} 
with $(v^\lambda-v^0)(T,\cdot)=0$. Here $\bar b, \bar\sigma$ are as defined in Proposition~\ref{pr:NoTransactionCostPrice}. The desired uniform bound for $\lambda^{-1/2}|v^\lambda-v^0|$ is now a consequence of the Feynman--Kac formula and Corollary~\ref{co:viConvergetov}. The analogous result for $\lambda^{-1/2}|\partial_x^kv^\lambda-\partial_x^kv^0|$ follows from the same argument because Remark~\ref{rk:PDEsolSmooth} shows that these derivatives satisfy similar PDEs obtained by differentiating~\eqref{equ:Differenceofvv0}. The corresponding bounds for the time derivatives in turn are a consequence of the parabolic form of the equations. 
\end{proof}

We have the following version of ``Laplace's method'' for our function $G(t)=\cosh (\sqrt{\frac{\gamma}{\lambda}}(T-t) )$ as $\lambda \to 0$.

\begin{lemma}
\label{lemma:IntegratedGConvergence}
Given $t\in[0,T)$ and a continuous function $F$ on $[t,T]$,
\begin{equation}\label{eq:laplace}
\sqrt{\frac{\gamma}{\lambda}}\int_t^T\left(\frac{-G^\prime(u)}{G(t)}\left(\int_t^uF(s)ds\right)\right)du \to F(t) \quad  \mbox{as $\lambda \to 0$.}
\end{equation}
\end{lemma}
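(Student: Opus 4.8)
The plan is to recognize the left-hand side of~\eqref{eq:laplace} as an approximate identity: the weight $u\mapsto \sqrt{\gamma/\lambda}\,(-G'(u))/G(t)$ concentrates mass near $u=t$ as $\lambda\to0$, while the inner integral $\int_t^u F(s)\,ds$ is continuous in $u$ and vanishes at $u=t$. First I would compute $-G'(u)=\sqrt{\gamma/\lambda}\,\sinh(\sqrt{\gamma/\lambda}(T-u))$, so that the prefactor becomes $\frac{\gamma}{\lambda}\cdot\frac{\sinh(\sqrt{\gamma/\lambda}(T-u))}{\cosh(\sqrt{\gamma/\lambda}(T-t))}$. Writing $\beta=\sqrt{\gamma/\lambda}\to\infty$, this is $\beta^2\,\sinh(\beta(T-u))/\cosh(\beta(T-t))$, which for $u$ close to $t$ behaves like $\beta^2 e^{-\beta(u-t)}$ up to exponentially small corrections — precisely the exponential kernel underlying Laplace's method.

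The cleanest route is integration by parts rather than a direct kernel estimate. Set $H(u)=\int_t^u F(s)\,ds$, so $H(t)=0$, $H'=F$, and $H$ is $C^1$ on $[t,T]$. Using $-G'(u)/G(t)=\frac{\lambda}{\gamma}\,(-G''(u))/G(t) \cdot \frac{\gamma}{\lambda}$... actually more directly: note $-G'(u) = \big(-G'(u)\big)$ and $\int_t^T (-G'(u)) H(u)\,du = \big[-G(u)H(u)\big]_t^T + \int_t^T G(u) H'(u)\,du = -G(T)H(T) + G(t)H(t) + \int_t^T G(u)F(u)\,du = -H(T) + \int_t^T G(u)F(u)\,du$, since $G(T)=1$ and $H(t)=0$. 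Hence the left-hand side of~\eqref{eq:laplace} equals
\begin{equation}\label{eq:laplaceIBP}
\frac{\sqrt{\gamma/\lambda}}{G(t)}\left(\int_t^T G(u)F(u)\,du - H(T)\right).
\end{equation}
Now $\sqrt{\gamma/\lambda}\,H(T)/G(t) = \sqrt{\gamma/\lambda}\,H(T)/\cosh(\sqrt{\gamma/\lambda}(T-t)) \to 0$ exponentially fast since $t<T$. It remains to show $\sqrt{\gamma/\lambda}\,\int_t^T G(u)F(u)\,du / G(t) \to F(t)$. Writing $\beta = \sqrt{\gamma/\lambda}$ and substituting, $\beta\int_t^T \cosh(\beta(T-u))F(u)\,du/\cosh(\beta(T-t))$; with the change of variable $r = \beta(T-u)$ one gets $\int_0^{\beta(T-t)} \cosh(r)\,F(T-r/\beta)\,dr/\cosh(\beta(T-t))$. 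The mass of $\cosh(r)/\cosh(\beta(T-t))$ on $[0,\beta(T-t)]$ concentrates near the right endpoint $r=\beta(T-t)$, i.e. near $u=t$; since $\int_0^{B}\cosh(r)\,dr = \sinh(B)$ and $\sinh(B)/\cosh(B)\to1$, and $F(T-r/\beta)\to F(t)$ uniformly on the relevant region, the limit is $F(t)$.

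To make the last step rigorous I would fix $\delta\in(0,T-t)$ and split $\int_t^T = \int_t^{t+\delta} + \int_{t+\delta}^T$. On $[t+\delta,T]$ one has $\cosh(\beta(T-u)) \leq \cosh(\beta(T-t-\delta))$, so that piece of $\beta\int G(u)F(u)\,du/G(t)$ is bounded by $\beta(T-t)\|F\|_\infty \cosh(\beta(T-t-\delta))/\cosh(\beta(T-t))$, which $\to0$ exponentially. On $[t,t+\delta]$, by continuity of $F$ at $t$ one has $|F(u)-F(t)|<\eps$ for $\delta$ small, and $\beta\int_t^{t+\delta}\cosh(\beta(T-u))\,du/\cosh(\beta(T-t)) = \big(\sinh(\beta(T-t)) - \sinh(\beta(T-t-\delta))\big)/\cosh(\beta(T-t)) \to 1$; hence that piece converges to something within $\eps$ of $F(t)$. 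Letting $\eps\downarrow0$ finishes the argument. The only mild obstacle is bookkeeping the endpoint concentration correctly — in particular remembering that $G$ is decreasing so the kernel $-G'$ is nonnegative and concentrates at the \emph{left} endpoint $u=t$ (not at $u=T$), which is why the value $F(t)$ appears; the integration-by-parts identity~\eqref{eq:laplaceIBP} sidesteps most of the delicate estimation and reduces everything to the elementary fact $\sinh(B)/\cosh(B)\to1$.
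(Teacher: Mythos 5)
Your proof is correct. It takes a mildly different route than the paper's: you integrate by parts once on the whole expression, writing $H(u)=\int_t^u F(s)\,ds$ and using $G(T)=1$, $H(t)=0$ to reduce the statement to $\sqrt{\gamma/\lambda}\,\big(\int_t^T G(u)F(u)\,du - H(T)\big)/G(t)\to F(t)$, and then treat $\sqrt{\gamma/\lambda}\,G(u)/G(t)$ as an approximate identity at $u=t$ via an explicit $\delta$-split and the elementary facts $\sinh(B)/\cosh(B)\to 1$, $\cosh(\beta(T-t-\delta))/\cosh(\beta(T-t))\to 0$ exponentially. The paper instead decomposes the integrand additively as $F(s)=F(t)+\big(F(s)-F(t)\big)$: the $F(t)$ term is computed by integration by parts together with $G=\tfrac{\lambda}{\gamma}G''$, and the remainder is killed using uniform continuity of $F$ and the fact that $\sqrt{\gamma/\lambda}\,(-G')/G(t)\to 0$ locally uniformly on $(t,T]$. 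The two arguments use the same ingredients (one integration by parts, the hyperbolic identities for $G$, and continuity of $F$ at $t$), but yours isolates the boundary term $H(T)$ and makes the endpoint concentration and exponential decay fully explicit, whereas the paper's additive splitting keeps the kernel $-G'$ throughout and is somewhat terser about the vanishing of the error term. One cosmetic remark: your opening heuristic that the weight $\sqrt{\gamma/\lambda}\,(-G'(u))/G(t)$ ``concentrates mass near $u=t$'' is loose, since its total mass grows like $\sqrt{\gamma/\lambda}$; what saves the day, as you implicitly use, is that $H$ vanishes to first order at $u=t$. Your actual argument via the integration-by-parts identity does not rely on this heuristic, so the proof stands.
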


\begin{proof}
The left-hand side of~\eqref{eq:laplace} can be decomposed as
\begin{align*}
\sqrt{\frac{\gamma}{\lambda}}\int_t^T\frac{-G^\prime(u)}{G(t)}\int_t^u\Big(F(s)-F(t)\Big)dsdu+\sqrt{\frac{\gamma}{\lambda}}\int_t^T\frac{-G^\prime(u)}{G(t)}\int_t^uF(t)dsdu.
\end{align*}
Using the uniform continuity of $F$ on $[t,T]$ and observing that $\sqrt{\frac{\gamma}{\lambda}}\frac{-G^\prime(\cdot)}{G(t)}$ converges to $0$ locally uniformly on $(t,T]$,
one verifies that the first term vanishes for $\lambda\to 0$. Integration by parts and $G=\frac{\lambda}{\gamma}G''$ show that the second term converges to $F(t)$.  
\end{proof}

Together with the uniform bounds from Proposition~\ref{proposition:ZeroOrderConvergenceforSmallTransactionCost}, Lemma~\ref{lemma:IntegratedGConvergence} allows us to compute the leading-order expansions of $v_i^\lambda-v^\lambda$ and its derivatives.

\begin{lemma}
\label{lemma:ConvergenceforFirstOrderCorrection}
For $k=0,1,2$ 
and $(t,x) \in [0,T)\times \mathbb{R}$, we have
\begin{equation}\label{eq:limit11}
\lim_{\lambda\to 0} \frac{\partial^k_xv_i^\lambda(t,x)-\partial^k_xv^\lambda(t,x)}{\sqrt{\lambda}} =\frac{\partial^k_x\mathcal{L}^i v^0(t,x)}{\sqrt{\gamma}}.
\end{equation}
\end{lemma}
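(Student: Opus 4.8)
The plan is to express $\partial_x^k(v_i^\lambda-v^\lambda)$ as an integral of $\partial_x^k(\mathcal{L}^i v^\lambda)$ against the kernel $u\mapsto\tfrac1{\sqrt\lambda}\tfrac{G(u)}{G(t)}$ on $[t,T]$, and then pass to $\lambda\to0$ using that this kernel has total mass $\tfrac1{\sqrt\lambda}\int_t^T\tfrac{G(u)}{G(t)}\,du=\tfrac1{\sqrt\gamma}\tanh(\sqrt{\gamma/\lambda}(T-t))\to\tfrac1{\sqrt\gamma}$ and concentrates at $u=t$. For $k=0$ the identity is exactly \eqref{eq:intByPartPDEproof}, i.e.\ $v_i^\lambda(t,x)-v^\lambda(t,x)=E^i_{t,x}[\int_t^T\tfrac{G(u)}{G(t)}\mathcal{L}^i v^\lambda(u,X_u)\,du]$. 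For $k\ge1$, differentiate \eqref{eq:PDEvi} $k$ times in $x$ (see \eqref{eq:para} for $k=1$): by Leibniz $\partial_x^k(\mathcal{L}^i v_i^\lambda)=\mathcal{L}^i_{(k)}\partial_x^k v_i^\lambda+c_k\,\partial_x^k v_i^\lambda+\sum_{j<k}c_{k,j}\,\partial_x^j v_i^\lambda$, with $\mathcal{L}^i_{(k)}:=\partial_t+\tfrac12\sigma_i^2\partial_{xx}+(b_i+k\sigma_i\partial_x\sigma_i)\partial_x$ and $c_k,c_{k,j}\in C^\infty_b$; since $\tfrac1N\sum_{m=1}^N\partial_x^k v_m^\lambda=\partial_x^k v^\lambda$ by \eqref{eq:PDEv}, the function $u^\lambda:=\partial_x^k v_i^\lambda$ solves the scalar equation
$$\mathcal{L}^i_{(k)}u^\lambda+\underbrace{\Big(c_k\,\partial_x^k v_i^\lambda+\textstyle\sum_{j<k}c_{k,j}\,\partial_x^j v_i^\lambda\Big)}_{=:a^\lambda}+\frac{G'}{G}\big(u^\lambda-\partial_x^k v^\lambda\big)=0,\qquad u^\lambda(T,\cdot)=\partial_x^k f,$$
coupled to the other components only through $\partial_x^k v^\lambda$; note $v^\lambda(T,\cdot)=f$ (as $G'(T)=0$), so $\partial_x^k v^\lambda(T,\cdot)=\partial_x^k f$ too. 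All data are $C^\infty_b$ by Remark~\ref{rk:PDEsolSmooth}, and $\partial_x^j v^\lambda$ for $j\le k+2$, $\partial_t\partial_x^k v^\lambda$ and $a^\lambda$ are bounded uniformly in $\lambda$ by Lemma~\ref{lemma:BoundforDerivativesofv}; hence the representation \eqref{equ:differenceofuw} from the proof of Lemma~\ref{lemma:UniformConvergence} (with $w^\lambda=\partial_x^k v^\lambda$ and the above $a^\lambda$) gives
$$\partial_x^k v_i^\lambda(t,x)-\partial_x^k v^\lambda(t,x)=\widehat E_{t,x}\Big[\int_t^T\frac{G(u)}{G(t)}\big(a^\lambda+\mathcal{L}^i_{(k)}\partial_x^k v^\lambda\big)(u,X_u)\,du\Big],$$
where $\widehat E_{t,x}$ is the expectation under the law of the diffusion with generator $\mathcal{L}^i_{(k)}$ (for $k=0$ this is $E^i_{t,x}$ and the display reduces to \eqref{eq:intByPartPDEproof}).

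Next I identify the integrand up to negligible errors. By Leibniz, $a^\lambda+\mathcal{L}^i_{(k)}\partial_x^k v^\lambda=\partial_x^k(\mathcal{L}^i v^\lambda)+c_k(\partial_x^k v_i^\lambda-\partial_x^k v^\lambda)+\sum_{j<k}c_{k,j}(\partial_x^j v_i^\lambda-\partial_x^j v^\lambda)$, and by Corollary~\ref{co:viConvergetov} the last two groups are $O(\sqrt\lambda)$ uniformly in $(t,x)$. Expanding $\mathcal{L}^i(v^\lambda-v^0)$ by the product rule and invoking Proposition~\ref{proposition:ZeroOrderConvergenceforSmallTransactionCost} for the $x$-derivatives of $v^\lambda-v^0$ up to order $k+2$ and for $\partial_t\partial_x^k(v^\lambda-v^0)$ shows $\partial_x^k(\mathcal{L}^i v^\lambda)=\partial_x^k(\mathcal{L}^i v^0)+O(\sqrt\lambda)$ uniformly. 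Thus the integrand equals $\partial_x^k(\mathcal{L}^i v^0)+R^\lambda$ with $\|R^\lambda\|_\infty\le C\sqrt\lambda$; since $\int_t^T\tfrac{G(u)}{G(t)}\,du=\sqrt{\lambda/\gamma}\tanh(\sqrt{\gamma/\lambda}(T-t))=O(\sqrt\lambda)$, the contribution of $R^\lambda$ to $\lambda^{-1/2}(\partial_x^k v_i^\lambda-\partial_x^k v^\lambda)$ is $O(\sqrt\lambda)$, and it remains to compute $\lim_{\lambda\to0}\tfrac1{\sqrt\lambda}\widehat E_{t,x}[\int_t^T\tfrac{G(u)}{G(t)}\partial_x^k(\mathcal{L}^i v^0)(u,X_u)\,du]$.

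By Fubini this equals $\lim_{\lambda\to0}\tfrac1{\sqrt\lambda}\int_t^T\tfrac{G(u)}{G(t)}F(u)\,du$ for $F(u):=\widehat E_{t,x}[\partial_x^k(\mathcal{L}^i v^0)(u,X_u)]$, which is continuous on $[t,T]$ (dominated convergence, as $\partial_x^k(\mathcal{L}^i v^0)\in C_b$ and $X$ is continuous) with $F(t)=\partial_x^k(\mathcal{L}^i v^0)(t,x)$. Integration by parts and $G(T)=1$ give
$$\frac1{\sqrt\lambda}\int_t^T\frac{G(u)}{G(t)}F(u)\,du=\frac1{\sqrt\gamma}\left(\sqrt{\frac\gamma\lambda}\int_t^T\frac{-G'(u)}{G(t)}\Big(\int_t^u F(s)\,ds\Big)du+\sqrt{\frac\gamma\lambda}\,\frac1{G(t)}\int_t^T F(s)\,ds\right);$$
the first bracketed term converges to $F(t)$ by Lemma~\ref{lemma:IntegratedGConvergence}, while the second vanishes because $G(t)=\cosh(\sqrt{\gamma/\lambda}(T-t))$ grows exponentially in $\sqrt{\gamma/\lambda}$ as $\lambda\to0$, whereas $\sqrt{\gamma/\lambda}\int_t^T F(s)\,ds$ grows only linearly (this is where $t<T$ is used). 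Hence the limit is $F(t)/\sqrt\gamma=\partial_x^k(\mathcal{L}^i v^0)(t,x)/\sqrt\gamma$, which is \eqref{eq:limit11}. The step requiring most care is setting up the first paragraph: correctly reading off the scalar PDE for $\partial_x^k v_i^\lambda$ — in particular isolating its genuine zeroth-order term $c_k\,\partial_x^k v_i^\lambda$ so as to move it into the inhomogeneity before applying Lemma~\ref{lemma:UniformConvergence} — and tracking the drift-shifted measure $\widehat E_{t,x}$; thereafter the result is a routine combination of the uniform bounds (Lemma~\ref{lemma:BoundforDerivativesofv}, Corollary~\ref{co:viConvergetov}, Proposition~\ref{proposition:ZeroOrderConvergenceforSmallTransactionCost}) with the Laplace-type Lemma~\ref{lemma:IntegratedGConvergence}.
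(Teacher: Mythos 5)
Your proof is correct, and while it rests on exactly the same ingredients as the paper's argument (the uniform derivative bounds of Lemma~\ref{lemma:BoundforDerivativesofv}, the $O(\sqrt\lambda)$ estimates of Corollary~\ref{co:viConvergetov} and Proposition~\ref{proposition:ZeroOrderConvergenceforSmallTransactionCost}, and the Laplace-type Lemma~\ref{lemma:IntegratedGConvergence}), the bookkeeping is organized differently. The paper keeps the genuine zeroth-order coefficient $c_i$ of the twice-differentiated equation inside the Feynman--Kac kernel, which produces a three-term representation weighted by $e^{\int_t^u c_i\,d\tau}$; it must then apply It\^o's formula to $e^{\int_t^u c_i d\tau}\partial_{xx}v^\lambda(u,X_u)$ to bring the $-G'$-kernel term into a form where Lemma~\ref{lemma:IntegratedGConvergence} applies, and finally recombine $\mathcal{L}_i''\partial_{xx}+\partial_{xx}b_i\partial_x=\partial_{xx}\mathcal{L}^i$. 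You instead absorb the zeroth- and lower-order terms into the inhomogeneity (legitimate, since $\partial_x^k v_i^\lambda$ is a fixed known function and these terms are uniformly bounded), which yields the clean representation of $\partial_x^k v_i^\lambda-\partial_x^k v^\lambda$ against the plain kernel $G(u)/G(t)$ as in~\eqref{eq:intByPartPDEproof}/\eqref{equ:differenceofuw}; you then replace the integrand by its $\lambda$-independent limit $\partial_x^k(\mathcal{L}^i v^0)$ up to a uniformly $O(\sqrt\lambda)$ error, so that the stochastic part reduces to a deterministic limit handled by one integration by parts, Lemma~\ref{lemma:IntegratedGConvergence}, and the exponential growth of $G(t)$ for $t<T$. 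This buys a uniform treatment of $k=0,1,2$ (indeed of general $k$), avoids the It\^o-formula manipulation and the operator recombination, and makes transparent where the restriction $t<T$ enters; the paper's version, by contrast, needs no interchange of the error estimates with the expectation beyond dominated convergence and stays closer to a term-by-term limit of the exact Feynman--Kac formula. Your checks of the hypotheses (matching terminal conditions $\partial_x^k v^\lambda(T,\cdot)=\partial_x^k f$ via $G'(T)=0$, the drift-shifted measure for $\mathcal{L}^i_{(k)}$, and the $x$-independence of the $a_0$-term when $k\ge1$) are the right points to verify, and they all hold.
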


\begin{proof}
The proof is similar for $k=0,1,2$; we only spell it out in the case $k=2$ for which the computations are most involved.
By Theorem~\ref{th:equilibriumPDE} and Remark~\ref{rk:PDEsolSmooth}, the second-order $x$-derivatives of the functions $v_i^\lambda$, $i=1,\ldots,N$ from Theorem~\ref{th:equilibriumPDE} satisfy the following PDEs obtained by differentiating~\eqref{eq:PDEvi} twice with respect to the spatial variable:
\begin{align*}
\partial_t&\partial_{xx}v^\lambda_i+\frac{1}{2}\sigma_i^2\partial_{xx}\partial_{xx}v^\lambda_i+(b_i+2\sigma_i\partial_x\sigma_i)\partial_x\partial_{xx}v^\lambda_i \\
&+\Big(c_i+\frac{G^\prime}{G}\Big)\partial_{xx}v_i^\lambda+\partial_{xx}b_i\partial_xv_i^\lambda-\frac{G^\prime}{G}\frac{1}{N}\sum_{j=1}^N\partial_{xx}v^\lambda_j=0, \quad \partial_{xx}v^\lambda_i(T,\cdot)=f^{\prime\prime},
\end{align*}
where $c_i=2\partial_xb_i+(\partial_x\sigma_i)^2+\sigma_i\partial_{xx}\sigma_i$. As all functions appearing here are bounded either by assumption or by Remark \ref{rk:PDEsolSmooth}, the Feynman--Kac formula and $G'/G=(\log G)'$ yield the stochastic representation
\begin{align*}
&\partial_{xx}v_i^\lambda(t,x) \\
&=E'_{t,x}\Bigg[\int_t^T\frac{-G^\prime(u)}{G(t)}\Big(e^{\int_t^uc_id\tau}\partial_{xx}v^\lambda(u,X_u)\Big)du+\frac{e^{\int_t^Tc_id\tau}f^{\prime\prime}(X_T)}{G(t)}\\
&\qquad\qquad+\int_t^T\frac{G(u)}{G(t)}\Big(e^{\int_t^uc_id\tau}\partial_{xx}b_i\partial_xv_i^\lambda(u,X_u)\Big)du\Bigg]
\end{align*}
where the expectation $E'[\cdot]$ is taken under the measure $Q'$ for which the state variable has dynamics $dX_t=(b_i+2\sigma_i\partial_x\sigma_i)(t,X_t)dt+\sigma_i(t,X_t)dW^i_t$. Together with $\int_t^T -\frac{G'(u)}{G(t)}du=1-\frac{1}{G(t)}$, this implies
\begin{align}
\frac{\partial_{xx}v_i^\lambda-\partial_{xx}v^\lambda}{\sqrt{\lambda}}=E'_{t,x}\Bigg[&\int_t^T\frac{-G^\prime(u)}{\sqrt{\lambda}G(t)}\Big(e^{\int_t^uc_id\tau}\partial_{xx}v^\lambda(u,X_u)-\partial_{xx}v^\lambda(t,x)\Big)du \notag \\
&+\frac{e^{\int_t^Tc_id\tau}f^{\prime\prime}(X_T)-\partial_{xx}v^\lambda(t,x)}{\sqrt{\lambda}G(t)} \label{eq:feynman}\\
&+\int_t^T\frac{G(u)}{\sqrt{\lambda}G(t)}\Big(e^{\int_t^uc_id\tau}\partial_{xx}b_i\partial_xv_i^\lambda(u,X_u)\Big)du\Bigg]. \notag
\end{align}
Recalling that $c_i$, $f^{\prime\prime}$ and (by Lemma~\ref{lemma:BoundforDerivativesofv}) also $\partial_{xx} v^\lambda$ are bounded (uniformly in $\lambda$), dominated convergence and the definition of $G$ show that the expectation of the second term on the right-hand side of~\eqref{eq:feynman} converges to zero as $\lambda\to 0$. In view of $\lim_{\lambda\to 0} \frac{G(T)}{\sqrt{\lambda}G(t)}=0$, dominated convergence and integration by parts show that the expectation of the third term converges to 
\begin{align}
&E'_{t,x}\left[ \lim_{\lambda\to 0} \int_t^T\frac{-G^\prime(u)}{\sqrt{\lambda}G(t)}\Big(\int_t^ue^{\int_t^sc_id\tau}\partial_{xx}b_i\partial_xv_i^\lambda(s,X_s)ds\Big)du\right] \notag\\
&\quad =\partial_{xx}b_i\partial_xv^0(t,x). \label{eq:fclimit2}
\end{align}
Here we have used Corollary~\ref{co:viConvergetov} and Proposition~\ref{proposition:ZeroOrderConvergenceforSmallTransactionCost}, and Lemma~\ref{lemma:IntegratedGConvergence} for the equality. Finally, the expectation of the first term on the right-hand side of~\eqref{eq:feynman} can be rewritten by applying It\^o's formula to $e^{\int_t^u c_i d\tau}\partial_{xx}v^\lambda(u,X_u)$, inserting the $Q'$-dynamics of $X$ and taking into account that the corresponding local martingale part has expectation zero because all involved functions are bounded. Dominated convergence as well as  Proposition~\ref{proposition:ZeroOrderConvergenceforSmallTransactionCost} and Lemma~\ref{lemma:IntegratedGConvergence} then show that the corresponding limit for $\lambda \to 0$ is  
\begin{align}
&E'_{t,x}\left[\lim_{\lambda\to 0} \int_t^T\frac{-G^\prime(u)}{\sqrt{\lambda}G(t)}\int_t^u e^{\int_t^s c_i d\tau} \mathcal{L}_i^{\prime\prime}\partial_{xx} v^\lambda(s,X_s)ds du\right] = \mathcal{L}_i^{\prime\prime}\partial_{xx}v^0(t,x), \label{eq:fclimit3}
\end{align}
where $\mathcal{L}_i^{\prime\prime}=\partial_t+\frac{1}{2}\sigma_i^2\partial_{xx}+(b_i+2\sigma_i\partial_x\sigma_i)\partial_x+c_i \rm{Id}$. The assertion for $k=2$ now follows from (\ref{eq:feynman}--\ref{eq:fclimit3}) by observing that $\mathcal{L}_i^{\prime\prime}\partial_{xx}+\partial_{xx}b_i\partial_x=\partial_{xx}\mathcal{L}^i$.
\end{proof}

We can now prove the expansion of the equilibrium price for small transaction costs.

\begin{proof}[Proof of Theorem~\ref{theorem:TransactionCostExpansion}]
We first observe that the PDE~\eqref{equ:Differenceofvv0} for $v^\lambda-v^0$ admits the Feynman--Kac representation
\begin{align*}
&(v^\lambda-v^0)(t,x)\\
&=\frac{1}{N}\sum_{i=1}^N \bar E_{t,x}\left[\int_t^T\Big(\frac{1}{2}\sigma_i^2(\partial_{xx}v_i^\lambda-\partial_{xx}v^\lambda)+b_i (\partial_xv_i^\lambda-\partial_xv^\lambda)\Big)(s,X_s)ds\right].
\end{align*}
Dominated convergence and the limits computed in Lemma~\ref{lemma:ConvergenceforFirstOrderCorrection} then yield
\begin{align}
\lim_{\lambda \to 0} \frac{v^\lambda - v^0}{\sqrt{\lambda}}(t,x) & = \sqrt{\frac{1}{\gamma}} \frac1N \sum_{i=1}^N \bar E_{t,x}\left[\int_t^T \Big(\frac{1}{2}\sigma_i^2\partial_{xx}+b_i  \partial_x\Big)\mathcal{L}^i v^0(s,X_s)ds\right] \notag \\
&= \sqrt{\frac{1}{\gamma}} \frac1N \sum_{i=1}^N \bar E_{t,x}\left[\int_t^T \Big(\mathcal{L}^i -\partial_t\Big)\gamma\hat\phi^{i,0}(s,X_{s}) ds\right] \label{eq:limit1}
\end{align}
where $\hat\phi^{i,0}=\cL^{i}v^0/\gamma$ is the frictionless equilibrium portfolio function of agent~$i$; cf.~\eqref{eq:phi0}. As these strategies clear the market, the sum of their time derivatives is zero and the pointwise limit~\eqref{eq:limit1} simplifies to \eqref{eq:ve}. The family $\{\lambda^{-1/2}(v^\lambda - v^0)\}_{\lambda>0}$ is bounded and equicontinuous by Proposition~\ref{proposition:ZeroOrderConvergenceforSmallTransactionCost}; whence, the convergence is in fact locally uniform as a consequence of the Arzel\`a--Ascoli theorem.
\end{proof}

\section{Asymptotics for Small Holding Costs}\label{se:holdingAsymps}

Next, we study the asymptotics of the equilibrium price from Theorem~\ref{th:equilibriumPDE} for small holding costs $\gamma\to 0$ (and fixed transaction costs $\lambda>0$). To emphasize the dependence on $\gamma$, we denote the price $v$ by $v^\gamma$ in this section. We again focus on the case of a one-dimensional state variable ($d=1$) with smooth drift and diffusion coefficients and terminal condition; see Remark~\ref{rk:PDEsolSmooth}.

To formulate the result, we first note that the risk-neutral version $\gamma=0$ of our model is well posed and essentially covered as a simple special case of Theorem~\ref{th:equilibriumPDE} (with the same proof, read with the conventions $G(u)/G(s)=1$ and $G'(u)/G(s)=0$). The corresponding equilibrium price is the average of all agents' conditional expectations,
\begin{equation}\label{eq:rnprice}
v^0(t,x)=\frac{1}{N} \sum_{i=1}^N v^0_i(t,x)=\frac{1}{N} \sum_{i=1}^N E^i_{t,x}[f(X_T)],
\end{equation}
and the corresponding portfolios are
\begin{equation}\label{eq:rnstrat}
\phi^{i,0}_t= a_i +\int_0^t E^i_{s}\left[\int_s^T \frac{\mathcal{L}^i v^0(u,X_u)}{\lambda} du \right]ds.
\end{equation}
(The above notation for the case $\gamma=0$ should not be confused with the notation for the case $\lambda=0$ in the preceding section.) 

Lemma~\ref{le:optPortfolio} shows that when $\gamma>0$ and $\lambda>0$, the optimal portfolios take into account future expected returns that are discounted with a kernel determined by $\gamma/\lambda$. As a limiting case, we have seen that the no-transaction-cost portfolio~\eqref{eq:phi0} for $\lambda=0$ only takes into account the current (subjective) drift rates; this corresponds to an infinite discount. In the opposite extreme, the no-holding-cost portfolio~\eqref{eq:rnstrat} aggregates the future expected returns without discounting.

Accordingly, we expect small holding costs to play a similar role as large transaction costs. Indeed, Theorem~\ref{th:equilibriumPDE} shows that when the supply $a_0$ vanishes, the equilibrium price only depends on the ratio $\gamma/\lambda$---the ``urgency parameter'' that determines optimal execution trajectories~\cite{almgren.chriss.01} and, more generally, optimal trading strategies with transaction costs in various contexts; cf., e.g., \cite{moreau.al.17} and  the references therein. 
When $a_{0}>0$, Theorem~\ref{th:equilibriumPDE} shows that the equilibrium price is 0-homogeneous in $(\gamma, \lambda,1/a_{0})$. This means that the asset price remains invariant if the inverse of the supply is rescaled in the same manner as transaction and holding costs: the larger trading and holding costs of bigger asset positions are offset by reduced friction coefficients.

The main result of this section is the following regular perturbation expansion for small holding costs $\gamma\to 0$. 

\begin{theorem}
\label{theorem:HoldingCostExpansion}
For fixed transaction costs $\lambda>0$ and small holding costs $\gamma \to 0$, the equilibrium price function from Theorem~\ref{th:equilibriumPDE} has the expansion
\begin{equation}\label{eq:regular}
v^\gamma(t,x)=v^0(t,x)+\gamma v^{*}(t,x)+o(\gamma) \quad \mbox{uniformly on $[0,T] \times \mathbb{R}$.}
\end{equation}
Here $v^{0}$ is the equilibrium price~\eqref{eq:rnprice} for $\gamma=0$ and
\begin{align*}
v^{*}(t,x) = -\frac{1}{N}\sum_{i=1}^N E^i_{t,x}\left[\int_t^T\phi_s^{i,0}ds\right]
\end{align*}
where $\phi_t^{i,0}$ is the optimal strategy~\eqref{eq:rnstrat} of agent~$i$ for $\gamma=0$ and the expectation is taken under agent~$i$'s belief~$Q_{i}$.
\end{theorem}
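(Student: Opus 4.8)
## Proof Proposal for Theorem~\ref{theorem:HoldingCostExpansion}

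The plan is to mimic the structure of the proof of Theorem~\ref{theorem:TransactionCostExpansion}, but the analysis is considerably easier because the perturbation is now \emph{regular}: as $\gamma \to 0$ the zeroth-order terms $\frac{G'}{G}(v_i - v)$ in the system~(\ref{eq:PDEvi}--\ref{eq:PDEterminalCond}) do not blow up but instead behave like $O(\gamma)$. Indeed, with $G(t) = \cosh(\sqrt{\gamma/\lambda}(T-t))$ one has $G'(t)/G(t) = -\sqrt{\gamma/\lambda}\tanh(\sqrt{\gamma/\lambda}(T-t))$, which is $O(\sqrt\gamma)$, while the ratios $G(u)/G(s)$ and $\lambda G'(u)/(\gamma G(s)) = -\frac{\lambda}{\sqrt{\gamma\lambda}}\tanh(\cdot)/G(s)$ converge to the ``risk-neutral'' conventions $1$ and $0$ used for the $\gamma=0$ system. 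I would first record the key scaling facts: $G \to 1$, $G'/G \to 0$, and more precisely $\frac{1}{\gamma}(1 - G(u)/G(t)) \to 0$ and $\frac{1}{\gamma}\cdot\frac{-G'(u)}{\gamma G(t)} \cdot \gamma \to \ldots$ — the cleanest route is to use the already-derived Feynman--Kac identity~\eqref{eq:viFeynman}, which for general $\gamma$ reads $v_i^\gamma(t,x) = \frac{1}{G(t)}E^i_{t,x}[f(X_T)] - \int_t^T \frac{G'(u)}{G(t)}E^i_{t,x}[v^\gamma(u,X_u)]du$, and the market-clearing identity~\eqref{eq:intByPartPDEproof} rewritten as $E^i_{s,x}[\int_s^T \frac{G(u)}{G(s)}\frac{1}{\lambda}\mathcal{L}^i v^\gamma(u,X_u)du] = \frac{1}{\lambda}[v_i^\gamma(s,x) - v^\gamma(s,x)]$.

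Next I would establish uniform-in-$\gamma$ bounds. Corollary~\ref{lemma:Boundforv} and Lemma~\ref{lemma:BoundforDerivativesofv} are already stated with constants independent of $\gamma \in (0,\infty)$ (Lemma~\ref{lemma:UniformBound} is phrased for all $\gamma$), so $v_i^\gamma$, $v^\gamma$ and all their $x$- and $t$-derivatives are bounded uniformly in $\gamma$. The crucial intermediate estimate is the analogue of Corollary~\ref{co:viConvergetov}: I claim $|\partial_x^k v_i^\gamma - \partial_x^k v^\gamma| \le M\gamma$ uniformly. This follows from~\eqref{equ:differenceofuw}-type Feynman--Kac representations applied to the PDEs~\eqref{eq:PDEvi} (and their $x$-differentiated versions): $v_i^\gamma - v^\gamma = E^i_{t,x}[\int_t^T \frac{G(u)}{G(t)}(\mathcal{L}^i v^\gamma)(u,X_u)du] \cdot$(bookkeeping), and since $\mathcal{L}^i v^\gamma$ is bounded uniformly while $\int_t^T \frac{G(u)}{G(t)}du = -\frac{\lambda}{\gamma}\frac{G'(t)}{G(t)} = \frac{\lambda}{\sqrt{\gamma\lambda}}\tanh(\sqrt{\gamma/\lambda}(T-t))/G(t) = O(\sqrt\gamma)$ — wait, that gives $O(\sqrt\gamma)$, not $O(\gamma)$. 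I would need to be more careful: the point is that $\mathcal{L}^i v^\gamma = \mathcal{L}^i v^0 + O(\sqrt\gamma)$ and $\frac{1}{N}\sum_i \mathcal{L}^i v^0$ relates to the clearing condition, so cancellations in the \emph{sum} over $i$ must be exploited, exactly as in the transaction-cost proof where the sum of time-derivatives of clearing strategies vanishes. Concretely, $v^\gamma = \frac1N\sum v_i^\gamma + \frac{\lambda G'}{NG}a_0$, so $v_i^\gamma - v^\gamma = (v_i^\gamma - \frac1N\sum_j v_j^\gamma) - \frac{\lambda G'}{NG}a_0$, and I would track both pieces, using that $\frac{\lambda G'}{NG}a_0 = O(\sqrt\gamma)$ as well; the claimed $O(\gamma)$ rate must come from the PDE structure after subtracting the $\gamma=0$ limit.

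With the uniform bounds in hand, I would subtract the $\gamma=0$ PDEs. The $\gamma=0$ system is $\mathcal{L}^i v_i^0 = 0$, $v_i^0(T,\cdot)=f$, $v^0 = \frac1N\sum v_i^0$, giving $v_i^0(t,x) = E^i_{t,x}[f(X_T)]$ as in~\eqref{eq:rnprice}. Writing $d_i := (v_i^\gamma - v_i^0)/\gamma$ and $d := (v^\gamma - v^0)/\gamma$, the difference of~\eqref{eq:PDEvi} and $\mathcal{L}^i v_i^0=0$ yields $\mathcal{L}^i d_i + \frac{1}{\gamma}\frac{G'}{G}(v_i^\gamma - v^\gamma) = 0$ with $d_i(T,\cdot)=0$. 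The forcing term is $\frac{1}{\gamma}\cdot\frac{G'(t)}{G(t)}\cdot(v_i^\gamma - v^\gamma)$; since $\frac{G'}{G} = -\sqrt{\gamma/\lambda}\tanh(\sqrt{\gamma/\lambda}(T-t)) \to -\frac{\gamma}{\lambda}(T-t)$ after dividing... actually $\frac{1}{\gamma}\frac{G'(t)}{G(t)} \to -\frac{1}{\lambda}(T-t)$ as $\gamma\to 0$ (Taylor expanding $\tanh$). Combined with $v_i^\gamma - v^\gamma \to v_i^0 - v^0 = E^i_{t,x}[f(X_T)] - \frac1N\sum_j E^j_{t,x}[f(X_T)]$, this would identify the forcing limit. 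Then Feynman--Kac applied to the limiting PDE $\mathcal{L}^i d_i^* + (\text{limit forcing}) = 0$ gives $d^* = \lim d = \frac1N\sum_i d_i^*$ as an explicit expectation. Finally I would reconcile this with the claimed formula $v^*(t,x) = -\frac1N\sum_i E^i_{t,x}[\int_t^T \phi_s^{i,0}ds]$ by using the identity~\eqref{eq:intByPartPDEproof} (at $\gamma=0$): $v_i^0 - v^0 = E^i_{s,x}[\int_s^T \frac1\lambda \mathcal{L}^i v^0(u,X_u)du] \cdot \lambda$, so $\frac1\lambda(v_i^0-v^0)(s,X_s) = E^i_s[\int_s^T \frac1\lambda\mathcal{L}^i v^0(u,X_u)du]$, which is precisely $\dot\phi^{i,0}_s$; integrating the forcing against the Feynman--Kac kernel and swapping order of integration should produce exactly $-\int_t^T \phi_s^{i,0}ds$ inside the expectation, using $\phi^{i,0}_0 = a_i$ and market clearing $\sum_i\phi_s^{i,0}=a_0$ to absorb the $a_0$ term.

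The main obstacle I anticipate is \textbf{nailing the rate $O(\gamma)$ rather than $O(\sqrt\gamma)$} for $v_i^\gamma - v^\gamma$ and for $v^\gamma - v^0$: the naive Feynman--Kac estimate only gives $O(\sqrt\gamma)$ because $\int_t^T G(u)/G(t)\,du = O(\sqrt\gamma)$, so one must peel off a first-order term. The clean way is to iterate: write $v_i^\gamma - v^\gamma$ via~\eqref{eq:intByPartPDEproof} in terms of $\mathcal{L}^i v^\gamma$, substitute $v^\gamma = v^0 + (v^\gamma - v^0)$ and $\mathcal{L}^i v^\gamma = \mathcal{L}^i v^0 + \mathcal{L}^i(v^\gamma - v^0)$, then use the clearing condition $\sum_i E^i[\int \frac{G(u)}{G(s)}\frac1\lambda \mathcal{L}^i v^\gamma du] = -\frac{G'(s)}{G(s)}a_0 = O(\sqrt\gamma)$ together with an Itô expansion of $\mathcal{L}^i v^0(u,X_u)$ around $u=s$ (as in the proof of Lemma~\ref{lemma:ConvergenceforFirstOrderCorrection}) to extract the leading term and show the remainder is genuinely $O(\gamma)$. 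Once the correct rate and the limiting forcing terms are secured, equicontinuity of $\{(v^\gamma-v^0)/\gamma\}$ follows from the uniform derivative bounds (the $\gamma$-analogues of Proposition~\ref{proposition:ZeroOrderConvergenceforSmallTransactionCost}), and Arzelà--Ascoli upgrades pointwise convergence to uniform convergence on $[0,T]\times\mathbb{R}$; here uniformity (rather than merely local uniformity) is plausible because all the relevant functions and their derivatives are globally bounded and the kernel $\frac{1}{\gamma}\frac{G'}{G}$ is globally controlled, unlike the Laplace-method situation of Section~\ref{se:transactionAsymp}.
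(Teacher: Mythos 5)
Your central argument---writing $d_i=(v_i^\gamma-v_i^0)/\gamma$, observing that $\cL^i d_i+\frac{1}{\gamma}\frac{G'}{G}(v_i^\gamma-v^\gamma)=0$ with zero terminal condition, identifying the kernel limit $\frac{1}{\gamma}\frac{G'}{G}\to-\frac{T-t}{\lambda}$, passing to the limit in the Feynman--Kac representation, and then reconciling with the stated formula via $\dot\phi^{i,0}_s=\frac{1}{\lambda}(v_i^0-v^0)(s,X_s)$ (the $\gamma=0$ case of~\eqref{eq:intByPartPDEproof}), integration by parts and market clearing---is exactly the paper's proof, and it works. The only bookkeeping slip there is that $\lim_\gamma (v^\gamma-v^0)/\gamma$ is $\frac1N\sum_i d_i^*-\frac{(T-t)a_0}{N}$, the extra term coming from $\frac{\lambda G'}{NG}a_0$ in~\eqref{eq:PDEv}; it is precisely this term that cancels against $\frac{(T-t)}{N}\sum_i\phi^{i,0}_t=\frac{(T-t)a_0}{N}$ after the integration by parts, as you gesture at.

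The genuine problem is your ``crucial intermediate estimate'' and the ``main obstacle'' built on it. The claim $|\partial_x^k v_i^\gamma-\partial_x^k v^\gamma|\le M\gamma$ is false: as $\gamma\to0$ with $\lambda$ fixed, $v_i^\gamma\to v_i^0=E^i_{t,x}[f(X_T)]$ and $v^\gamma\to v^0=\frac1N\sum_j E^j_{t,x}[f(X_T)]$, so $v_i^\gamma-v^\gamma$ tends to $v_i^0-v^0\neq0$ for heterogeneous beliefs---indeed your own next step uses exactly this nonzero limit, contradicting the bound. The supporting $G$-asymptotics are also mis-calibrated for this regime: $G'/G=-\sqrt{\gamma/\lambda}\tanh(\sqrt{\gamma/\lambda}(T-t))$ is $O(\gamma)$ (not $O(\sqrt\gamma)$), and $\int_t^T G(u)/G(t)\,du=\sqrt{\lambda/\gamma}\,\tanh(\sqrt{\gamma/\lambda}(T-t))/1\cdot G(t)^{-1}\cdot G(t)\to T-t$, i.e.\ it is $O(1)$, not $O(\sqrt\gamma)$; the $O(\sqrt\lambda)$ intuition belongs to the transaction-cost regime of Corollary~\ref{co:viConvergetov}. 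Consequently the anticipated obstacle does not exist and the extra machinery (clearing-condition cancellations to extract rates, It\^o expansions as in Lemma~\ref{lemma:ConvergenceforFirstOrderCorrection}, equicontinuity and Arzel\`a--Ascoli) is not needed: this is a regular perturbation. All that is required is the uniform bound on $v_i^\gamma$ from Lemma~\ref{lemma:UniformBound} (derivative bounds are never used), the qualitative uniform convergence $v_i^\gamma\to v_i^0$---which follows from the same Feynman--Kac identity because $G'/G\to0$ uniformly and the bracket is uniformly bounded, a step you assert but should record---and dominated convergence together with the uniform (Dini) convergence $G'/(\gamma G)\to-(T-s)/\lambda$, which directly yields the uniform convergence of $(v_i^\gamma-v_i^0)/\gamma$ claimed in~\eqref{eq:regular}. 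If you dropped the false estimate and its attempted repair, your third paragraph, with the $a_0$ term made explicit as in~\eqref{eq:penult}, is a complete proof.
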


The reference point for the expansion~\eqref{eq:regular} is the risk-neutral price $v^{0}$ of~\eqref{eq:rnprice}. In this limiting case, agents only consider future expected returns. Other things equal, agents reduce the magnitude of their positions when holding costs are introduced.
The above expression for $v^{*}$ reflects each agent's expectation $E^i_{t,x}[\int_t^T\phi_s^{i,0}ds]$ of their average future position. Adding holding costs reduces the demand by agents who expect to be long on average, and the converse holds for shorts. The resulting sign of the price correction will thus depend on the aggregate expectations in the market. Indeed, 
the formula for~$v^{*}$ shows that at the first order, the arithmetic average over all agents' expected average positions is the negative of the correction.

\begin{proof}[Proof of Theorem~\ref{theorem:HoldingCostExpansion}]
\emph{Step~1}. Similarly as for Theorem~\ref{theorem:TransactionCostExpansion}, the first step towards proving this expansion is to establish that the functions $v^\gamma_i$ from Theorem~\ref{th:equilibriumPDE} are uniformly bounded in~$\gamma$. Indeed, note that the function $\frac{\lambda G^\prime(t)^2}{NG(t)^2}a_0$ is bounded locally uniformly in~$\gamma$. Hence, Lemma~\ref{lemma:UniformBound} applied with the PDEs (\ref{eq:PDEvi}--\ref{eq:PDEv}) from Theorem~\ref{th:equilibriumPDE} yields that given $0<\bar{\gamma}<\infty$, there exists $M>0$ such that
\begin{equation}\label{eq:UniformlyBDDforPricewithSmallHoldingCost}
  |v^\gamma_i(t,x)|\leq M  \quad \mbox{for all $\gamma \in [0,\bar{\gamma}]$.}
\end{equation} 

\emph{Step~2}. Next, we show that as $\gamma \to 0$,
\begin{align}\label{eq:ZeroOrderConvergenceforSmallHoldingCost}
|v_{i}^\gamma(t,x)-v_{i}^0(t,x)| &\to 0 \quad \mbox{uniformly on $[0,T]\times\mathbb{R}$.}
\end{align}
Indeed, (\ref{eq:PDEvi}--\ref{eq:PDEv}) show that 
\begin{align*}
\partial_t(v^\gamma_i-v^0_i)&+\frac{1}{2}\sigma_i^2\partial_{xx}(v^\gamma_i-v^0_i)+b_i\partial_x(v^\gamma_i-v^0_i)\\
&+\frac{G^\prime(t)}{G(t)}\left(v_i^\gamma-\frac{1}{N}\sum_{j=1}^Nv^\gamma_j-\frac{\lambda G^\prime(t)}{NG(t)}a_0\right)=0, \quad (v^\gamma_i-v^0_i)(T,\cdot)=0.
\end{align*}
Thus, the Feynman--Kac formula yields
\begin{align}
\label{equ:viDifferencewithGamma}
&(v^\gamma_i-v^0_i)(t,x)\notag\\
&\quad =E^i_{t,x}\left[\int_t^T\frac{G^\prime(s)}{G(s)}\Big(v_i^
\gamma(s,X_s)-\frac{1}{N}\sum_{j=1}^Nv^\gamma_j(s,X_s)-\frac{\lambda G^\prime(s)}{NG(s)}a_0\Big)ds\right]
\end{align}
where the expectation is taken under agent $i$'s subjective probability measure~$Q_{i}$. Note that $G^\prime(t) \to 0$ and $G(t) \to 1$ as $\gamma \to 0$, uniformly on $[0,T]$. In view of~\eqref{eq:UniformlyBDDforPricewithSmallHoldingCost}, we conclude~\eqref{eq:ZeroOrderConvergenceforSmallHoldingCost}.

\emph{Step~3}. We can now prove the expansion from Theorem~\ref{theorem:HoldingCostExpansion}.
By~\eqref{equ:viDifferencewithGamma},
\begin{align*}
&\frac{(v^\gamma_i-v^0_i)(t,x)}{\gamma}\\
&\quad =E^i_{t,x}\left[\int_t^T\frac{G^\prime(s)}{\gamma G(s)}\Big(v_i^
\gamma(s,X_s)-\frac{1}{N}\sum_{j=1}^Nv^\gamma_j(s,X_s)-\frac{\lambda G^\prime(s)}{NG(s)}a_0\Big)ds\right].
\end{align*}
Using the definition of $G$ and Dini's theorem,  
\begin{equation}\label{eq:Glimit}
\lim_{\gamma \to 0} \frac{G^\prime(t)}{G(t)}=0 \quad \mbox{and} \quad \lim_{\gamma \to 0}\frac{G^\prime(t)}{\gamma G(t)}=-\frac{T-t}{\lambda}, \quad  \mbox{uniformly on $[0,T]$.}
\end{equation}
Together with~\eqref{eq:UniformlyBDDforPricewithSmallHoldingCost}, dominated convergence, \eqref{eq:ZeroOrderConvergenceforSmallHoldingCost} and \eqref{eq:rnprice}, this yields
\begin{align*}
\lim_{\gamma \to 0}\frac{v_i^\gamma(t,x)-v_i^0(t,x)}{\gamma} &=E^i_{t,x}\left[\int_t^T\frac{T-s}{\lambda}\Big(v^0(s,X_s)-v_i^0(s,X_s)\Big)ds\right]
\end{align*}
uniformly on $[0,T]\times \mathbb{R}$. In view of the definition of $v^\gamma$ in~\eqref{eq:PDEv}, and \eqref{eq:Glimit}, it follows that
\begin{align}
&\lim_{\gamma \to 0}\frac{v^\gamma(t,x)-v^0(t,x)}{\gamma} \notag\\
&\quad =\frac{1}{N}\sum_{i=1}^NE^i_{t,x}\left[\int_t^T\frac{T-s}{\lambda}\Big(v^0(s,X_s)-v_i^0(s,X_s)\Big)ds\right]-\frac{(T-t)a_0}{N}. \label{eq:penult}
\end{align}
By~\eqref{eq:rnstrat} and the first identity of~\eqref{eq:intByPartPDEproof} in the special case $\gamma=0$, we have
\begin{align*}
\phi^{i,0}_t - a_{i} &=  \int_0^tE^i_{s}\left[\frac{1}{\lambda}\Big(f(X_T)-v^0(s,X_s)\Big)\right]ds\\
&=  \int_0^t\frac{1}{\lambda}\Big(v_i^0(s,X_s)-v^0(s,X_s)\Big)ds.
\end{align*}
Using this identity to integrate~\eqref{eq:penult} by parts and taking into account the market-clearing condition $\sum_{i=1}^N \phi^{i,0} =a_0$, the theorem follows. 
\end{proof}

\section{Example: Mean-Reversion Trading}\label{se:example}

To gain further intuition for the equilibrium of Theorem~\ref{th:equilibriumPDE}, we consider an example that can be solved explicitly up to a system of linear ODEs. We will also use this example to test the numerical accuracy of the expansions for small transaction and holding costs relative to the exact solution.  

Suppose that $f(x)=x$, so that at time $T$, the state $X$ represents the asset's payoff. Agents believe that $X$ has mean-reverting dynamics
\begin{equation}\label{eq:OUdyn}
dX_t= \kappa_i(\bar{X}- X_t) dt +\sigma dW^i_t.
\end{equation}
That is, agents agree on the volatility $\sigma>0$ and the mean-reversion level $\bar{X}>0$, but disagree about the mean-reversion speed $\kappa_i>0$. This can be interpreted as a simple model for a forward contract on a mean-reverting underlying such as an FX rate.
As is natural in that context, and to simplify the exposition, we henceforth assume that the net supply of the contract is $a_0=0$.

\subsection{Equilibrium with Costs}\label{ss:costex}

We first consider the exact equilibrium price $v$ with transaction costs $\lambda>0$ and holding costs $\gamma>0$ from Theorem~\ref{th:equilibriumPDE}. For the linear state dynamics~\eqref{eq:OUdyn}, the parabolic system (\ref{eq:PDEvi}--\ref{eq:PDEterminalCond}) can be reduced to a system of linear ODEs by the ansatz 
\begin{equation*}
v_i^\lambda(t,x)=A_i(t)+B_i(t)x, \quad i=1,\ldots,N.
\end{equation*}
Indeed, writing $\mathbbm{1}_N$ and $I_N$ for the $N\times N$-matrices of ones and the identity matrix, respectively, the deterministic functions $B=(B_1,\ldots,B_N)^\top$ and $A=(A_1,\ldots,A_N)^\top$ satisfy
\begin{align*}
B'(t) &=\left[ \mathrm{diag}(\kappa_1,\ldots,\kappa_n)+\frac{G'(t)}{G(t)}\left(\frac1N \mathbbm{1}_N-I_N\right)\right]B(t), \notag\\
B(T)&=1
\end{align*}
and
\begin{align*}
A'(t) &= \frac{G'(t)}{G(t)}\left(\frac1N \mathbbm{1}_N-I_N\right)A(t)-\bar{X} \mathrm{diag}(\kappa_1,\ldots,\kappa_N)B(t)
,\notag\\
A(T)&=0. 
\end{align*}
These ODEs have unique, smooth solutions. 
Moreover, the equilibrium price then satisfies
\begin{equation}\label{eq:eqfricex}
  v(t,x)= \frac1N \sum_{i=1}^N \left(A_i(t) +B_i(t)x\right) = \bar{X}+(x-\bar{X})\frac1N \sum_{i=1}^N B_i(t),
\end{equation}
where we have used the ODEs for the $A_i$ and $B_i$ for the second equality. To be precise, the unbounded terminal conditions and state dynamics~\eqref{eq:OUdyn} do not satisfy the boundedness assumptions of Theorem~\ref{th:equilibriumPDE}. However, with the unique solutions $A$ and $B$ of the above ODEs at hand, the arguments in the proof of Theorem~\ref{th:equilibriumPDE} show that~\eqref{eq:eqfricex} identifies the unique equilibrium price in the class from smooth functions with linear growth, say.

\subsection{Transaction-Cost  Asymptotics}\label{ss:extcasymp}

We first study the equilibrium $v^{0}$ with vanishing transactions costs $\lambda=0$ and fixed holding costs $\gamma>0$. As the state variable has the dynamics
\begin{equation}\label{eq:barX}
dX_t=\bar\kappa (\bar{X}-X_t) dt +\sigma dW_t \quad \mbox{with} \quad\bar{\kappa}=\frac1N \sum_{i=1}^N \kappa_i
\end{equation}
under the aggregate measure $\bar{Q}$, Proposition~\ref{pr:NoTransactionCostPrice} with $a_{0}=0$ yields that
 \begin{align}\label{eq:v0ex}
v^0(t,x) &= \bar{E}_{t,x}\left[X_T\right] = \bar{X}+(x-\bar{X})e^{-\bar\kappa (T-t)}.
\end{align}
As a result, agent $i$ believes that the frictionless equilibrium price has dynamics
\begin{align}
dv^0(t,X_t) &= (\kappa_i-\bar{\kappa})e^{-\bar{\kappa}(T-t)}(\bar{X}-X_t)dt+ e^{-\bar\kappa(T-t)}\sigma dW^i_t \notag\\
&= (\kappa_i-\bar{\kappa})\big(\bar{X}-v^0(t,X_t)\big)dt+ e^{-\bar\kappa(T-t)}\sigma dW^i_t. \label{eq:fldyn}
\end{align}
This means that agents who believe in faster than average mean-reversion (i.e., $\kappa_{i}>\bar\kappa$) observe a mean-reverting process. By contrast, agents who believe in slower than average mean reversion conclude that the process exhibits ``momentum'' in that prices above the mean-reversion level are followed by further positive drifts. Whence, in equilibrium, the market is endogenously populated by both ``mean-reversion traders'' and ``trend-followers'' even though all agents believe that the underlying has a mean-reverting fundamental value. 

Next, we study the leading-order correction $v^\lambda(t,x)-v^0(t,x)$ for $\lambda\to0$. Again, Theorem~\ref{theorem:TransactionCostExpansion} does not apply directly due to the unbounded coefficients, but it is straightforward to carry out the arguments in the proof for the example at hand. 
%
%
%
%
%
%
%
%
Thus, the leading-order correction is $\sqrt{\lambda}v^{*}(t,x)$ with
\begin{align}
v^{*}(t,x)&= \frac{1}{\sqrt{\gamma}N} \sum_{i=1}^N \bar{E}_{t,x}\left[\int_t^T e^{-\bar\kappa (T-s)}(\bar{\kappa}-\kappa_i)^{2}(X_s-\bar{X}) ds\right] \notag\\
&=\frac{1}{\sqrt{\gamma}} \Bigg(\frac{1}{N}\sum_{i=1}^N (\bar{\kappa}-\kappa_{i})^2\Bigg) \left[\int_t^T \bar{E}_{t,x}[X_s-\bar{X}] e^{-\bar\kappa (T-s)} ds\right] \notag\\
&=\frac{1}{\sqrt{\gamma}} \Bigg(\frac{1}{N}\sum_{i=1}^N (\bar{\kappa}-\kappa_{i})^2\Bigg)\left[\int_t^T e^{-\bar\kappa (T-t)} (x-\bar{X}) ds\right] \notag\\
&=\sqrt{\frac{1}{\gamma}} \Bigg(\frac{1}{N}\sum_{i=1}^N (\bar{\kappa}-\kappa_{i})^2\Bigg) (T-t) e^{-\bar\kappa (T-t)} (x-\bar{X}). \label{eq:exptcasymp}
\end{align}

Note that $\partial_{x}v^{*}\geq0$, so that the equilibrium volatility is always increased when small transaction costs are added. This is in line with the asymmetric information model of~\cite{danilova.juillard.19}, the risk-sharing model of~\cite{herdegen.al.19}, numerical results of~\cite{adam.al.15,buss.al.16}, and empirical studies such as~\cite{hau.06,jones.seguin.97,umlauf.93}. 

In our model, the reason for the increased volatility is that the sign of the correction term $v^{*}$ is determined by $x-\bar{X}$, so that transaction costs amplify the fluctuations of the frictionless equilibrium price~\eqref{eq:v0ex}. Let us now discuss why illiquidity affects price levels in this manner. In view of the above formula for $v^{*}$, adding small transaction costs increases equilibrium prices when $X_t>\bar{X}$ and reduces prices for $X_t<\bar{X}$. If $X_t>\bar{X}$, agents who believe in larger than average mean-reversion speeds predict the frictionless equilibrium price~\eqref{eq:fldyn} to mean-revert downwards towards its long-run mean. Conversely, agents believing in a lower than average mean-reversion speed expect the positive trend to continue and prices to rise even further. Accordingly, the first group of agents wants to sell the asset and the second group wants to purchase it. With small transaction costs added, these trading motives persist, yet changes in portfolios can only be implemented gradually. Accordingly, agents do not only take into account the difference between the current value of the state variable and its long-run mean, but also their expected differences in the future. Since agents believing in faster mean-reversion expect differences to disappear faster, they have a weaker motive to act on the trading opportunities they observe. For $X_t>\bar{X}$, this means that sellers have a weaker motive to trade than buyers, so that prices need to rise in order to clear the market. For $X_t <\bar{X}$, the situation is reversed and small transaction costs decrease prices relative to their frictionless counterparts. 

In summary, adding small transaction costs increases prices above the mean-reversion level and decreases price below it, thereby generating larger price fluctuations and a larger equilibrium volatility. As optimists and pessimists are similarly affected by the costs, the effect of illiquidity on price levels in our model is ambiguous. Depending on the situation, an increase of $\lambda$ can lead to an ``illiquidity discount'' as observed e.g.\ in \cite{amihud.mendelson.86a}, or it may increase the price as in \cite{DuffieGarleanuPedersen.01} where illiquidity can be an obstruction to shorting. One can note that in our example, the correction term $v^{*}$ mean-reverts around zero under each agent's probability measure. In that sense, the average price level remains unchanged.

\subsection{Holding-Cost Asymptotics}\label{ss:exhcasymp}


We now turn to the small-holding-cost asymptotics from Section~\ref{se:holdingAsymps}. As a first step, we compute the equilibrium price $v^{0}$ with vanishing holding costs~$\gamma=0$ and fixed transactions costs $\gamma>0$. From \eqref{eq:rnprice}, we have
$$v^0(t,x)=\frac{1}{N} \sum_{i=1}^N v^0_i(t,x),$$
where
\begin{equation}\label{eq:v0iex}
v^0_i(t,x)=E^i_{t,x}[X_T]=\bar{X}+(x-\bar{X})e^{-\kappa_i (T-t)}.
\end{equation}
By It\^o's formula, agent $i$ believes that this risk-neutral equilibrium price has dynamics
\begin{align*}
&dv^0(t,X_t)\\
&= \frac{1}{N}\sum_{j=1}^N(\kappa_j-\kappa_i)e^{-\kappa_j(T-t)}(X_t-\bar{X})dt+\frac{1}{N}\sum_{j=1}^Ne^{-\kappa_j(T-t)}\sigma dW_t^i\\
&=\left(\kappa_i-\frac{\sum_{j=1}^N\kappa_je^{-\kappa_j(T-t)}}{\sum_{j=1}^Ne^{-\kappa_j(T-t)}}\right)\left(\bar{X}-v^0(t,X_t)\right)dt+\frac{1}{N}\sum_{j=1}^Ne^{-\kappa_j(T-t)}\sigma dW_t^i.
\end{align*}
The first factor is the difference between $\kappa_{i}$ and a (time-dependent) weighted average of $\kappa_{1},\dots,\kappa_{N}$. Thus, the interpretation is similar as for the equilibrium~\eqref{eq:fldyn} with $\lambda=0$: agents who believe in fast mean reversion observe a mean-reverting asset price whereas agents believing in slow mean reversion perceive momentum. One can also note that the equilibrium volatility without holding costs is always larger than or equal to its counterpart without transaction costs. This follows by applying Jensen's inequality to the gradients of~$v^{0}$ and~\eqref{eq:v0ex}.

We now turn to the leading-order correction term for $\gamma\to0$. 
Again, the boundedness assumptions in Theorem~\ref{theorem:HoldingCostExpansion} are not satisfied in this example, but the arguments in the proof still apply.
Accordingly, using the representation~\eqref{eq:penult}, we have
\begin{align}
&v^{*}(t,x) \label{eq:exexpansion}\\
&=\frac{1}{N}\sum_{i=1}^NE^i_{t,x}\left[\int_t^T\frac{T-s}{\lambda}\Big(v^0(s,X_s)-v_i^0(s,X_s)\Big)ds\right]\notag \\
&=\frac{1}{N}\sum_{i=1}^N\int_t^T\frac{T-s}{\lambda}\Big(\frac{1}{N}\sum_{j=1}^Ne^{-\kappa_j (T-s)}-e^{-\kappa_i (T-s)}\Big)e^{-\kappa_i (s-t)}(x-\bar{X})ds \notag\\
&=\frac{(x-\bar{X})(T-t)}{\lambda N^2}\left(\sum_{i\neq j}\frac{1}{\kappa_i-\kappa_j}e^{-\kappa_j (T-t)}-\frac{(T-t)(N-1)}{2}\sum_{i=1}^N e^{-\kappa_i (T-t)}\right) \notag
\end{align}
where the last equality follows from an elementary but lengthy integration.

The Chebychev sum inequality applied to the second representation shows that the coefficient multiplying $x-\bar{X}$ is always negative. Whence, adding small holding costs increases the risk-neutral equilibrium price when the state process $X_t$ is below its mean-reversion level $\bar{X}$ and decreases it when $X_t>\bar{X}$. Since larger holding costs play the same role as lower transaction costs in our model for $a_0=0$, the intuition for this is the converse of the argument for adding small transaction costs in Section~\ref{ss:extcasymp}. 

In  particular, in view of~\eqref{eq:v0iex}, adding small holding costs dampens the fluctuations of the risk-neutral equilibrium price and accordingly reduces the equilibrium volatility.  This negative effect on the equilibrium volatility and the positive effect of small transaction costs are consistent with the observation made above that the equilibrium volatility without transaction costs always lies below its counterpart with no holding costs. In fact, the exact equilibrium volatility $\frac1N \sum_{i=1}^N B_i(t)$ from Section~\ref{ss:costex} smoothly interpolates between these two extreme cases as $\gamma/\lambda$ ranges between $\infty$ and~$0$. 

\subsection{A Calibrated Example}

To assess the accuracy of the small-cost asymptotics from Sections~\ref{ss:extcasymp} and \ref{ss:exhcasymp}, we now compare the explicit asymptotic formulas~\eqref{eq:exptcasymp} and \eqref{eq:exexpansion} to the numerical solutions of the ODEs from Section~\ref{ss:costex} describing the exact equilibrium price. Throughout, we consider a time horizon of $T=3$ years.

To obtain reasonable values for the other model parameters, we calibrate the state dynamics~\eqref{eq:barX} to USD/EUR exchange rate data from 2009--2019 available from the website of the St.~Louis Fed at \url{https://fred.stlouisfed.org/series/DEXUSEU}. The model parameters can then be estimated by matching the first two stationary moments to their empirical counterparts and fitting the (linear) log-autocorrelation function to the empirical one using linear regression. This leads to
$$
\sigma=0.128, \quad \bar{X}=1.25, \quad \bar{\kappa}=0.575.
$$
With a zero net supply, this suffices to pin down the equilibrium price without transaction costs~\eqref{eq:v0ex}, since the latter does not depend on the agents' holding costs in this case. For the equilibrium prices with transaction costs~\eqref{eq:eqfricex}, we additionally need to specify each agent's individual belief as well as the transaction cost $\lambda$ and the holding cost $\gamma$. Inspired by similar parameter values used for commodities and equities in~\cite{garleanu.pedersen.13,cartea.jaimungal.16}, respectively, we use 
$$
\lambda=10^{-7} \quad \mbox{and} \quad  \gamma=10^{-8}.
$$ 
The free parameter $\kappa_1 = 2\bar{\kappa}-\kappa_2$ can in turn be chosen arbitrarily to capture the agents' disagreement about the mean-reversion speed of the exchange rate. For
$$\kappa_1= 3\kappa_2= 0.8625,$$
and $x=1$, equilibrium asset prices and volatilities are plotted in Figure~\ref{figA}. These numerical examples clearly display the qualitative properties derived from the asymptotic formulas in the previous sections. Indeed, the equilibrium values with both holding and transaction costs always lie between the limiting cases where only one of these costs is present. The corresponding volatility is increased by the trading cost, in line with the discussion in Section~\ref{ss:extcasymp}, and for $X_t=x<\bar{X}$ the equilibrium price with transaction costs lies below its no-transaction cost counterpart. 

\begin{figure}[t]
 \centering
   \includegraphics[width=.49\linewidth]{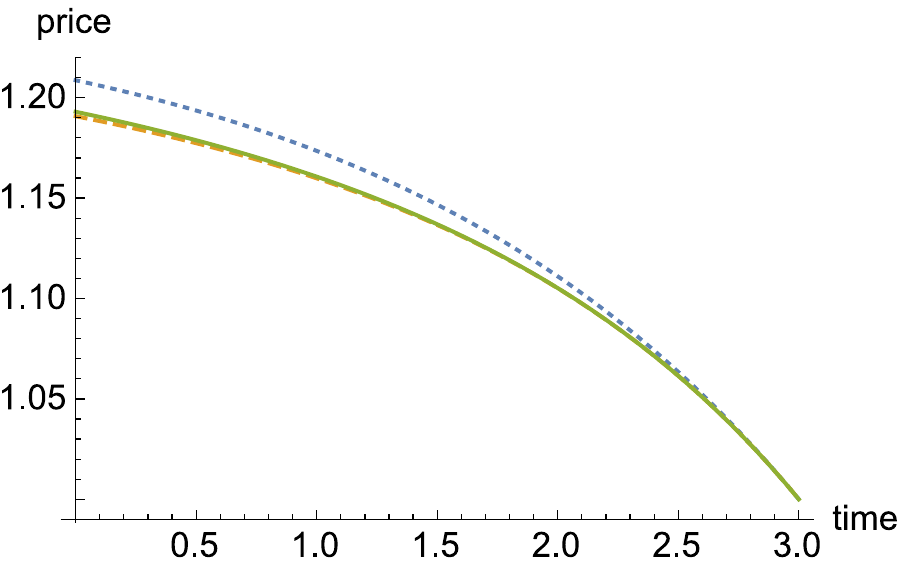}
   \includegraphics[width=.49\linewidth]{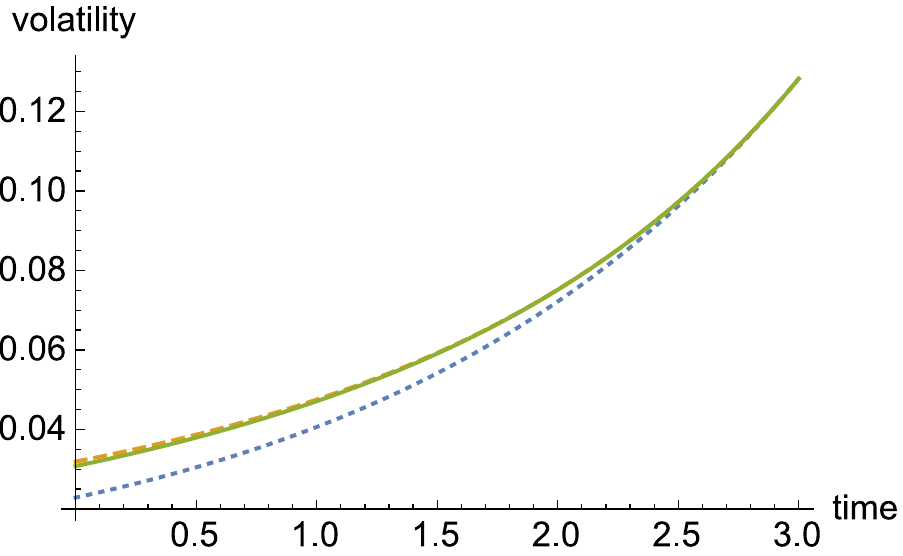}
\caption{Equilibrium prices (left) and volatilities (right) with both transaction and holding cost (solid), no transaction costs (dotted), and no holding costs (dashed).}
\label{figA}
\vspace{-2.5em}
\end{figure}

 \begin{figure}[b]
  \centering
  \includegraphics[width=.49\linewidth]{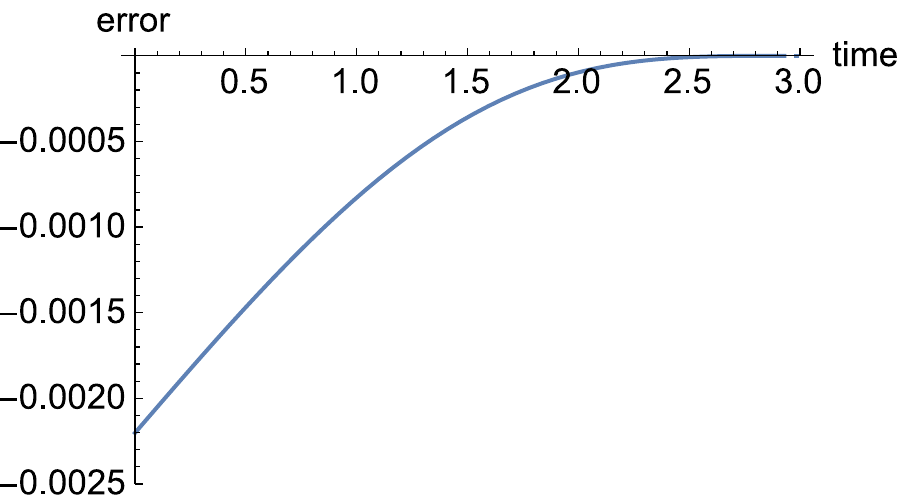}\;
  \includegraphics[width=.45\linewidth]{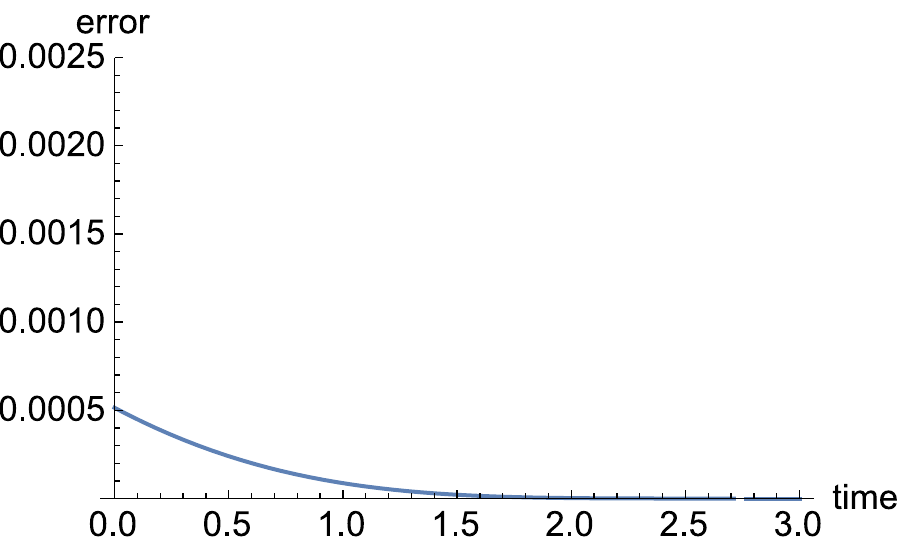}
\caption{Approximation errors $v^0-v^\gamma$ (left) and $v^0+\gamma v^*-v^\gamma$ (right).}
\label{figB}
\end{figure}

Figure~\ref{figA} also clearly shows that the equilibrium price with holding and transaction costs is much closer to the risk-neutral price than to the frictionless one. This is not surprising, since $\gamma/\lambda=0.1$ in this example. Accordingly, even though the qualitative predictions of both small-cost expansions are correct,  only the small-holding-cost expansion provides useful quantitative approximations here. As shown in Figure~\ref{figB}, using the first-order correction term $v^0+\gamma v^*-v^\gamma$ reduces the already small approximation error $v^0-v^\gamma$ by another order of magnitude.

\bibliography{stochfin}
\bibliographystyle{plain}

\end{document}